\begin{document}
\title{Krylov Subspace Approximation for Local Community Detection in Large Networks}

\author{Kun He}
\affiliation{%
  \institution{Huazhong University of Science and Technology}
  \city{Wuhan}
  \country{China}
   \postcode{430074}}
\email{brooklet60@hust.edu.cn, kh555@cornell.edu}
\author{Pan Shi}
\affiliation{%
  \institution{Huazhong University of Science and Technology}
  \city{Wuhan}
  \country{China}
   \postcode{430074}}
\email{panshi@hust.edu.cn}
\author{David Bindel}
\affiliation{%
 \institution{Cornell University}
 \city{Ithaca}
 \state{NY}
 \country{USA}
 \postcode{14853}}
\email{bindel@cs.cornell.edu}
\author{John E. Hopcroft}
\affiliation{%
  \institution{Cornell University}
  \city{Ithaca}
 \state{NY}
 \country{USA}
 \postcode{14853}}
\email{jeh@cs.cornell.edu}

\begin{abstract}
Community detection is an important information mining task to uncover modular structures in large networks. For increasingly common large network data sets, global community detection is prohibitively expensive, and attention has shifted to methods that mine local communities,~{\em i.e.}~identifying all latent members of a particular community from a few labeled seed members. To address such semi-supervised mining task, we systematically develop a local spectral subspace-based community detection method, called LOSP. We define a family of local spectral subspaces based on Krylov subspaces, and seek a sparse indicator for the target community via an $\ell_1$ norm minimization over the Krylov subspace. Variants of LOSP depend on type of random walks with different diffusion speeds, type of random walks, dimension of the local spectral subspace and step of diffusions. The effectiveness of the proposed LOSP approach is theoretically analyzed based on Rayleigh quotients, and it is experimentally verified on a wide variety of real-world networks across social, production and biological domains, as well as on an extensive set of synthetic LFR benchmark datasets.
\end{abstract}

%
\begin{CCSXML}
	<ccs2012>
	<concept>
	<concept_id>10010147.10010257.10010321.10010335</concept_id>
	<concept_desc>Computing methodologies~Spectral methods</concept_desc>
	<concept_significance>500</concept_significance>
	</concept>
	<concept>
	<concept_id>10002951.10003317.10003347.10003356</concept_id>
	<concept_desc>Information systems~Clustering and classification</concept_desc>
	<concept_significance>500</concept_significance>
	</concept>		
	<concept>
	<concept_id>10002951.10003260.10003282.10003292</concept_id>
	<concept_desc>Information systems~Social networks</concept_desc>
	<concept_significance>500</concept_significance>
	</concept>
	<concept>
	<concept_id>10002951.10003260.10003261</concept_id>
	<concept_desc>Information systems~Web searching and information discovery</concept_desc>
	<concept_significance>300</concept_significance>
	</concept>
	</ccs2012>
\end{CCSXML}

\ccsdesc[500]{Computing methodologies~Spectral methods}
\ccsdesc[500]{Information systems~Clustering and classification}
\ccsdesc[500]{Information systems~Social networks}
\ccsdesc[300]{Information systems~Web searching and information discovery}

%
%

\keywords{Local community detection, spectral clustering, Krylov subspace, Rayleigh quotient, sparse linear coding}

\maketitle

\renewcommand{\shortauthors}{K. He \emph{et al}.}

\section{Introduction}
\label{sec:Introduction}
Community detection has arisen as one of the significant topics in network analysis and graph mining.
Many problems in information science, social science, biology and physics can be formulated as problems of community detection.
With the rapid growth of the network scale, however, exploring
the {\em global} community structure~\cite{Palla2005CFinder,Ahn2010LinkCommunities}
becomes prohibitively expensive in such networks with millions or billions of nodes.
While most of the time people are just interested in the local structure of the graph neighborhood.
Hence, attention has shifted to methods that mine local community
structure without processing the whole large network~\cite{JaewonICDM2012,IsabelKDD14,FreeRider2015,jeub2015think,he2016local,shi2017local}.

In many situations, instead of finding all communities in a large network in an unsupervised manner,
we just want to quickly find communities from a small set of members labeled by a domain expert.
For example, in political participation networks, one might discover
the membership of a political group from a few representative politicians~\cite{polarization2013};
sales websites might use clusters in co-purchase networks to generate
product recommendations for a customer based on previous purchases;
and starting from a few well-studied genes, biologists may seek functionally similar genes via genetic interaction networks.

Communities in real-world networks are often small, comprising dozens
or hundreds of members~\cite{Leskovec2008,han2015probabilistic}.
Intuitively, the latent members in these small communities should be very close to any seed members chosen from the target communities.
The {\em seed set expansion} approach to community
detection~\cite{Palla2005CFinder,lancichinetti2011OSLOM,Whang2013SSE}
starts from ``seed'' nodes, or labeled members of a target
community~\cite{JaewonICDM2012,kloster2014heat,IsabelKDD14},
and incrementally grows the set by locally optimizing a community
scoring function.

A common theme in seed set expansion methods is to diffuse probability
from the seeds.
PageRank~\cite{pagerank2006,IsabelKDD14}, heat kernel~\cite{kloster2014heat,Chung2013heat} and
local spectral approximation~\cite{He2015LOSP,li2015LEMON,LiHK_TKDD18} are three main techniques for the probability diffusion.
Among these, the local spectral method is a newly proposed technique
that exhibits high performance for the local community detection task.
Motivated by standard spectral clustering methods that find disjoint global
communities from the leading eigenvectors of the graph Laplacian,
local spectral algorithms can handle local communities by seeking a sparse indicator vector
that contains the seeds and lies in a local spectral subspace rather than in the global eigenspace.
Moreover, starting from different seed sets, local spectral algorithms can uncover overlapping communities.

In this paper, we propose a local spectral (LOSP) algorithm to identify the members of small communities~\cite{He2015LOSP}, and systematically build the LOSP family based on various Krylov subspace approximation. Theoretical analysis and empirical analysis are provided to show the soundness and effectiveness of the proposed approach.
LOSP differs from standard spectral clustering in two ways:

\begin{itemize}
\item
    \textbf{Handling overlapping communities}.
    Standard spectral methods partition a graph into disjoint communities
    by $k$-means clustering in a coordinate system defined by
    eigenvectors or by recursive spectral bisection.  We instead use
    $\ell_1$-norm optimization to search an (approximate) invariant
    subspace for an indicator vector for a sparse set containing the
    seeds.  Overlapping communities correspond to different seed sets.

\item
   \textbf{Defining a local spectral subspace.}
   To determine the local structure around the seeds of interest,
   we calculate a local approximate invariant subspace
   via a Krylov subspace associated with short random walks.

\end{itemize}

Starting from a few random seeds, we sample locally in the graph to get a comparatively small subgraph containing most of the
latent members, so that the follow-up membership identification can
focus on a local subgraph instead of the whole network.
The LOSP methods are then used to extract the local community
from the sampled subgraph via a Krylov subspace formed
by short-random-walk diffusion vectors.

Our main contributions include:
\begin{itemize}
\item
\textbf{Sampling to reduce the complexity}.
We sample locally in the graph as a pre-processing to get a much smaller subgraph containing most of the
latent members around the seeds. The follow-up calculation 
 has a low complexity while maintaining an accurate covering on the target community.
In this way, the proposed method is applicable to large-scale networks.

\item
\textbf{Building a rich LOSP family}.
We systematically develop a family of
local spectral methods based on random walk diffusion.
We thoroughly investigate a rich set of diffusion methods; and
we find that light lazy random walk, lazy random walk and personalized PageRank are robust for
different parameters, and outperform the standard random walk diffusions. In general, light lazy random walk performs slightly better.
We also study the regular random walk diffusions spreading out from the seeds and the inverse random walk diffusions ending around the seeds.

\item
\textbf{Extensive demonstration of LOSP}.
We provide a diverse set of computational experiments on 28 synthetic benchmark graphs and eight real-world social and biological networks
to show that the proposed methods are much more accurate than previous local spectral approach LEMON \cite{LiHK_TKDD18}, projected gradient descent algorithm PGDc-d \cite{LaarhovenJMLR16}, the well-known personalized PageRank diffusion
algorithm \texttt{pprpush}~\cite{pagerank2006} and heat kernel algorithm~\texttt{hk-relax}~\cite{kloster2014heat}.

\end{itemize}

This work is a significant extension and improvement based on our initial LOSP method~\cite{He2015LOSP}.
By removing some costly seed-strengthen preprocessing and iterative post-processing of the initial LOSP~\cite{He2015LOSP},
the current LOSP costs only one third of the running time of the initial LOSP after the sampling. 
We also improve the key component of LOSP based on the Krylov subspace, and extend it to an integrated LOSP family.

\section{Related Work}
\label{sec:relatedWork}
Communities are densely intra-linked components with sparser inter-connections,
typically defined by means of metrics like modularity~\cite{Newman06} or conductance~\cite{Ncut2000}.
There are many different research directions as well as approaches for the community detection~\cite{Fortunato2012survey,Papadopoulos2012survey,xie2013overlapping,Bae_TKDD17,HeHICODE18}.
Early works focused on global structure mining~\cite{Palla2005CFinder,Ahn2010LinkCommunities,coscia2012demon}
while an increasing body of recent work focuses on local structure mining~\cite{Mahoney2012Localspectral,IsabelKDD14,LaarhovenJMLR16}.
We will focus on seed set expansion methods that uncover a local
community from a few seed members; these were initially designed for
global community detection, but have since been extensively used in
local community mining. We could regard global community detection as an unsupervised method for finding global community structure, and regard local community detection as a semi-supervised approach for mining local community structure supervised by the seed set in the target community we want to find.

\subsection{Global Community Detection}
Global community detection is an important topic in data mining. Global community detection methods include modularity optimization \cite{newman2004fast,blondel2008fast,cao2016weighted}, stochastic block models \cite{abbe2017community}, nonnegative matrix factorization \cite{zhang2012overlapping,yang2013overlapping}, spectral methods \cite{Spectral2007,Newman13}, global seed set expansion \cite{Palla2005CFinder,SoundarajanTKDD15,TKDE16Whang}, and many other techniques \cite{cao2018incorporating,he2018network}. In
this subsection, we highlight a few lines of works in the literature that are related to our method.


\textbf{Spectral-based method.}
Spectral method is a main technique for global community detection. Von Luxburg \cite{Spectral2007} introduces the most common spectral clustering algorithms based on different graph Laplacians, and discusses the advantages and disadvantages of different spectral clustering algorithms in details.  Using spectral algorithms, Newman studies the three related problems of detection by modularity maximization, statistical inference or normalized-cut graph partitioning, and finds that with certain choices of the free parameters in the spectral algorithms the three problems are identical~\cite{Newman13}. As the classic spectral clustering method could only find disjoint communities, Zhang \emph{et al.} \cite{zhang2007identification} propose a new algorithm to identify overlapping communities in small networks by combining a generalized modularity function, an approximated spectral mapping of network nodes into Euclidean space and fuzzy $c-$means clustering. Ali and Couillet \cite{ali2017improved} propose a spectral algorithm based on the family of ``$\alpha$-normalized'' adjacency matrices $\mathbf{A}$, which is in the type of $\mathbf{D^{-\alpha}AD^{-\alpha}}$ with $\mathbf{D}$ the degree matrix, to find community structure in large heterogeneous networks.

\textbf{Global seed set expansion.}
Many global community detection algorithms are based on seed set expansion. Clique Percolation~\cite{Palla2005CFinder}, the most classic method, starts from maximal $k-$cliques and merges cliques sharing $k-1$ nodes to form a percolation chain. OSLOM~\cite{lancichinetti2011OSLOM}, starts with each node as the initial seed and optimizes a fitness function, defined as the probability of finding the cluster in a random null model, to join together small clusters into statistically significant larger clusters. Seed Set Expansion (SSE)~\cite{Whang2013SSE,TKDE16Whang} identifies overlapping communities by expanding different types of seeds by a personalized PageRank diffusion. DEMON~\cite{coscia2012demon} and another independent work in \cite{SoundarajanTKDD15} identify very small, tightly connected sub-communities, create a new network in which each node represents such a sub-community, and then identify communities in this meta-network. Belfin \emph{et al.} \cite{belfin2018overlapping} propose a strategy for locating suitable superior seed set by applying various centrality measures in order to find overlapping communities.

\subsection{Local Community Detection}
\textbf{Local seed set expansion.}
Random walks have been extensively adopted as a subroutine
for locally expanding the seed set~\cite{andersen2006communities},
and this approach is
observed to produce communities correlated highly to the
ground-truth communities in real-world
networks~\cite{Bruno2014separability}.
PageRank, heat kernel and local spectral diffusions are three main
techniques for probability diffusion.

Spielman and Teng use the degree-normalized personalized~\cite{pagerank2004}
PageRank (DN PageRank) with truncation of small values to expand a
starting seed.
DN PageRank has been used in several subsequent PageRank-based
clustering algorithms~\cite{andersen2006communities,JaewonICDM2012},
including the popular PageRank Nibble method~\cite{pagerank2006}.
However, a study evaluating different variations of PageRank finds
that standard PageRank yields better performance than
DN PageRank~\cite{IsabelKDD14}.

The heat kernel method provides another local graph diffusion.  Based
on a continuous-time Markov chain, the heat kernel diffusion involves
the exponential of a generator matrix, which may be
approximated via a series of expansion.
Chung \emph{et al.} have proposed
a local graph partitioning based on the heat kernel diffusion~\cite{Chung2007heat,chung2009local},
and a Monte Carlo algorithm to estimate the heat kernel process~\cite{Chung2013heat}.
Another approach is described in~\cite{kloster2014heat}, where the
authors estimate the heat kernel diffusion via coordinate relaxation
on an implicit linear system;
their approach uncovers smaller communities with substantially higher
$F_1$ measures than those found through
the personalized PageRank diffusion.

Spectral methods are often used to extract disjoint
communities from a few leading eigenvectors of
a graph Laplacian~\cite{Spectral2007,Kannan2000Spectral}.
Recently, there has been a growing interest in adapting
the spectral approach
to mine the local structure around the seed set.
Mahoney, Orecchia, and Vishnoi~\cite{Mahoney2012Localspectral} introduce
a locally-biased analogue
of the second eigenvector for extracting local properties of data
graphs near an input seed set
by finding a sparse cut,
and apply
the method
to semi-supervised image segmentation and local community extraction.
In~\cite{He2015LOSP,li2015LEMON,LiHK_TKDD18}, the authors introduce an algorithm to
extract the local community by seeking a
sparse vector from the local spectral subspaces using $\ell_{1}$ norm
optimization.
They apply a power method for the subspace iteration using a standard
random walk on a modified graph with a self loop on each node, which
we call the light lazy random walk.
They also apply a reseeding iteration to improve the detection accuracy.

\textbf{Bounding the local community.}
All seed set expansion methods need a stopping
criterion, unless the size of the target community is known in advance.
Conductance is commonly recognized as the best stopping criterion~\cite{JaewonICDM2012,kloster2014heat,Whang2013SSE,TKDE16Whang}.
Yang and Leskovec~\cite{JaewonICDM2012} provide
widely-used real-world datasets with labeled ground truth, and
find that conductance and triad-partition-ratio (TPR) are the two
stopping rules yielding the highest detection accuracy.
He \emph{et al}.~\cite{He2015LOSP} propose two new metrics, TPN and
nMod, and compare them with conductance, modularity and TPR; and they show
that conductance and TPN consistently outperform other metrics.
Laarhoven and Marchiori~\cite{LaarhovenJMLR16} study a continuous relaxation of conductance by investigating the relation of conductance
with weighted kernel $k$-means.

\textbf{Local seeding strategies.}
The seeding strategy is a key part of seed set expansion algorithms.
Kloumann and Kleinberg~\cite{IsabelKDD14}
argue that random seeds are superior to high degree seeds, and
suggest domain experts provide seeds with a diverse degree
distribution.
Our initial LOSP paper~\cite{He2015LOSP} compares
low degree, random, high triangle participation (number of triangles inside the community
containing the seed) and low escape seeds (judged by probability
retained on the seeds after short random walks), and find all four
types of seeds yield almost the same accuracy.
Our initial LOSP work shows that low degree seeds spread out the probabilities slowly and better preserve
the local information, and random seeds are similar to low degree
seeds due to the power law degree distribution.
High triangle participation seeds and low escape seeds follow another
philosophy: they choose seeds more cohesive to the target community.

\section{Preliminaries}
\label{sec:preliminaries}
\subsection{Problem Formulation}
The local community detection problem can be formalized as follows.
We are given a connected, undirected graph $G = (V,E)$ with $n$ nodes and $m$ edges.
Let $\mathbf{A} \in \{0,1\}^{n \times n}$ be the associated adjacency matrix, and $\mathbf{D}$ the diagonal matrix of node degrees.
Let $S$ be the seed set of a few exemplary members in the target ground-truth community, denoted by a set of nodes $T$ ($S \subset T$, $|T| \ll |V|$).
And let $\mathbf{s} \in \{0,1\}^{n \times 1}$ be a binary indicator vector representing the exemplary members in $S$.
We are asked to identify the remaining latent members in the target community $T$.

\subsection{Datasets}
\label{sec:datasets}
We consider four groups with a total of 28 synthetic datasets, five SNAP datasets in social, product, and collaboration domains, and three biology networks for a comprehensive evaluation on the proposed LOSP algorithms.

\subsubsection{LFR Benchmark}
\label{subsec:LFR}
For synthetic datasets, we use the LFR standard benchmark networks
proposed by Lancichinetti \emph{et al}.~\cite{2008LFRbenchmark,2009LFRbenchmark}.
The LFR benchmark graphs have a built-in community structure that simulates properties of real-world networks accounting for heterogeneity of node degrees and community sizes that follow power law distribution.

We adopt the same set of parameter settings used in \cite{xie2013overlapping}
and generate four groups with a total of 28 LFR benchmark graphs.
Table \ref{LFR summary} summarizes the parameter settings we used, among which the mixing parameter $\mu$ has a big impact on the network topology.
Parameter $\mu$ controls the average fraction of neighboring nodes that do not belong to any community for each node,
two ranges of typical community size, big and small, are provided by $b$ and $s$.
Each node belongs to either one community or $om$ overlapping communities, and the number of nodes in overlapping communities is specified by $on$.
A larger $om$ or $on$ indicates more overlaps that are harder for the community detection task.

For four groups of configurations based on the community size and $on$, we vary $om$ from 2 to 8 to get seven networks in each group,
denoted as: \textbf{LFR\_s\_0.1} for $\{s:[10,50], on=500\}$, \textbf{LFR\_s\_0.5} for $\{s:[10,50], on=2500\}$,
\textbf{LFR\_b\_0.1} for $\{b:[20,100], on=500\}$, and \textbf{LFR\_b\_0.5} for $\{b:[20,100], on=2500\}$.
The average conductance for four groups of datasets are 0.522, 0.746, 0.497 and 0.733, respectively. We see more overlapping on the communities (a bigger \emph{on}) leads to a higher conductance.

\begin{table}[htbp]
\caption{Parameters for the LFR benchmarks.}
\label{LFR summary}
\centering
\begin{tabular}{l|l}
\hline
\bf{Parameter}&\bf{ Description}\\
\hline
$n = 5000$ & number of nodes in the graph\\
$\mu = 0.3$ & mixing parameter\\
$\bar d = 10$ & average degree of the nodes\\
$d_{max} = 50$ & maximum degree of the nodes \\
$s:[10,50], b:[20,100]$ & range of the community size \\
$\tau_1 = 2$ & node degree distribution exponent\\
$\tau_2 = 1$ &community size distribution exponent \\
$om \in \{2,3...,8\}$ &overlapping membership\\
$on \in \{500, 2500\}$& number of overlapping nodes\\
\hline			
\end{tabular}		
\end{table}

\subsubsection{Real-world Networks}

We consider five real-world network datasets with labeled ground truth from the Stanford Network Analysis Project (SNAP)\footnote{http://snap.stanford.edu}
and three genetic networks with labeled ground truth from the Isobase website\footnote{http://groups.csail.mit.edu/cb/mna/isobase/}.
\setlength{\belowdisplayskip}{0pt} \setlength{\belowdisplayshortskip}{0pt}
\setlength{\abovedisplayskip}{0pt} \setlength{\abovedisplayshortskip}{0pt}
\begin{itemize}
	\item \textbf{SNAP}: The five SNAP networks, \textbf{Amazon, DBLP, LiveJ, YouTube, Orkut}, are in the domains of social, product, and collaboration~\cite{JaewonICDM2012}. For each network, we adopt the top 5000 annotated communities with the highest quality evaluated with 6 metrics \cite{JaewonICDM2012}: Conductance, Flake-ODF, FOMD, TPR, Modularity and CutRatio. Our algorithm adopts the popular metric of conductance to automatically determine the community boundary. To make a fair comparison, we choose four state-of-the-art baselines that also adopt conductance to determine the community boundary.
	\item \textbf{Biology}: The three genetic networks from the Isobase website describe protein interactions.
	\textbf{HS} describes these interactions in humans, \textbf{SC} in \textit{S. cerevisiae}, a type of yeast, and \textbf{DM} in \textit{D. melanogaster}, a type of fruit fly. Such networks are interesting as communities may correspond to different genetic functions.
\end{itemize}

Table~\ref{table:realdata_stats} summarizes the networks and their ground truth communities.
We calculate the average and standard deviation of the community sizes, and the average conductance, where low conductance gives priority to communities with dense internal links and sparse external links.
We also define and calculate the roundness of communities.

\begin{definition}
\textbf{Roundness of a subgraph}. The roundness of a subgraph $G' = (V', E')$ is the average shortest path among all pair-wise nodes divided by the longest shortest path in the subgraph.
\end{definition}

The roundness value $R$ is 1 for a clique, and $R = \frac{|V'|+1}{3(|V'|-1)} \approx \frac{1}{3}$ if the subgraph is a straight line.
Because large roundness value indicates a ``round'' subgraph and small roundness value indicates a ``long and narrow'' subgraph, the roundness reveals
some information on the topology structure of the subgraph. Table~\ref{table:realdata_stats} shows that communities in the above real-world networks have an average roundness of about 0.67.
If we normalize the roundness value from [1/3, 1] to [0,1], then we get $R_{norm} = \frac{R - 1/3}{1 - 1/3}$.

\begin{table}[htbp]
\caption{Statistics for real-world networks and their ground truth communities.}
\label{table:realdata_stats}
\centering
\scalebox{0.8}{
\begin{tabular}{l | l  r r  | r  c  c c }
\hline
        \multirow{2}{*}{\bf{Domain}}    & \multicolumn{3}{c}{\bf{Network}} \vline& \multicolumn{4}{c}{\bf{Ground truth communities}} \\
 &\bf{Name} &\bf{\# Nodes}&\bf{\# Edges} &\multicolumn{1}{c}{\bf{Avg. $\pm$ Std. Size }} & \bf{Avg. Cond.} & \bf{Roundness $R$} & \bf{ $R_{norm}$}\\
 \hline
\bf{Product} &\bf{Amazon}     & 334,863    & 925,872    &13 $\pm$ 18   & 0.07  &    0.69 & 0.54\\
\bf{Collaboration}& \bf{DBLP} & 317,080    &1,049,866   &22 $\pm$ 201  & 0.41  &    0.74 & 0.61\\
\bf{Social} &\bf{LiveJ}       & 3,997,962  & 34,681,189 &28 $\pm$ 58   & 0.39  &    0.65 & 0.48\\
\bf{Social} &\bf{YouTube}     & 1,134,890  & 2,987,624  &21 $\pm$ 73   & 0.84  &    0.69 & 0.54\\
\bf{Social}& \bf{Orkut}       & 3,072,441  & 117,185,083&216 $\pm$ 321 & 0.73  &    0.50 & 0.25 \\
\hline
\bf{Biology}& \bf{DM}         & 15,294     & 485,408    &440 $\pm$ 2096 &  0.88 &    0.68 & 0.52\\
\bf{Biology}& \bf{HS}         & 10,153     & 54,570     &113 $\pm$ 412 &  0.88 &    0.58 & 0.37\\
\bf{Biology}& \bf{SC}         & 5,523      & 82,656     &67 $\pm$ 110 &  0.90 &    0.63 & 0.45\\
\hline
\end{tabular}}
\end{table}

\subsection{Evaluation Metric}
For the evaluation metric, we adopt $F_1$ score to quantify the similarity between the detected local community $C$ and the target ground truth community $T$. The $F_1$ score for each pair of $(C,T)$ is defined by:
\begin{equation*}
F_1(C,T) = \frac{2|C \cap T|}{|C|+|T|}.
\end{equation*}

\section{Krylov Subspace Clustering}
\label{sec:Algorithm}
It is well known that there is a close relationship between the combinatorial characteristics of a graph and the algebraic properties of its associated matrices~\cite{chung1997spectral}.
In this section, we present a local community detection method of finding a linear sparse coding on the Krylov subspace, which is a local approximation of the spectral subspace.
 We first present the local sampling method starting from the seeds to reach a comparatively small subgraph $G_s$, and provide the theoretical base that finding a local community containing the seeds corresponds to
finding a sparse linear coding in the invariant subspace spanned by the dominant eigenvectors of the transition matrix.
Then we propose a family of local spectral subspace definitions based on various short random walk diffusions,
and do local community detection by finding a sparse relaxed indicator vector that lies in a local spectral subspace representing the subordinative probability of the corresponding nodes.

\vspace{-0.5em}
\subsection{Local Sampling}

We first do a local sampling from the seeds for large networks to reduce the computational complexity.
According to the small world phenomenon and ``six degrees of separation'',
most members should be at most two or three steps far away from the seed members if we want to identify a small community of size hundreds.
Thus, we mainly use a few steps of Breadth-First Search (BFS) to expand nodes for the sampling method. To avoid BFS to expand too fast, we
adopt a strategy to filter some very popular nodes during the BFS expansion. The detailed description for the sampling method is as follows.

Starting from each seed, we do a one-round BFS. The $BFS(\cdot)$ operation achieves a set of nodes containing the source node and its neighbor nodes by breadth-first search.
For the current subgraph expanded by the BFS, we define the inward ratio of a node as the fraction of inward edges to the out-degree.
If the sampled subgraph is no greater than the lower bound $N_1$, then we do one more round of BFS to contain more neighbor nodes.
To avoid the next round of BFS to expand too fast, we first filter low inward ratio frontier nodes which may contain some inactive nodes or very popular nodes.
The $Filter(\cdot)$ operation chooses high inward ratio nodes until the total out-degree is no less than 3000.
In the end, we union all BFS subgraphs obtained from each seed.
If the amalgamated subgraph scale is larger than the upper bound $N_2$, we then conduct a $k$-step short random walk from the seeds to remove some low probability nodes. Details of the sampling are as shown in Algorithm \ref{alg:one}.
\begin{algorithm}
\SetAlgoNoLine	
\KwIn{Graph $G = (V,E)$, seed set $S \subseteq V$, lower bound of sampled size from each seed $N_1$, upper bound of the subgraph size $N_2$, upper bound of BFS rounds $t$, and steps of random walks $k$ for postprocessing}
\KwOut{Sampled subgraph $G_s = (V_s,E_s)$}
$V_s \leftarrow S$ \\
\For{each $s_i \in S$}
{
$V_i \leftarrow BFS(s_i)$ \\
$V'_i \leftarrow V_i$ \\
\While{($|V_i| < N_1$ and BFS rounds $\leq t$)}
{
$V'_i \leftarrow Filter(V_i')$ \\
$V_i' \leftarrow BFS(V'_i)$ \\
$V_i = V_i \cup V_i'$ \\
}
$V_s = V_s \cup V_i$ \\
}
$G_s = (V_s,E_s)$ is the induced subgraph from $V_s$ \\
\If{$|V_s| > N_2$}
{
Conduct a $k$-steps of random walk from $S$ in $G_s$ \\
$V_s \leftarrow$ \text{$N_2$ nodes with higher probability} \\
$G_s = (V_s,E_s)$ is the induced subgraph from $V_s$ \\
}
\caption{Sampling}
\label{alg:one}
\end{algorithm}

Denote the sampled subgraph as $G_s = (V_s,E_s)$ with $n_s$ nodes and $m_s$ edges in the following discussion.
We then identify the local community from this comparatively small subgraph instead of the original large network.
The complexity is only related to the degrees of the nodes and it is very quick in seconds for the datasets we considered.
The sampling quality, evaluated by the coverage ratio of the labeled nodes, plays a key role for the follow-up membership identification.
This pre-processing procedure significantly reduces the membership identification cost.

\subsection{Spectra and Local Community}
In this subsection, we provide the necessary theoretical base that finding a low-conductance community corresponds to finding a sparse indicator vector in the span of dominant eigenvectors of the transition matrix with larger eigenvalues.

Let $\mathbf{L} = \mathbf{D_s} - \mathbf{A_s}$ be the Laplacian matrix of $G_s$ where $\mathbf{A_s}$ and $\mathbf{D_s}$ denote the adjacency matrix and the diagonal degree matrix of $G_s$.
We define two normalized graph Laplacian matrices:
$$\mathbf{L_{rw}} = \mathbf{D_s^{-1}L} = \mathbf{I} - \mathbf{N_{rw}},~~ \mathbf{L_{sym}} =\mathbf{D_s^{-\frac{1}{2}}LD_s^{-\frac{1}{2}}} = \mathbf{I} - \mathbf{N_{sym}},
$$
\noindent where $\mathbf{I}$ is the  identity matrix, $\mathbf{N_{rw}}=\mathbf{D_s}^{-1}\mathbf{A_s}$ is the transition matrix,
and $\mathbf{N_{sym}}$ $=\mathbf{D_s^{-\frac{1}{2}}A_sD_s^{-\frac{1}{2}}}$ is the normalized adjacency matrix.

For a community $C$, the conductance~\cite{Ncut2000} of $C$ is defined as
\begin{equation*}
\Phi(C) = \frac{\text{cut}(C,\overline{C})}{\text{min}\{\text{vol}(C),\text{vol}(\overline{C})\}},
\end{equation*}
where $\overline{C}$ consists of all nodes outside $C$, cut$(C,\overline{C})$ denotes the number of edges between $C$ and $\overline{C}$,
and vol$(\cdot)$ calculates the ``edge volume'', i.e. for the subset nodes, we count their total node degrees in graph $G_s$.
Low conductance gives priority to a community with dense internal links and sparse external links.

Let $\mathbf{y} \in \{0,1\}^{n_s\times 1}$ be a binary indicator vector representing a small community $C$ in the sampled graph $G_s = (V_s,E_s)$.
Here for ``small community'', we mean  $\text{vol}(C)\leq \frac{1}{2}\text{vol}(V_s)$.
As $\mathbf{y^TD_sy}$ equals the total node degrees of $C$, and $\mathbf{y^TA_sy}$ equals two times the number of internal edges of $C$,
the conductance $\Phi(C)$ could be written as a generalized Rayleigh quotient
\begin{equation*}
\Phi(C) = \frac{\mathbf{y^TLy}}{\mathbf{y^TD_sy}} = \frac{\mathbf{(D_s^{\frac{1}{2}}y)^TL_{sym}(D_s^{\frac{1}{2}}y)}}{\mathbf{(D_s^{\frac{1}{2}}y)^T(D_s^{\frac{1}{2}}y)}}.
\label{eq:Rayleigh}
\end{equation*}

\begin{theorem}
\label{th:cheeger}
Let $\lambda_2$ be the second smallest eigenvalue of $\mathbf{L_{sym}}$, then conductance $\Phi(C)$ of a small community $C$ in graph
$G_s = (V_s,E_s)$ (``small'' means $\text{vol}(C) \leq 0.5\text{vol}(V_s)$) is bounded by
\[
\frac{\lambda_2}{2} \leq \Phi(C) \leq 1,
\]
where $\text{vol}(C)$ denotes for all nodes inside $C \subseteq V_s$, we count the total degree in graph $G_s$.
\end{theorem}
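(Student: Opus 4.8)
The plan is to prove the two inequalities separately: the upper bound $\Phi(C)\le 1$ is purely combinatorial, while the lower bound $\Phi(C)\ge \lambda_2/2$ is the ``easy half'' of a Cheeger-type argument applied to the Rayleigh-quotient form of $\Phi(C)$ derived just above the statement.

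For the upper bound, note that by the smallness hypothesis $\mathrm{vol}(C)\le \tfrac12\mathrm{vol}(V_s)$ the denominator $\min\{\mathrm{vol}(C),\mathrm{vol}(\overline C)\}$ equals $\mathrm{vol}(C)$, so $\Phi(C)=\mathrm{cut}(C,\overline C)/\mathrm{vol}(C)$. Each edge crossing the cut has exactly one endpoint in $C$ and contributes $1$ to $\mathrm{vol}(C)$, whereas each internal edge of $C$ contributes $2$; hence $\mathrm{cut}(C,\overline C)=\mathrm{vol}(C)-2\,e(C)\le\mathrm{vol}(C)$, where $e(C)$ is the number of internal edges, and dividing gives $\Phi(C)\le 1$. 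Equivalently, one may observe $\mathbf{y^TLy}=\mathbf{y^TD_sy}-\mathbf{y^TA_sy}\le \mathbf{y^TD_sy}$ since $\mathbf{A_s}$ has nonnegative entries.

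For the lower bound I would work with $\mathbf{x}=\mathbf{D_s^{1/2}y}$, for which the displayed identity gives $\Phi(C)=\mathbf{y^TLy}/\mathrm{vol}(C)$. A direct appeal to Courant--Fischer only yields $\mathbf{x^TL_{sym}x}/\mathbf{x^Tx}\ge\lambda_1=0$, because $\mathbf{x}$ need not be orthogonal to the bottom eigenvector $\mathbf{v_1}=\mathbf{D_s^{1/2}\mathbf{1}}$ of $\mathbf{L_{sym}}$ (it has eigenvalue $0$ since $\mathbf{L\mathbf{1}}=0$). The key step is therefore to replace $\mathbf{y}$ by its $\mathbf{D_s}$-orthogonal projection off the constant vector: set $\mathbf{z}=\mathbf{y}-\alpha\mathbf{1}$ with $\alpha=\mathrm{vol}(C)/\mathrm{vol}(V_s)$, chosen so that $\mathbf{z^TD_s\mathbf{1}}=0$, i.e. $\mathbf{D_s^{1/2}z}\perp\mathbf{v_1}$. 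Since $\mathbf{L\mathbf{1}}=0$ the numerator is unchanged, $\mathbf{z^TLz}=\mathbf{y^TLy}$, so Courant--Fischer now gives $\lambda_2\le \mathbf{z^TLz}/\mathbf{z^TD_sz}=\mathbf{y^TLy}/\mathbf{z^TD_sz}$. Expanding the new denominator yields $\mathbf{z^TD_sz}=\mathrm{vol}(C)-\mathrm{vol}(C)^2/\mathrm{vol}(V_s)=\mathrm{vol}(C)\bigl(1-\mathrm{vol}(C)/\mathrm{vol}(V_s)\bigr)$, and the smallness hypothesis forces the parenthesized factor to be at least $\tfrac12$, hence $\mathbf{z^TD_sz}\ge\tfrac12\mathrm{vol}(C)$. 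Combining, $\lambda_2\le \mathbf{y^TLy}\big/\bigl(\tfrac12\mathrm{vol}(C)\bigr)=2\,\Phi(C)$, which is the claim.

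The main obstacle is the orthogonalization step: one has to recognize that the null vector of $\mathbf{L_{sym}}$ is $\mathbf{D_s^{1/2}\mathbf{1}}$ and subtract precisely the multiple $\alpha=\mathrm{vol}(C)/\mathrm{vol}(V_s)$ of $\mathbf{1}$ from $\mathbf{y}$ so that Courant--Fischer becomes applicable with $\lambda_2$; after that, the hypothesis $\mathrm{vol}(C)\le\tfrac12\mathrm{vol}(V_s)$ does exactly the work needed to convert the factor $1-\mathrm{vol}(C)/\mathrm{vol}(V_s)$ into the constant $\tfrac12$ that produces the stated bound. The remaining manipulations are routine Rayleigh-quotient bookkeeping.
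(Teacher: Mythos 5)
Your proposal is correct and follows essentially the same route as the paper's own proof: the identical test vector $\mathbf{z}=\mathbf{y}-\sigma\mathbf{e}$ with $\sigma=\mathrm{vol}(C)/\mathrm{vol}(V_s)$, the same Courant--Fischer argument on the $\mathbf{D_s}$-weighted Rayleigh quotient, the same computation of $\mathbf{z^TD_sz}$, and the same use of the smallness hypothesis to obtain the factor $2$. The only difference is that you spell out the elementary upper bound $\Phi(C)\le 1$, which the paper simply states as obvious.
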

The proof omits here, and we attach the details in Appendix \ref{sec:Bounding}.

Let $\mathbf{L_{sym}} = \mathbf{Q \Lambda Q^T}$ be the eigendecomposition, where $\mathbf{Q} =
\begin{bmatrix}
\mathbf{q}_1 ~|\cdots|~ \mathbf{q}_{n_s}
\end{bmatrix}
$ is an orthonormal matrix and $\mathbf{\Lambda} = \operatorname{diag}(\lambda_1, \lambda_2, ... ,\lambda_{n_s})$, $\lambda_1 \leq \lambda_2 \leq ... \leq \lambda_{n_s}$.
Then
\begin{equation*}
\Phi(C) = \frac{\mathbf{(Q^TD_s^{\frac{1}{2}}y)^T \Lambda (Q^TD_s^{\frac{1}{2}}y)}}{\mathbf{(Q^TD_s^{\frac{1}{2}}y)^T(Q^TD_s^{\frac{1}{2}}y)}}.
\label{eq:Rayleigh2}
\end{equation*}

Let $x_i = \mathbf{q}_i^\mathbf{T}\mathbf{D_s^{\frac{1}{2}}y}$ be the projection of $\mathbf{D_s^{\frac{1}{2}}y}$ on the $i$th eigenvector $\mathbf{q}_i$ of $\mathbf{L_{sym}}$, we have

\begin{equation}
\Phi(C) = \frac{\sum_{i=1}^{n_s} \lambda_i x_i^2 }{\sum_{i=1}^{n_s}  x_i^2} = \sum_{i=1}^{n_s}  w_i \lambda_i,
\label{eq:Quadratic}
\end{equation}
where $w_i = \frac{x_i^2}{{\sum_{i=1}^{n_s}  x_i^2}}$ is the weighting coefficient of the eigenvalues. If $\Phi(C)$ is close to the smallest eigenvalue $\lambda_1$, then most of the weight on average must be on the eigenvalues close to $\lambda_1$.

\begin{theorem}
\label{th:conductance}
Let $\epsilon$ be a small positive real number, if $\Phi(C) < \lambda_1 + \epsilon$, then for any positive real number $t$,
\begin{equation*}
\sum_{i:\lambda_i<\lambda_1+t\epsilon} w_i > 1 - \frac{1}{t}.
\end{equation*}
\end{theorem}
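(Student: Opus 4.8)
The plan is to recognize Theorem~\ref{th:conductance} as a Markov-inequality statement applied to the weighting coefficients $w_i$ introduced in Equation~\eqref{eq:Quadratic}. The key observation is that $(w_i)_{i=1}^{n_s}$ forms a probability distribution: each $w_i = x_i^2 / \sum_j x_j^2 \geq 0$ and $\sum_i w_i = 1$. Moreover, from Equation~\eqref{eq:Quadratic} we have $\Phi(C) = \sum_i w_i \lambda_i$, so the ``expected value'' of the eigenvalue under this distribution is exactly $\Phi(C)$. Since all eigenvalues satisfy $\lambda_i \geq \lambda_1$, the quantity $\lambda_i - \lambda_1$ is a nonnegative random variable with mean $\sum_i w_i(\lambda_i - \lambda_1) = \Phi(C) - \lambda_1$.

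First I would invoke the hypothesis $\Phi(C) < \lambda_1 + \epsilon$ to bound this mean: $\sum_i w_i(\lambda_i - \lambda_1) < \epsilon$. Next I would apply Markov's inequality to the nonnegative weighted variable $\lambda_i - \lambda_1$ with threshold $t\epsilon$, giving
\[
t\epsilon \cdot \!\!\sum_{i:\,\lambda_i - \lambda_1 \geq t\epsilon}\!\! w_i \;\leq\; \sum_{i=1}^{n_s} w_i(\lambda_i - \lambda_1) \;<\; \epsilon,
\]
so that $\sum_{i:\,\lambda_i \geq \lambda_1 + t\epsilon} w_i < 1/t$. Taking the complementary set of indices and using $\sum_i w_i = 1$ yields $\sum_{i:\,\lambda_i < \lambda_1 + t\epsilon} w_i > 1 - 1/t$, which is the claim. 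The strictness of the final inequality follows from the strictness in the hypothesis $\Phi(C) < \lambda_1 + \epsilon$.

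There is essentially no obstacle here — the argument is a clean one-line application of Markov's inequality once the probabilistic interpretation of $\{w_i\}$ is made explicit. The only points requiring minor care are (i) confirming the chain of strict versus non-strict inequalities so that the conclusion is genuinely strict, and (ii) noting that the statement is only informative when $t > 1$ (otherwise $1 - 1/t \leq 0$ and the bound is vacuous), though the theorem as stated holds trivially in that regime as well. I would present the proof in three or four lines without further elaboration.
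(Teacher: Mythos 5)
Your proposal is correct and is essentially the paper's own argument: the paper's step $\Phi(C) \geq \lambda_1 + t\epsilon\sum_{j:\lambda_j\geq\lambda_1+t\epsilon} w_j$, obtained by splitting the weighted sum, is exactly Markov's inequality applied to the nonnegative quantities $\lambda_i-\lambda_1$ under the distribution $\{w_i\}$, which is how you phrase it. The remaining steps (using $\Phi(C)<\lambda_1+\epsilon$ and passing to the complementary index set with $\sum_i w_i=1$) coincide with the paper's proof.
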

\vspace{-0.2em}
\begin{proof}
By Eq. (\ref{eq:Quadratic}) we have
\begin{equation}
\begin{aligned}
\Phi(C) & = \sum_{i:\lambda_i<\lambda_1+t\epsilon} w_i\lambda_i + \sum_{j:\lambda_j\geq \lambda_1+t\epsilon} w_j\lambda_j \\
                & \geq \lambda_1 \sum_{i:\lambda_i<\lambda_1+t\epsilon} w_i + (\lambda_1+t\epsilon)\sum_{j:\lambda_j\geq \lambda_1+t\epsilon} w_j \\
                & = \lambda_1 + t\epsilon\sum_{j:\lambda_j\geq \lambda_1+t\epsilon} w_j. \notag
\end{aligned}
\end{equation}
As $\Phi(C) < \lambda_1+\epsilon$, we get
\[
\lambda_1 + t\epsilon\sum_{j:\lambda_j\geq \lambda_1+t\epsilon} w_j <  \lambda_1+\epsilon.
\]

Therefore,
\[
\sum_{j:\lambda_j\geq \lambda_1+t\epsilon} w_j < \frac{1}{t}, ~~
\sum_{i:\lambda_i<\lambda_1+t\epsilon} w_i > 1 - \frac{1}{t}.
\]
\end{proof}

Note that $\lambda_1 = 0$ for the Laplacian matrix $\mathbf{L_{sym}}$ \cite{Spectral2007}, however, Theorem~\ref{th:conductance} holds for any real number $\lambda_1$.
Theorem~\ref{th:conductance} indicates for a low conductance $\Phi(C)$ close to the smallest value $\lambda_1$, the smaller eigenvalues of $\mathbf{L_{sym}}$ contribute most of the weights.
As $w_i = \frac{x_i^2}{{\sum_{i=1}^{n_s}  x_i^2}}$, if we want to find a low conductance $\Phi(C)$, $w_i$ and hence $x_i$ should be larger on smaller eigenvalues $\lambda_i~(1\leq i \leq k \ll n_s)$.
As $x_i = \mathbf{q}_i^\mathbf{T}\mathbf{D_s^{\frac{1}{2}}y}$, a larger $x_i$ indicates a smaller angle between $\mathbf{D_s^{\frac{1}{2}}y}$ and eigenvector $\mathbf{q}_i$.
When we relax $\mathbf{y}$ from $\{0,1\}^{n_s\times 1}$ to $[0,1]^{n_s\times 1}$, the relaxed scaled indicator vector $\mathbf{D_s^{\frac{1}{2}}y}$ should be well approximated by a linear combination of the dominant eigenvectors with smaller eigenvalues.
As
\begin{equation*}
\mathbf{L_{rw}v} = \lambda \mathbf{v} \quad \Leftrightarrow \quad  \mathbf{L_{sym}(\mathbf{D_s^{\frac{1}{2}}v})} = \lambda (\mathbf{D_s^{\frac{1}{2}}v}),
\end{equation*}
where $\bf{v}$ is a nonzero vector. It shows that the relaxed indicator vector $\mathbf{y}$ should be well approximated by a linear combination of the dominant eigenvectors of $\mathbf{L_{rw}}$ with smaller eigenvalues.
Also,
\begin{equation}\label{eq:Nrw}
\mathbf{L_{rw}v} = \mathbf{(I - N_{rw})v}  = \lambda \mathbf{v}
\quad \Leftrightarrow \quad
\mathbf{N_{rw}v} = (1 - \lambda) \mathbf{v},
\end{equation}
it follows that $\mathbf{L_{rw}}$  and $\mathbf{N_{rw}}$ share the same set of eigenvectors and the corresponding eigenvalue  of $\mathbf{N_{rw}}$ is $1 - \lambda$ where $\lambda$ is the eigenvalue of $\mathbf{L_{rw}}$.
Equivalently, the relaxed indicator vector $\mathbf{y}$ should be well approximated by a linear combination of the eigenvectors of $\mathbf{N_{rw}}$ with larger eigenvalues.

This leads to our idea of finding a relaxed sparse indicator vector $\mathbf{y}$ containing the seeds in the span of the dominant eigenvectors with larger eigenvalues of $\mathbf{N_{rw}}$.

\begin{equation}
\begin{aligned}
\min ~~ & \Vert\bf{y}\Vert_1 = \mathbf{e^Ty} \\
\emph{s.t.}& ~(1)~ \mathbf{ y = Vu},  \\
           & ~(2)~ y_i \in [0,1],~ \label{eq:oneNorm} \\
           & ~(3)~ y_i \geq \frac{1}{|S|}, i \in S,
\end{aligned}
\end{equation}
where the column vectors of $\mathbf{V}$ are formed by the dominant eigenvectors of $\mathbf{N_{rw}}$ with larger values and $\mathbf{e}$ is the vector of all ones. $\bf{u}$ is the coefficient vector, and $y_i$ denotes the $i$th element of the relaxed indicator vector $\bf{y}$.
The constraint $\mathbf{y = Vu}$ indicates that $\mathbf{y}$ lies in the eigenspace spanned by the column vectors of $\mathbf{V}$.
Vector $\mathbf{y}$ indicates the local community with low conductance containing the labeled seeds.

\subsection{Krylov Subspace Approximation}
\subsubsection{Variants of Random Walk Diffusion}
Instead of using the eigenvalue decomposition on $\mathbf{N_{rw}}$, we consider short random walks for the probability diffusion starting from the seed set to get the ``local spectral subspace''.
We define several variants of the spectral diffusion based on different transition matrices for the random walks.

1) \textit{Standard Random Walk (SRW)} uses the transition matrix $\mathbf{N_{rw}}$ for the probability diffusion.
\begin{equation}\label{Eq1}
\mathbf{N_{rw}}=\mathbf{D_s}^{-1}\mathbf{A_s}.
\end{equation}

2) \textit{Light Lazy Random Walk (LLRW)} retains some probability at the current node for the random walks.
\begin{equation}
\mathbf{N_{rw}}=(\mathbf{D_s +\alpha \mathbf{I}})^{-1}( \alpha \mathbf{I} +\mathbf{A_s}),
\label{eq:lightlazyRW}
\end{equation}
where $\alpha \in N^{0+}$.  $\alpha= 0$ degenerates to the standard random walk and $\alpha = 1,2,3,...$ corresponds to a random walk in the modified graph with $1,2,3,...$ loops at each node.

3) \textit{Lazy Random Walk (LRW)} is defined by
\begin{equation}
\begin{aligned}
\mathbf{N_{rw}}&=(\mathbf{D_s +\alpha \mathbf{D_s}})^{-1}( \alpha \mathbf{D_s} +\mathbf{A_s})  \\
           & = \frac{\alpha}{1+\alpha} \mathbf{I} + \frac{1}{1+\alpha} \mathbf{D_s}^{-1}\mathbf{A_s},
           \label{eq:lazyRW}
\end{aligned}
\end{equation}
where $\alpha \in [0,1]$.  E.g. $\alpha = 0.1$ corresponds to a random walk that always retains $\frac{0.1}{1 + 0.1}$ probability on the current node during the diffusion process.
$\alpha= 0$ degenerates to the standard random walk.

4) \textit{Personalized PageRank (PPR)} is defined by
\begin{equation}
\mathbf{N_{rw}}=\alpha \mathbf{S} + ( 1- \alpha)\mathbf{D_s}^{-1}\mathbf{A_s},
\label{eq:per}
\end{equation}
where $\alpha \in [0,1]$ and $\mathbf{S}$ the diagonal matrix with binary indicators for the seed set $S$.
E.g. $\alpha = 0.1$ corresponds to a random walk that
always retains 10\% 
of the probability on the seed set. $\alpha= 0$ is the standard random walk.

\subsubsection{Regular and Inverse Random Walks}
Based on the above random walk diffusion definition, one step of random walk is defined as $\mathbf{N_{rw}^Tp}$ for a probability column vector $\mathbf{p}$, and the probability density for a random walk of length $k$ is given by a Markov chain:
\begin{equation} \label{eq:PonNrwT}
\mathbf{p}_k = \mathbf{N_{rw}^Tp}_{k-1} = \mathbf{(N_{rw}^T)}^k\mathbf{p}_0,
\end{equation}
\noindent where $\mathbf{p}_0$ is the initial probability density evenly assigned on the seeds.

We can also define an ``inverse random walk'':
\begin{equation}\label{eq:PonNrw}
\mathbf{p}_k = \mathbf{N_{rw}p}_{k-1} = \mathbf{(N_{rw})}^k\mathbf{p}_0.
\end{equation}
Here $\mathbf{p}_k$ indicates a probabilty density such that the probability concentrates to the seed set as $\mathbf{p}_0$ after $k$ steps of short random walks.
The value of $\mathbf{p}_k$ also shows a snapshot of the probability distribution for the local community around the seed set, and follow-up experiments also demonstrates the effectiveness of the ``inverse random walk'', which has a slightly lower accuracy as compared with the ``regular random walk''.

\subsubsection{Local Spectral Subspace}
Then we define a local spectral subspace as a proxy of the invariant subspace spanned by the
leading eigenvectors of $\mathbf{N_{rw}}$.
The local spectral subspace is defined on an order-$d$ Krylov matrix and $k$ is the number of diffusion steps:
\begin{equation}\label{eq:LOSP2}
\mathbf{V}_d^{(k)} = \mathbf{[}\mathbf{p}_k,\mathbf{p}_{k+1},...,\mathbf{p}_{k+d-1} \mathbf{]}.
\end{equation}
Here $k$ and $d$ are both some modest numbers. Then the Krylov subspace spanned by the column vectors of $\mathbf{V}_d^{(k)}$ is called the local spectral subspace, denoted by $\mathcal{V}_d^{(k)}$.

In the following discussion, we provide some theoretical analysis to relate the spectral property to the local spectral subspace $\mathcal{V}_d^{(k)}$.

\begin{lemma}
Let $G_s = (V_s,E_s)$ be a connected and non-bipartite graph with $n_s$ nodes and $m_s$ edges, $\mathbf{N_{rw}}$ (defined by Eq. (\ref{Eq1})) the transition matrix of $G_s$ with eigenvalues $\sigma_1 \geq \sigma_2 \geq ... \geq \sigma_{n_s}$, and the corresponding normalized eigenvectors are $\mathbf{u}_1 , \mathbf{u}_2, \ldots, \mathbf{u}_{n_s}$. Then $\mathbf{u}_1 , \mathbf{u}_2, \ldots, \mathbf{u}_{n_s}$ are linearly independent, and
\[
1 = \sigma_1 > \sigma_2 \geq ... \geq \sigma_{n_s} > -1, ~~ \mathbf{u}_1 = \frac{\mathbf{e}}{\Vert \mathbf{e} \Vert_2},
\]
where $\mathbf{e}$ is a vector of all ones.
\label{lemmaone}
\end{lemma}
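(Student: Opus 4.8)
The plan is to reduce every claim about the non-symmetric matrix $\mathbf{N_{rw}}$ to the symmetric matrix $\mathbf{N_{sym}} = \mathbf{D_s^{-1/2}}\mathbf{A_s}\mathbf{D_s^{-1/2}}$ together with Perron--Frobenius theory. First I would record the similarity $\mathbf{N_{rw}} = \mathbf{D_s^{-1/2}}\mathbf{N_{sym}}\mathbf{D_s^{1/2}}$, which holds since $\mathbf{D_s^{-1/2}}(\mathbf{D_s^{-1/2}}\mathbf{A_s}\mathbf{D_s^{-1/2}})\mathbf{D_s^{1/2}} = \mathbf{D_s^{-1}}\mathbf{A_s}$. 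Because $\mathbf{N_{sym}}$ is real symmetric, it has real eigenvalues $\sigma_1 \ge \cdots \ge \sigma_{n_s}$ and an orthonormal eigenbasis $\mathbf{q}_1,\dots,\mathbf{q}_{n_s}$; pushing these through the invertible map $\mathbf{D_s^{-1/2}}$ gives eigenvectors $\mathbf{D_s^{-1/2}}\mathbf{q}_i$ of $\mathbf{N_{rw}}$ with the same eigenvalues, and they remain linearly independent since $\mathbf{D_s^{-1/2}}$ is nonsingular. Normalizing to unit $2$-norm produces the $\mathbf{u}_i$ and establishes both the reality of the spectrum and the linear-independence assertion.

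Next I would pin down $\sigma_1$. Each row of $\mathbf{N_{rw}} = \mathbf{D_s^{-1}}\mathbf{A_s}$ sums to $1$, so $\mathbf{N_{rw}}\mathbf{e} = \mathbf{e}$ and $1$ is an eigenvalue with eigenvector $\mathbf{e}$; since $\mathbf{N_{rw}}$ is entrywise nonnegative and row-stochastic, $\|\mathbf{N_{rw}}\|_\infty = 1$, so every eigenvalue has modulus at most $1$, forcing $\sigma_1 = 1$. Connectedness of $G_s$ makes $\mathbf{A_s}$, hence $\mathbf{N_{rw}}$, irreducible, so by Perron--Frobenius the spectral radius $1$ is an algebraically simple eigenvalue with a positive eigenvector; therefore the eigenspace for $1$ is spanned by $\mathbf{e}$, i.e. $\mathbf{u}_1 = \mathbf{e}/\|\mathbf{e}\|_2$, and $\sigma_1 = 1 > \sigma_2$.

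Finally, to obtain $\sigma_{n_s} > -1$ I would invoke non-bipartiteness. One clean route: $\mathbf{I} + \mathbf{N_{sym}} = \mathbf{D_s^{-1/2}}(\mathbf{D_s} + \mathbf{A_s})\mathbf{D_s^{-1/2}}$ is congruent to the signless Laplacian $\mathbf{D_s} + \mathbf{A_s}$, which is positive semidefinite (as $\mathbf{x}^{\mathbf{T}}(\mathbf{D_s}+\mathbf{A_s})\mathbf{x} = \sum_{(i,j)\in E_s}(x_i+x_j)^2$) and singular exactly when $G_s$ has a bipartite connected component; since $G_s$ is connected and non-bipartite, $\mathbf{I} + \mathbf{N_{sym}} \succ 0$, so $\sigma_i > -1$ for all $i$, in particular $\sigma_{n_s} > -1$. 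Equivalently, connected plus non-bipartite makes $\mathbf{A_s}$ primitive, hence $\mathbf{N_{rw}}$ primitive, so $1$ is the unique eigenvalue of modulus $1$; combined with the reality of the spectrum from the first step, this again yields $-1 < \sigma_{n_s}$ (and $\sigma_2 < 1$).

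The only genuine subtlety is the non-symmetry of $\mathbf{N_{rw}}$: a priori it need not be diagonalizable, and its eigenvectors need not be real or orthogonal, so the whole argument hinges on the similarity to $\mathbf{N_{sym}}$ established in the first step. Once that is in place, the remaining statements are standard: simplicity of the Perron root from irreducibility, the strict spectral gap at $\pm 1$ from primitivity (or from positive definiteness of the Laplacian $\mathbf{L_{sym}}$ and the signless Laplacian), and identification of the top eigenvector with $\mathbf{e}/\|\mathbf{e}\|_2$ via row-stochasticity.
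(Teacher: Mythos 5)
Your proof is correct, and it shares the paper's central device — conjugating by $\mathbf{D_s^{1/2}}$ to transfer the question to a real symmetric matrix — but it completes the argument by a genuinely different route. The paper works with the Laplacians: it notes $\mathbf{N_{rw}} = \mathbf{I} - \mathbf{L_{rw}}$, uses von Luxburg's Proposition~3 to identify the eigenvectors of $\mathbf{L_{sym}}$ as $\mathbf{D_s^{1/2}u}_i$ (giving linear independence, exactly as your $\mathbf{N_{sym}} = \mathbf{D_s^{1/2}}\mathbf{N_{rw}}\mathbf{D_s^{-1/2}}$ similarity does, since $\mathbf{N_{sym}} = \mathbf{I}-\mathbf{L_{sym}}$), gets $\sigma_1 = 1$ and $\mathbf{u}_1 = \mathbf{e}/\Vert\mathbf{e}\Vert_2$ from $\mathbf{L_{rw}e}=\mathbf{0}$, and then simply cites Lemma~1.7 of Chung for the strict bounds $\lambda_2 > 0$ and $\lambda_{n_s} < 2$ on connected non-bipartite graphs, which translate to $\sigma_2 < 1$ and $\sigma_{n_s} > -1$. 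You instead prove the strict inequalities from first principles: row-stochasticity plus the $\infty$-norm bound pins the spectral radius at $1$, Perron--Frobenius on the irreducible nonnegative $\mathbf{N_{rw}}$ gives simplicity of the eigenvalue $1$ (hence $\sigma_2 < 1$ and the identification of $\mathbf{u}_1$), and the positive definiteness of the signless Laplacian $\mathbf{D_s}+\mathbf{A_s}$ on a connected non-bipartite graph (or, alternatively, primitivity) yields $\sigma_{n_s} > -1$. The paper's version is shorter because it leans on standard references; yours is self-contained, makes explicit where connectedness and non-bipartiteness each enter, and delivers a bit more (algebraic simplicity of $\sigma_1$ with a positive Perron vector) than the lemma strictly requires. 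You are also right to flag that the only real subtlety is the a priori non-symmetry of $\mathbf{N_{rw}}$, which is precisely what the similarity step resolves in both treatments.
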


\begin{proof}
By Eq. (\ref{eq:Nrw}), we know that $\mathbf{L_{rw}}$ and $\mathbf{N_{rw}}$ share the same eigenvectors $\mathbf{u}_1 , \mathbf{u}_2, \ldots, \mathbf{u}_{n_s}$ and the corresponding eigenvalues of $\mathbf{L_{rw}}$ are $\lambda_1 \leq \lambda_2 \leq ... \leq \lambda_{n_s}$ where $\lambda_i = 1 - \sigma_i~(1 \leq i \leq n_s)$.

According to Proposition 3 of \cite{Spectral2007}, $\mathbf{L_{sym}}$ and $\mathbf{L_{rw}}$ share the $n_s$ non-negative eigenvalues $0 \leq \lambda_1 \leq \lambda_2 \leq ... \leq \lambda_{n_s}$ and the corresponding eigenvectors of $\mathbf{L_{sym}}$ are $\mathbf{D_s^{\frac{1}{2}}u}_1 , \mathbf{D_s^{\frac{1}{2}}u}_2, \ldots, \mathbf{D_s^{\frac{1}{2}}u}_{n_s}$.

From Theorem 8.1.1 of \cite{golub2012matrix}, there exists an orthogonal matrix $\mathbf{Q} = [\mathbf{q}_1,\mathbf{q}_2,...,\mathbf{q}_{n_s}]$ such that
\[
\mathbf{Q^TL_{sym}Q} = diag(\lambda_1,\lambda_2,...,\lambda_{n_s}).
\]
It shows that $\mathbf{q}_1,\mathbf{q}_2,...,\mathbf{q}_{n_s}$ are linearly independent eigenvectors of $\mathbf{L_{sym}}$, so $\mathbf{D_s^{\frac{1}{2}}u}_1 , \mathbf{D_s^{\frac{1}{2}}u}_2, \ldots, \mathbf{D_s^{\frac{1}{2}}u}_{n_s}$ are linearly independent. As $G_s$ is a connected graph, $\mathbf{D_s^{\frac{1}{2}}}$ is invertible, and it is obvious that $\mathbf{u}_1 , \mathbf{u}_2, \ldots, \mathbf{u}_{n_s}$ are linearly independent.

Additionally, as $\mathbf{L_{rw}}\frac{\mathbf{e}}{\Vert \mathbf{e} \Vert_2} = \mathbf{0}$, we have $1 - \sigma_1 = \lambda_1 = 0$ and $\mathbf{u}_1 = \frac{\mathbf{e}}{\Vert \mathbf{e} \Vert_2}$, so $\sigma_1 = 1$.

As $G_s$ is a connected and non-bipartite graph, by Lemma 1.7 of \cite{chung1997spectral}, we have $1-\sigma_2 = \lambda_2 > 0$ and $1-\sigma_{n_s} = \lambda_{n_s} < 2$. Therefore, $\sigma_2 < 1$ and $\sigma_{n_s} > -1$.
\end{proof}

\begin{theorem}
Let $G_s = (V_s,E_s)$ be a connected and non-bipartite graph with $n_s$ nodes and $m_s$ edges, when $k\rightarrow\infty$, $\mathbf{p}_k$ defined by Eq. (\ref{eq:PonNrw}) converges to $\alpha_1\mathbf{u}_1$ where $\alpha_1$ is the nonzero weighting portion of $\mathbf{p}_0$ on the eigenvector $\mathbf{u}_1$.
\label{theoremone}
\end{theorem}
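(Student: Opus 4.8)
The plan is to diagonalize $\mathbf{N_{rw}}$ using Lemma~\ref{lemmaone} and then take the $k\to\infty$ limit of the power iteration mode by mode. First, by Lemma~\ref{lemmaone} the eigenvectors $\mathbf{u}_1,\dots,\mathbf{u}_{n_s}$ are linearly independent and hence form a basis of $\mathbb{R}^{n_s}$, so the initial density has a unique expansion $\mathbf{p}_0=\sum_{i=1}^{n_s}\alpha_i\mathbf{u}_i$; this $\alpha_1$ is precisely the ``weighting portion of $\mathbf{p}_0$ on $\mathbf{u}_1$'' referred to in the statement. Applying $\mathbf{N_{rw}}$ $k$ times, using $\mathbf{N_{rw}}\mathbf{u}_i=\sigma_i\mathbf{u}_i$ and $\sigma_1=1$ from Lemma~\ref{lemmaone}, gives
\[
\mathbf{p}_k=(\mathbf{N_{rw}})^k\mathbf{p}_0=\sum_{i=1}^{n_s}\alpha_i\sigma_i^k\mathbf{u}_i=\alpha_1\mathbf{u}_1+\sum_{i=2}^{n_s}\alpha_i\sigma_i^k\mathbf{u}_i .
\]

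Next I would bound the tail. Lemma~\ref{lemmaone} gives $1>\sigma_2\ge\cdots\ge\sigma_{n_s}>-1$, so $|\sigma_i|<1$ for every $i\ge2$ and thus $\sigma_i^k\to0$ as $k\to\infty$. Since the $\mathbf{u}_i$ are normalized, $\|\mathbf{p}_k-\alpha_1\mathbf{u}_1\|_2\le\sum_{i=2}^{n_s}|\alpha_i|\,|\sigma_i|^k\to0$, which is exactly the asserted convergence $\mathbf{p}_k\to\alpha_1\mathbf{u}_1$.

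Finally I would confirm $\alpha_1\ne0$, the only step with any real content. The identity $\mathbf{D_s}\mathbf{N_{rw}}=\mathbf{A_s}=\mathbf{A_s}^T=(\mathbf{N_{rw}})^T\mathbf{D_s}$ shows that $\mathbf{N_{rw}}$ is self-adjoint for the inner product $\langle\mathbf{x},\mathbf{y}\rangle_{\mathbf{D_s}}:=\mathbf{x}^T\mathbf{D_s}\mathbf{y}$ (an inner product since $\mathbf{D_s}$ is positive definite). As $\sigma_1=1$ is a simple eigenvalue, $\mathbf{u}_1$ is $\mathbf{D_s}$-orthogonal to $\mathrm{span}\{\mathbf{u}_2,\dots,\mathbf{u}_{n_s}\}$, so pairing the expansion of $\mathbf{p}_0$ with $\mathbf{u}_1$ in this inner product and using $\mathbf{u}_1=\mathbf{e}/\|\mathbf{e}\|_2$ yields
\[
\alpha_1=\frac{\mathbf{p}_0^T\mathbf{D_s}\mathbf{u}_1}{\mathbf{u}_1^T\mathbf{D_s}\mathbf{u}_1}=\frac{1}{\|\mathbf{e}\|_2\,\mathbf{u}_1^T\mathbf{D_s}\mathbf{u}_1}\sum_{v\in V_s}d_v\,(\mathbf{p}_0)_v .
\]
Because $\mathbf{p}_0$ is the nonzero, nonnegative uniform distribution on the seeds and every degree $d_v>0$, this sum is strictly positive, so $\alpha_1>0$ and the limit is a genuinely nonzero multiple of the all-ones direction $\mathbf{u}_1$. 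The subtlety here is that $\mathbf{N_{rw}}$ is not symmetric, so its eigenvectors are orthogonal only in the $\mathbf{D_s}$ inner product, not the Euclidean one; equivalently one could invoke Perron--Frobenius for the irreducible, aperiodic chain $\mathbf{N_{rw}}$ to reach the same conclusion.
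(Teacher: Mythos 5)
Your proposal is correct, and its skeleton is the same as the paper's: expand $\mathbf{p}_0$ in the eigenbasis guaranteed by Lemma~\ref{lemmaone}, apply $(\mathbf{N_{rw}})^k$ term by term, and use $1=\sigma_1>\sigma_2\geq\cdots\geq\sigma_{n_s}>-1$ to kill every mode except the first. Where you genuinely diverge is the justification that $\alpha_1\neq 0$. The paper simply notes that $\mathbf{u}_1=\mathbf{e}/\Vert\mathbf{e}\Vert_2$ and $\mathbf{p}_0$ are not (Euclidean-)orthogonal and declares $\alpha_1$ nonzero; since the $\mathbf{u}_i$ are not Euclidean-orthogonal (as $\mathbf{N_{rw}}$ is not symmetric), non-orthogonality to $\mathbf{u}_1$ does not by itself force the coefficient of $\mathbf{u}_1$ in a non-orthogonal basis expansion to be nonzero, so the paper's one-line justification is informal at best. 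Your argument via the $\mathbf{D_s}$-weighted inner product, in which $\mathbf{N_{rw}}$ is self-adjoint and $\mathbf{u}_1$ is $\mathbf{D_s}$-orthogonal to the remaining eigenvectors (equivalently, pairing $\mathbf{p}_0$ against the left eigenvector, the degree vector, or invoking Perron--Frobenius), yields the explicit formula $\alpha_1=\mathbf{p}_0^T\mathbf{D_s}\mathbf{u}_1/\mathbf{u}_1^T\mathbf{D_s}\mathbf{u}_1>0$ and thereby closes this gap rigorously. So your route buys a watertight proof of the one claim the paper leaves to intuition, at the cost of a slightly longer argument; the paper's version is shorter but leans on an orthogonality intuition that is only valid after passing to the $\mathbf{D_s}$ geometry you make explicit.
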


\begin{proof}
By Lemma \ref{lemmaone}, $\mathbf{u}_1 , \mathbf{u}_2, \ldots, \mathbf{u}_{n_s}$ are linearly independent normalized eigenvectors of $\mathbf{N_{rw}}$, so there exist $\alpha_1,\alpha_2,...,\alpha_{n_s}$ such that $\mathbf{p}_0 = \sum_{i=1}^{n_s}\alpha_i\mathbf{u}_i$.

As $\mathbf{u}_1 = \frac{\mathbf{e}}{\Vert \mathbf{e} \Vert_2}$ and $\mathbf{p}_0$ is the initial probability density evenly assigned on the seeds, $\mathbf{u}_1$ and $\mathbf{p}_0$ are not orthogonal. It shows that $\alpha_1$ is the nonzero weighting portion of $\mathbf{p}_0$ on the eigenvector $\mathbf{u}_1$. Then
\begin{equation*}
\mathbf{p}_k = \mathbf{(N_{rw})}^k\mathbf{p}_0 = \sum_{i = 1}^{n_s}\alpha_i \sigma_i^k\mathbf{u}_i
 = \sigma_1^k\sum_{i = 1}^{n_s}\alpha_i(\frac{\sigma_i}{\sigma_1})^k\mathbf{u}_i.
\end{equation*}
Since $1 = \sigma_1 > \sigma_2 \geq ... \geq \sigma_{n_s} > -1$ and $\alpha_1\neq 0$, for all $i = 2,3,...,n_s$, we have
\[
\lim_{k\rightarrow\infty}(\frac{\sigma_i}{\sigma_1})^k = 0,
\]
and
\[
\lim_{k\rightarrow\infty}\mathbf{p}_k = \lim_{k\rightarrow\infty}\alpha_1\sigma_1^k\mathbf{u}_1 = \alpha_1\mathbf{u}_1.
\]
\end{proof}

Obviously, Lemma \ref{lemmaone} and Theorem \ref{theoremone} also hold for $\mathbf{N_{rw}}$ based on Eq. (\ref{eq:lightlazyRW}) or Eq. (\ref{eq:lazyRW}), which is the transition matrix of modified graph with a weighting loop at each node.

By Theorem \ref{theoremone}, we have the following corollary.
\begin{corollary}
Suppose matrix $\mathbf{N_{rw}}$ defined by Eq. (\ref{eq:per}) has $n_s$ eigenvalues $\mu_1, \mu_2, ... , \mu_{n_s}$ with an associated collection of linearly independent eigenvectors $\{\mathbf{v}_1, \mathbf{v}_2, ... , \mathbf{v}_{n_s}\}$. Moreover, we assume that $|\mu_1| > |\mu_2| \geq ... \geq |\mu_{n_s}| $. Then we have
\[
\lim_{k\rightarrow\infty}\mathbf{p}_k = \lim_{k\rightarrow\infty}\beta_1\mu_1^k\mathbf{v}_1,
\]
where $\mathbf{p}_k$ defined by Eq. (\ref{eq:PonNrw}) and $\beta_1$ is the weighting portion of $\mathbf{p}_0$ on the eigenvector $\mathbf{v}_1$.
\label{corone}
\end{corollary}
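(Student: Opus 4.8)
The plan is to run the same power-method argument that proved Theorem~\ref{theoremone}, now using only the two hypotheses of the corollary: that $\{\mathbf{v}_1,\dots,\mathbf{v}_{n_s}\}$ is linearly independent and that $|\mu_1|$ strictly dominates $|\mu_2|,\dots,|\mu_{n_s}|$. First I would use linear independence to conclude that these $n_s$ eigenvectors form a basis of $\mathbb{R}^{n_s}$, so that the initial seed density admits a unique expansion $\mathbf{p}_0 = \sum_{i=1}^{n_s}\beta_i\mathbf{v}_i$; the coefficient $\beta_1$ is exactly the ``weighting portion of $\mathbf{p}_0$ on $\mathbf{v}_1$'' named in the statement. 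I would also note, as in Theorem~\ref{theoremone}, that for the meaningful range $\alpha\in[0,1)$ the matrix $\mathbf{N_{rw}}$ of Eq.~(\ref{eq:per}) is an irreducible nonnegative matrix, so $\mathbf{v}_1$ may be taken entrywise positive and $\beta_1\neq 0$ for a nonzero nonnegative seed density $\mathbf{p}_0$.

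Next I would iterate the inverse-walk recursion of Eq.~(\ref{eq:PonNrw}): applying $\mathbf{N_{rw}}\mathbf{v}_i = \mu_i\mathbf{v}_i$ term by term gives $\mathbf{p}_k = \mathbf{N_{rw}}^k\mathbf{p}_0 = \sum_{i=1}^{n_s}\beta_i\mu_i^k\mathbf{v}_i$, hence
\[
\mathbf{p}_k - \beta_1\mu_1^k\mathbf{v}_1 = \sum_{i=2}^{n_s}\beta_i\mu_i^k\mathbf{v}_i = \mu_1^k\sum_{i=2}^{n_s}\beta_i\Bigl(\tfrac{\mu_i}{\mu_1}\Bigr)^k\mathbf{v}_i .
\]
Because $|\mu_i/\mu_1|<1$ for every $i\ge 2$, each $(\mu_i/\mu_1)^k\to 0$, so the finite sum in parentheses tends to $\mathbf{0}$; and since the personalized-PageRank matrix of Eq.~(\ref{eq:per}) is entrywise nonnegative with all row sums at most one, Gershgorin's theorem gives $|\mu_1|\le 1$, so the prefactor $\mu_1^k$ stays bounded. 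Therefore $\mathbf{p}_k - \beta_1\mu_1^k\mathbf{v}_1 \to \mathbf{0}$, which is the asserted identity $\lim_{k\to\infty}\mathbf{p}_k = \lim_{k\to\infty}\beta_1\mu_1^k\mathbf{v}_1$.

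The only genuinely delicate point — and the one I would flag explicitly — is the interpretation of the conclusion. Unlike $\sigma_1=1$ in Lemma~\ref{lemmaone}, here $\mu_1$ need not equal $1$, so the sequence $\beta_1\mu_1^k\mathbf{v}_1$ may oscillate (if $\mu_1<0$) rather than converge in the ordinary sense; the faithful statement being proved is that $\mathbf{p}_k$ becomes asymptotically aligned with the dominant eigenvector, i.e.\ $\mathbf{p}_k-\beta_1\mu_1^k\mathbf{v}_1\to\mathbf{0}$ (equivalently $\mathbf{p}_k/\mu_1^k\to\beta_1\mathbf{v}_1$). Beyond this bookkeeping there is no real obstacle: the argument is a direct specialization of the proof of Theorem~\ref{theoremone}, with the dominance hypothesis $|\mu_1|>|\mu_i|$ playing the role that $\sigma_1>|\sigma_i|$ played there, so the ``hard part'' is merely stating the limit correctly and reading off $\beta_1\neq0$ and $|\mu_1|\le 1$ from the structure of the transition matrix.
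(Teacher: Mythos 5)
Your proposal is correct and follows essentially the same route the paper intends: the corollary is obtained exactly as in the proof of Theorem~\ref{theoremone}, by expanding $\mathbf{p}_0$ in the basis of linearly independent eigenvectors, writing $\mathbf{p}_k=\sum_i\beta_i\mu_i^k\mathbf{v}_i=\mu_1^k\sum_i\beta_i(\mu_i/\mu_1)^k\mathbf{v}_i$, and letting the strict dominance $|\mu_1|>|\mu_i|$ kill the trailing terms. Your additional remarks (the Gershgorin bound $|\mu_1|\le 1$, the Perron--Frobenius argument for $\beta_1\neq 0$, and the careful reading of the limit as asymptotic alignment) are sound refinements but do not change the argument, which matches the paper's (implicit) proof.
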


Corollary \ref{corone} indicates that $\mathbf{p}_k$ converges to $\beta_1\mathbf{v}_1$ if $\mu_1 = 1$ and $\beta_1 \neq 0$.

Below, we provide some discussion on the convergence of $\mathbf{p}_k$ defined by Eq. (\ref{eq:PonNrwT}). Firstly, we give the following theorem.
\begin{theorem}
Every real square matrix $\mathbf{X}$ is a product of two real symmetric matrices, $\mathbf{X}=\mathbf{YZ}$ where $\mathbf{Y}$ is invertible.
\label{matfac}
\end{theorem}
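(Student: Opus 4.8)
The plan is to prove this classical result of Frobenius in two stages: reduce a general real square matrix to a direct sum of companion matrices by a similarity transformation, and then handle a single companion block by exhibiting an explicit symmetric invertible ``symmetrizer''.

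First I would record that the property ``$\mathbf{X}$ is a product of two real symmetric matrices, the first of which is invertible'' is preserved under direct sums and under similarity. Direct sums are immediate: if $\mathbf{C}_i=\mathbf{Y}_i\mathbf{Z}_i$ for each $i$, then $\operatorname{diag}(\mathbf{C}_1,\dots,\mathbf{C}_r)=\operatorname{diag}(\mathbf{Y}_1,\dots,\mathbf{Y}_r)\,\operatorname{diag}(\mathbf{Z}_1,\dots,\mathbf{Z}_r)$, with the first factor invertible whenever each $\mathbf{Y}_i$ is. Similarity is the useful observation: if $\mathbf{X}=\mathbf{P}^{-1}\mathbf{C}\mathbf{P}$ and $\mathbf{C}=\mathbf{Y}_0\mathbf{Z}_0$ with $\mathbf{Y}_0$ symmetric and invertible and $\mathbf{Z}_0$ symmetric, then $\mathbf{X}=(\mathbf{P}^{-1}\mathbf{Y}_0\mathbf{P}^{-\mathsf{T}})(\mathbf{P}^{\mathsf{T}}\mathbf{Z}_0\mathbf{P})$, and both parenthesized matrices are symmetric, with the first still invertible. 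Since every real square matrix is similar over $\mathbb{R}$ to the direct sum of the companion matrices of its invariant factors (rational canonical form), it suffices to treat one companion block.

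For the companion matrix $\mathbf{C}$ of a monic polynomial $p(t)=t^{n}+a_{n-1}t^{n-1}+\dots+a_1t+a_0$ (so that $\mathbf{C}e_i=e_{i+1}$ for $i<n$ and $\mathbf{C}e_n=-\sum_{\ell=0}^{n-1}a_\ell e_{\ell+1}$), I would form the Hankel matrix $\mathbf{Y}=(s_{i+j-2})_{1\le i,j\le n}$, where $s_0=\dots=s_{n-2}=0$, $s_{n-1}=1$, and $s_k=-\sum_{\ell=0}^{n-1}a_\ell s_{k-n+\ell}$ for $k\ge n$. By construction $\mathbf{Y}$ is symmetric, and it is anti-triangular with unit anti-diagonal, hence $\det\mathbf{Y}=\pm1$ and $\mathbf{Y}$ is invertible. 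Granting the ``symmetrizer identity'' $\mathbf{Y}\mathbf{C}=\mathbf{C}^{\mathsf{T}}\mathbf{Y}$, we get $(\mathbf{Y}\mathbf{C})^{\mathsf{T}}=\mathbf{C}^{\mathsf{T}}\mathbf{Y}^{\mathsf{T}}=\mathbf{C}^{\mathsf{T}}\mathbf{Y}=\mathbf{Y}\mathbf{C}$, so $\mathbf{Y}\mathbf{C}$ is symmetric, and $\mathbf{C}=\mathbf{Y}^{-1}(\mathbf{Y}\mathbf{C})$ exhibits $\mathbf{C}$ as the product of the symmetric invertible matrix $\mathbf{Y}^{-1}$ and the symmetric matrix $\mathbf{Y}\mathbf{C}$. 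Combined with the reduction step, and since $\mathbf{P}$ and all the $s_k$ are real, this proves the theorem with genuinely real symmetric factors.

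The main obstacle is the symmetrizer identity $\mathbf{Y}\mathbf{C}=\mathbf{C}^{\mathsf{T}}\mathbf{Y}$. Rather than expand the matrix product entry by entry, I would argue that it is equivalent to $\mathbf{C}$ being self-adjoint for the symmetric bilinear form with Gram matrix $\mathbf{Y}$, i.e.\ $e_i^{\mathsf{T}}\mathbf{Y}(\mathbf{C}e_j)=(\mathbf{C}e_i)^{\mathsf{T}}\mathbf{Y}e_j$ for all $i,j$. Using $\mathbf{Y}_{ij}=s_{i+j-2}$ and the action of $\mathbf{C}$ on the standard basis, the ``interior'' cases $i,j<n$ both collapse to the tautology $s_{i+j-1}=s_{i+j-1}$, the ``boundary'' cases $i=n$ or $j=n$ reduce precisely to the defining recurrence $s_k=-\sum_{\ell=0}^{n-1}a_\ell s_{k-n+\ell}$ (which is exactly how the sequence $(s_k)$ was chosen, and it also shows $(s_k)$ is well defined), and the case $i=j=n$ holds by symmetry of the form. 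A secondary point I would state carefully is that the rational canonical form and its transforming matrix are available over $\mathbb{R}$, so nothing leaves the reals. Everything else is routine bookkeeping.
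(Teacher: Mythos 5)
Your proposal is correct, but note that the paper itself does not prove Theorem~\ref{matfac} at all: it explicitly omits the argument and refers the reader to the cited reference \cite{bosch1986factorization}, so there is no in-paper proof to match. What you have written is a complete, self-contained version of the classical Frobenius argument, and it checks out: the similarity step $\mathbf{X}=\mathbf{P}^{-1}\mathbf{C}\mathbf{P}=(\mathbf{P}^{-1}\mathbf{Y}_0\mathbf{P}^{-\mathsf{T}})(\mathbf{P}^{\mathsf{T}}\mathbf{Z}_0\mathbf{P})$ preserves symmetry and invertibility of the first factor, the rational canonical form is indeed available over the ground field (so nothing leaves the reals), the Hankel matrix $\mathbf{Y}=(s_{i+j-2})$ is symmetric and anti-triangular with unit anti-diagonal (hence $\det\mathbf{Y}=\pm1$), and the identity $\mathbf{Y}\mathbf{C}=\mathbf{C}^{\mathsf{T}}\mathbf{Y}$ reduces, exactly as you say, to the tautology $s_{i+j-1}=s_{i+j-1}$ in the interior, to the defining recurrence $s_k=-\sum_{\ell=0}^{n-1}a_\ell s_{k-n+\ell}$ on the boundary rows and columns, and to symmetry of the form when $i=j=n$; only the values $s_0,\dots,s_{2n-2}$ are ever needed and they are well defined by induction. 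Writing $\mathbf{C}=\mathbf{Y}^{-1}(\mathbf{Y}\mathbf{C})$ then gives the factorization for a companion block with the invertible symmetric factor first, and the direct-sum and similarity reductions finish the theorem. Compared with simply citing the literature as the paper does, your route buys an explicit construction (the symmetrizer is a concrete Hankel/Bezoutian-type matrix of unit anti-diagonal, so the invertible factor can even be taken with determinant $\pm1$) and makes transparent that the factorization is achieved entirely over the reals; it is essentially the same argument that underlies the cited note, so there is no conflict in approach, only a difference in how much is spelled out.
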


We will not include the proof of Theorem \ref{matfac} here. The interested reader is referred to \cite{bosch1986factorization}.

By Theorem \ref{matfac}, we have $\mathbf{N_{rw}}=\mathbf{PU}$ where $\mathbf{P}$ and $\mathbf{U}$ are symmetric matrices, and $\mathbf{P}$ is invertible. Then,
\begin{equation}
\mathbf{N_{rw}}=\mathbf{PU}=\mathbf{P(UP)P^{-1}}=\mathbf{P(PU)^TP^{-1}}=\mathbf{PN_{rw}^TP^{-1}}.
\label{similarity}
\end{equation}
According to Eq. (\ref{similarity}), we have
\begin{equation}
\mathbf{N_{rw}v} = \lambda \mathbf{v} \quad \Leftrightarrow \quad  \mathbf{N_{rw}^T(P^{-1}v)} = \lambda \mathbf{(P^{-1}v)},
\label{equal}
\end{equation}
where $\bf{v}$ is a nonzero vector. It shows that $\mathbf{N_{rw}}$  and $\mathbf{N_{rw}^T}$ share the same set of eigenvalues and the corresponding eigenvector of $\mathbf{N_{rw}^T}$ is $\mathbf{P^{-1}v}$ where $\mathbf{v}$ is the eigenvector of $\mathbf{N_{rw}}$.

By Lemma \ref{lemmaone}, we know that $\mathbf{N_{rw}}$ defined by Eq. (\ref{Eq1}) has $n_s$ linearly independent normalized eigenvectors $\mathbf{u}_1 , \mathbf{u}_2, \ldots, \mathbf{u}_{n_s}$. Then by Eq. (\ref{equal}), $\mathbf{N_{rw}^T}$ has $n_s$ linearly independent eigenvectors $\mathbf{P^{-1}u}_1 , \mathbf{P^{-1}u}_2, \ldots, \mathbf{P^{-1}u}_{n_s}$. By Theorem \ref{theoremone}, we have the following theorem.
\begin{theorem}
Let $G_s = (V_s,E_s)$ be a connected and non-bipartite graph with $n_s$ nodes and $m_s$ edges, when $k\rightarrow\infty$, $\mathbf{p}_k$ defined by Eq. (\ref{eq:PonNrwT}) converges to $\gamma_1\mathbf{P^{-1}u}_1$ where $\gamma_1$ is the nonzero weighting portion of $\mathbf{p}_0$ on the eigenvector $\mathbf{P^{-1}u}_1$.
\label{theoremtwo}
\end{theorem}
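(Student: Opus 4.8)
The plan is to run the same power-iteration argument as in the proof of Theorem~\ref{theoremone}, but now for $\mathbf{N_{rw}^T}$ and in its (non-orthogonal) eigenbasis. From Eq.~(\ref{equal}) together with Lemma~\ref{lemmaone}, the matrix $\mathbf{N_{rw}^T}$ shares the eigenvalues $1 = \sigma_1 > \sigma_2 \geq \cdots \geq \sigma_{n_s} > -1$ of $\mathbf{N_{rw}}$ and has the $n_s$ linearly independent eigenvectors $\mathbf{w}_i := \mathbf{P^{-1}u}_i$. Since these form a basis, write $\mathbf{p}_0 = \sum_{i=1}^{n_s}\gamma_i\mathbf{w}_i$; then
\[
\mathbf{p}_k = (\mathbf{N_{rw}^T})^k\mathbf{p}_0 = \sum_{i=1}^{n_s}\gamma_i\sigma_i^k\mathbf{w}_i = \sigma_1^k\sum_{i=1}^{n_s}\gamma_i\Bigl(\tfrac{\sigma_i}{\sigma_1}\Bigr)^k\mathbf{w}_i .
\]
Because $\sigma_1 = 1$ while $|\sigma_i| < 1$ for every $i \geq 2$, all terms with $i \geq 2$ tend to $\mathbf{0}$, so $\mathbf{p}_k \to \gamma_1\mathbf{w}_1 = \gamma_1\mathbf{P^{-1}u}_1$.

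What remains — and this is the step that needs care — is to verify $\gamma_1 \neq 0$. In Theorem~\ref{theoremone} this was immediate because the dominant eigenvector $\mathbf{u}_1 = \mathbf{e}/\|\mathbf{e}\|_2$ is strictly positive \emph{and} the eigenvectors of $\mathbf{N_{rw}}$ are mutually orthogonal in the $\mathbf{D_s}$-inner product, so a nonnegative $\mathbf{p}_0$ cannot have zero component along it. Here the $\mathbf{w}_i$ are not mutually orthogonal, so mere non-orthogonality of $\mathbf{p}_0$ and $\mathbf{w}_1$ does not isolate $\gamma_1$; instead I would isolate it with the biorthogonal partner of $\mathbf{w}_1$, i.e.\ the left eigenvector of $\mathbf{N_{rw}^T}$ for the (simple) eigenvalue $\sigma_1 = 1$. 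That left eigenvector is precisely the right eigenvector of $\mathbf{N_{rw}}$ for eigenvalue $1$, which by Lemma~\ref{lemmaone} is $\mathbf{e}$. Since a left eigenvector for $\sigma_1$ is orthogonal to every right eigenvector $\mathbf{w}_j$ with $j \geq 2$, applying $\mathbf{e^T}$ to $\mathbf{p}_0 = \sum_i\gamma_i\mathbf{w}_i$ gives $\mathbf{e^T p}_0 = \gamma_1\,\mathbf{e^T w}_1$. Now $\mathbf{e^T p}_0 = 1$ since $\mathbf{p}_0$ is a probability vector, and $\mathbf{e^T w}_1 \neq 0$ because $\mathbf{w}_1$ is, up to scaling, the unique eigenvector of $\mathbf{N_{rw}^T} = \mathbf{A_s D_s^{-1}}$ for eigenvalue $1$, namely the strictly positive degree vector $\mathbf{D_s e}$ (indeed $\mathbf{A_s D_s^{-1}}(\mathbf{D_s e}) = \mathbf{A_s e} = \mathbf{D_s e}$). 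Hence $\gamma_1 = 1/(\mathbf{e^T w}_1) \neq 0$, which completes the limit and justifies the phrase ``nonzero weighting portion''.

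A shorter route that quotes Theorem~\ref{theoremone} more directly is available via the similarity $\mathbf{N_{rw}^T} = \mathbf{P^{-1}N_{rw}P}$ of Eq.~(\ref{similarity}): then $\mathbf{p}_k = \mathbf{P^{-1}(N_{rw})^k P p}_0$, and one applies the (routine vector-valued) generalization of Theorem~\ref{theoremone} to the fixed vector $\mathbf{P p}_0$, whose $\mathbf{u}_1$-component is again nonzero by the same observation that $\mathbf{P}(\mathbf{D_s e}) \propto \mathbf{e}$. In either version, non-bipartiteness of $G_s$ enters only through Lemma~\ref{lemmaone}, guaranteeing that $\sigma_1 = 1$ is strictly dominant in modulus, which is exactly what makes the iteration converge rather than oscillate; I expect the $\gamma_1 \neq 0$ bookkeeping to be the only place where a genuinely new observation beyond Theorem~\ref{theoremone} is needed.
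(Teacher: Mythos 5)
Your proposal is correct, and its core is exactly the power iteration the paper intends: the paper omits the proof of Theorem~\ref{theoremtwo}, saying only that it is ``similar to Theorem~\ref{theoremone}'', i.e.\ expand $\mathbf{p}_0$ in the linearly independent eigenbasis $\mathbf{P^{-1}u}_1,\ldots,\mathbf{P^{-1}u}_{n_s}$ of $\mathbf{N_{rw}^T}$ furnished by Theorem~\ref{matfac} and Eq.~(\ref{equal}), and let the strict dominance $1=\sigma_1>\sigma_2\geq\cdots\geq\sigma_{n_s}>-1$ from Lemma~\ref{lemmaone} annihilate all non-leading terms. Where you genuinely add something is the verification that $\gamma_1\neq 0$: the paper simply asserts this in the statement, and the argument it used for Theorem~\ref{theoremone} (a nonnegative $\mathbf{p}_0$ cannot be orthogonal to the positive vector $\mathbf{u}_1$) does not transfer verbatim, exactly as you observe, because the vectors $\mathbf{P^{-1}u}_i$ are not mutually orthogonal. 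Your biorthogonality argument is the right fix: $\mathbf{e}$ is a left eigenvector of $\mathbf{N_{rw}^T}$ for the simple eigenvalue $1$, hence $\mathbf{e^T}(\mathbf{P^{-1}u}_j)=0$ for $j\geq 2$, so $\gamma_1\,\mathbf{e^T}(\mathbf{P^{-1}u}_1)=\mathbf{e^Tp}_0=1$, and $\mathbf{e^T}(\mathbf{P^{-1}u}_1)\neq 0$ because $\mathbf{P^{-1}u}_1$ must be proportional to the strictly positive degree vector $\mathbf{D_se}$, the unique (up to scale) eigenvector of $\mathbf{N_{rw}^T}=\mathbf{A_sD_s^{-1}}$ for eigenvalue $1$. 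This buys a complete justification of the ``nonzero weighting portion'' claim---and incidentally identifies the limit $\gamma_1\mathbf{P^{-1}u}_1$ as the stationary distribution $\mathbf{D_se}/\mathrm{vol}(V_s)$---whereas the paper leaves that point implicit; your alternative route through the similarity $\mathbf{N_{rw}^T}=\mathbf{P^{-1}N_{rw}P}$ of Eq.~(\ref{similarity}) is likewise sound, and non-bipartiteness indeed enters only through Lemma~\ref{lemmaone}, just as in the paper's intended argument.
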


The proof of Theorem \ref{theoremtwo} is similar to Theorem \ref{theoremone}, hence we omit the details here.

Obviously, Theorem \ref{theoremtwo} also holds for $\mathbf{N_{rw}}$ based on Eq. (\ref{eq:lightlazyRW}) or Eq. (\ref{eq:lazyRW}), which is the transition matrix of modified graph with a weighting loop at each node.

By Eq. (\ref{equal}) and Theorem \ref{theoremtwo}, we have the following corollary.
\begin{corollary}
Suppose matrix $\mathbf{N_{rw}}$ defined by Eq. (\ref{eq:per}) has $n_s$ eigenvalues $\mu_1, \mu_2, ... , \mu_{n_s}$ with an associated collection of linearly independent eigenvectors $\{\mathbf{v}_1, \mathbf{v}_2, ... , \mathbf{v}_{n_s}\}$. Moreover, we assume that $|\mu_1| > |\mu_2| \geq ... \geq |\mu_{n_s}| $. Then we have
\[
\lim_{k\rightarrow\infty}\mathbf{p}_k = \lim_{k\rightarrow\infty}\delta_1\mu_1^k\mathbf{P^{-1}v}_1,
\]
where $\mathbf{p}_k$ defined by Eq. (\ref{eq:PonNrwT}) and $\delta_1$ is the weighting portion of $\mathbf{p}_0$ on the eigenvector $\mathbf{P^{-1}v}_1$ of $\mathbf{N_{rw}^T}$.
\label{cortwo}
\end{corollary}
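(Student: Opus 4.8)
The plan is to mirror the argument already used for Theorem~\ref{theoremtwo} and Corollary~\ref{corone}, this time applying the real symmetric factorization of Theorem~\ref{matfac} to the Personalized PageRank matrix of Eq.~(\ref{eq:per}) rather than to the standard transition matrix.

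First I would apply Theorem~\ref{matfac} to the matrix $\mathbf{N_{rw}}$ defined by Eq.~(\ref{eq:per}), writing $\mathbf{N_{rw}} = \mathbf{P U}$ with $\mathbf{P},\mathbf{U}$ symmetric and $\mathbf{P}$ invertible. Exactly as in Eq.~(\ref{similarity})--(\ref{equal}), this yields
\[
\mathbf{N_{rw}} = \mathbf{P N_{rw}^T P^{-1}},
\]
so $\mathbf{N_{rw}}$ and $\mathbf{N_{rw}^T}$ are similar; in particular they share the eigenvalues $\mu_1,\dots,\mu_{n_s}$, and $\mathbf{N_{rw}}\mathbf{v}_i = \mu_i \mathbf{v}_i$ implies $\mathbf{N_{rw}^T}(\mathbf{P^{-1} v}_i) = \mu_i (\mathbf{P^{-1} v}_i)$. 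Since $\mathbf{P^{-1}}$ is invertible and $\{\mathbf{v}_1,\dots,\mathbf{v}_{n_s}\}$ is assumed linearly independent, the vectors $\{\mathbf{P^{-1} v}_1,\dots,\mathbf{P^{-1} v}_{n_s}\}$ are $n_s$ linearly independent eigenvectors of $\mathbf{N_{rw}^T}$, hence a basis.

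Second I would expand the initial density in this basis, $\mathbf{p}_0 = \sum_{i=1}^{n_s}\delta_i \mathbf{P^{-1} v}_i$, which is precisely the definition of $\delta_1$ as the weighting portion of $\mathbf{p}_0$ on $\mathbf{P^{-1} v}_1$. Applying $(\mathbf{N_{rw}^T})^k$ and using Eq.~(\ref{eq:PonNrwT}),
\[
\mathbf{p}_k = (\mathbf{N_{rw}^T})^k \mathbf{p}_0 = \sum_{i=1}^{n_s}\delta_i \mu_i^k \mathbf{P^{-1} v}_i = \mu_1^k\Bigl(\delta_1 \mathbf{P^{-1} v}_1 + \sum_{i=2}^{n_s}\delta_i (\mu_i/\mu_1)^k \mathbf{P^{-1} v}_i\Bigr).
\]
Because $|\mu_1| > |\mu_i|$ for $i \geq 2$, each ratio $(\mu_i/\mu_1)^k \to 0$ as $k\to\infty$, so the tail sum vanishes and $\mathbf{p}_k - \delta_1 \mu_1^k \mathbf{P^{-1} v}_1 \to \mathbf{0}$, i.e. $\lim_{k\to\infty}\mathbf{p}_k = \lim_{k\to\infty}\delta_1\mu_1^k \mathbf{P^{-1} v}_1$. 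Alternatively, one can simply note that $\mathbf{N_{rw}^T}$ satisfies the hypotheses of Corollary~\ref{corone} with eigenpairs $(\mu_i,\mathbf{P^{-1} v}_i)$ and quote that result verbatim.

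There is no real obstacle here; the only points that need care are (i) checking that $\mathbf{P^{-1}}$ preserves linear independence, so that the images $\mathbf{P^{-1} v}_i$ genuinely form an eigenbasis of $\mathbf{N_{rw}^T}$, and (ii) the wording of the limit: when $|\mu_1|\geq 1$ neither side converges to a fixed vector, so the assertion is really that the \emph{difference} of the two sequences tends to $\mathbf{0}$, exactly as in Corollary~\ref{corone}, and the leading term is dominant only under the implicit non-degeneracy assumption $\delta_1 \neq 0$.
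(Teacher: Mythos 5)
Your argument is correct and is essentially the paper's own route: the paper obtains this corollary directly from the similarity relation $\mathbf{N_{rw}}=\mathbf{P\,N_{rw}^T\,P^{-1}}$ of Eq.~(\ref{equal}) together with the power-iteration argument of Theorem~\ref{theoremtwo}, which is exactly the expansion of $\mathbf{p}_0$ in the eigenbasis $\{\mathbf{P^{-1}v}_i\}$ of $\mathbf{N_{rw}^T}$ and the dominance of $\mu_1$ that you spell out. Your added remarks on preserving linear independence under $\mathbf{P^{-1}}$, on $\delta_1\neq 0$, and on reading the limit as vanishing of the difference are sensible refinements of the same proof, not a different approach.
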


Corollary \ref{cortwo} indicates that $\mathbf{p}_k$ converges to $\delta_1\mathbf{P^{-1}v}_1$ if $\mu_1 = 1$ and $\delta_1 \neq 0$.

When $k\rightarrow\infty$, Theorem \ref{theoremone} and Corollary \ref{corone} state that the local spectral subspace $\mathcal{V}_d^{(k)}$ built on ``inverse random walk'' approaches the eigenspace associated with eigenvector of $\mathbf{N_{rw}}$ with the largest eigenvalue, and Theorem \ref{theoremtwo} and Corollary \ref{cortwo} indicate that the local spectral subspace $\mathcal{V}_d^{(k)}$ built on ``regular random walk'' approaches the eigenspace associated with eigenvector of $\mathbf{N_{rw}^T}$ with the largest eigenvalue. Our interest now, though, is not in the limiting case when $k$ is large, but for a much more modest number of diffusion steps to reveal the local property around the seeds. Based on different local spectral diffusions in Eq. (\ref{Eq1}) - Eq. (\ref{eq:per}), we have a set of local spectral subspace definitions. Based on Eq. (\ref{eq:PonNrwT}) and Eq. (\ref{eq:PonNrw}), we have two sets of local spectral subspace definitions.

Experiments in Section \ref{sec:experiments} show that the definitions on $\mathbf{N_{rw}^T}$, which is on the regular random walk, is considerably better than that on $\mathbf{N_{rw}}$ for the detection accuracy. However, the definitions on $\mathbf{N_{rw}}$ corresponding to the inverse random walk also show high accuracy as compared with the baselines. Our results show that the local approximation built on ``regular random walk'' shows higher accuracy than that built on ``inverse random walk''.

\subsection{Local Community Detection}
We modify the optimization problem shown in Eq. (\ref{eq:oneNorm}) by relaxing each element in the indicator vector ${\bf y}$ to be nonnegative,
and approximating the global spectral subspace by the local spectral subspace.
Thus, we seek a relaxed sparse vector in the local spectral subspace by solving a linear programming problem.

\begin{equation}
\begin{aligned}
&\min ~~ \Vert\bf{y}\Vert_1 = \mathbf{e^Ty} \\
\emph{s.t.}& ~(1)~ \mathbf{y = V}_d^{(k)}\mathbf{u}, \\
           & ~(2)~ {\mathbf y }\geq 0, \\
           & ~(3)~y_i \geq \frac{1}{|S|},~i \in S.
\end{aligned}
\vspace{3mm}
\end{equation}

This is an $\ell_{1}$ norm approximation for finding a sparse linear coding that indicates a small community containing the seeds with $\bf{y}$ in the local spectral subspace spanned by the column vectors of $\mathbf{V}_d^{(k)}$.
The $i$th entry $y_i$ indicates the likelihood of node $i$ belonging to the target community.

1) If $|T|$ is known, we then sort the values in $\bf{y}$ in non-ascending order and select the corresponding $|T|$ nodes with the higher belonging likelihood as the output community.

2) If $|T|$ is unknown, we use a heuristic
to determine the community boundary.
We sort the nodes based on the element values of $\bf{y}$ in the decreasing order, and find a set $S_{k^*}$ with the first $k^*$ nodes having a comparatively low conductance.
Specifically, we start from an index $k_0$ where set $S_{k_0}$ contains all the seeds.
We then generate a sweep curve $\Phi(S_{k})$ by increasing index $k$.
Let $k^*$ be the value of $k$ where $\Phi(S_{k})$ achieves a first local minimum.
The set $S_{k^*}$ is regarded as the detected community.

We determine a local minima as follows.
If at some point $k^*$ when we are increasing $k$, $\Phi(S_{k})$ stops decreasing, then this $k^*$ is a candidate point for the local minimum.
If $\Phi(S_{k})$ keeps increasing after $k^*$ and it eventually becomes higher than $\beta \Phi(S_{k^*})$, then we take $k^*$ as a valid local minimum.
We experimented with several values of $\beta$ on a small trial of data and found that $\beta = 1.02$ gives good performance across all the datasets.

Denote the corresponding local community detection methods based on Eq. (\ref{Eq1}) - Eq. (\ref{eq:per}) as: LRw (LOSP based on Standard Random Walk), LLi (LOSP based on Light Lazy Random Walk), LLa (LOSP based on Lazy Random Walk) and LPr (LOSP based on PPR). 

\section{Experiments And Results}
\label{sec:experiments}
We implement the family of local spectral methods (LOSPs) in Matlab\footnote{https://github.com/PanShi2016/LOSP\_Plus}$^{,}$\footnote{https://github.com/JHL-HUST/LOSP\_Plus}
and thoroughly compare them with state-of-the-art localized community detection algorithms on the 28 LFR datasets as well as the 8 real-world networks across multiple domains.
For the 5 SNAP datasets, we randomly locate 500 labeled ground truth communities on each dataset, and randomly pick three exemplary seeds from each target community.
For the 28 LFR datasets and the 3 Biology datasets, we deal with every ground truth community and randomly pick three exemplary seeds from each ground truth community.
We pre-process all real-world datasets by sampling, and apply the local spectral methods for each network.

\subsection{Statistics on Sampling}
In order to evaluate the effectiveness of the sampling method in Algorithm \ref{alg:one}, we empirically set $(N_1,N_2,k) = (300,5000,3)$ to control the subgraph size and experiment with the upper bound of BFS rounds $t$ from 1 to 5 to extract different sampling rate on Amazon, as shown in Table \ref{sampleTest}. For notations, the coverage indicates the average fraction of ground truth covered by the sampled subgraph, and $n_s/n$ indicates the sampling rate which is the average fraction of subgraph size as compared with the original network scale.

Table \ref{sampleTest} shows that there is a $7.2\%$ significant improvement on coverage when we increase the upper bound of BFS rounds $t$ from 1 to 2, but there is only $0.4\%$ improvement when $t$ continue increases from 2 to 5. On the other hand, the sampling rate is only $0.1\%$ and the sampling procedure is very fast in 0.730 seconds for $t = 2$. For these reasons, we set $(N_1,N_2,t,k) = (300,5000,2,3)$ to trade off among the coverage, sampling rate and running time for the sampling method in our experiments.

Table \ref{realsample} provides statistics on real-world networks for the sampling method in Algorithm \ref{alg:one}.
For SNAP datasets, our sampling method has a high coverage with reasonable sample size, covering about $96\%$ ground truth with a small average sampling rate of $0.1\%$, and the sampling procedure is within 14 seconds.
For the Biology networks, which are comparatively denser, the sampled subgraph covers about $91\%$ ground truth with a relatively high sampling rate, and the sampling procedure is very fast in less than 0.2 seconds.

\begin{table}[htbp]
	\renewcommand{\arraystretch}{1.3}
	\caption{Test parameter $t$ (upper bound of BFS rounds) for the sampling on Amazon.}
	\label{sampleTest}
	\centering
	\begin{tabular}{l | c c c c c}
		\hline
		\multirow{2}{*}{\bf{Statistics}}  & \multicolumn{5}{c}{\bf{Upper bound of BFS rounds}}  \\
		&1    &2    &3    &4     &5 \\
		\hline
		\bf{Coverage} & 0.918  & 0.990  & 0.991  & 0.993  & 0.994   \\
		\hline
		$n_s$ & 13  & 34  & 70  & 184  & 312   \\
		\hline
        $n_s/n$ & 0.00004 & 0.00010 & 0.00021 & 0.00055 & 0.00093 \\
        \hline
		\bf{Time (s)} & 0.374  & 0.730  & 2.208  & 4.361  & 7.028 \\
		\hline
	\end{tabular}
\end{table}

\begin{table}[htbp]
\renewcommand{\arraystretch}{1.3}
\caption{Statistics on average values for the sampling on real-world networks.}
\label{realsample}
\centering
\begin{tabular}{ l l | r | r | r | r }
\hline
&\bf{Networks} & \bf{Coverage} & \bf{$n_s$} & \bf{$n_s/n$} & \bf{Time (s)}   \\
\hline
\bf{SNAP} &\bf{Amazon} & 0.990 & 34    & 0.0001  & 0.730 \\
& \bf{DBLP}            & 0.980 & 198   & 0.0006  & 0.720 \\
& \bf{LiveJ}           & 1.000 & 629   & 0.0002  & 19.050 \\
& \bf{YouTube}         & 0.950 & 3237  & 0.0028  & 3.760 \\
& \bf{Orkut}           & 0.870 & 4035  & 0.0013  & 44.430 \\
\hline
&\bf{Average}          & \bf{0.958} & \bf{1627}  & \bf{0.001}  & \bf{13.738} \\
\hline
\bf{Biology} &\bf{DM}  & 0.910 & 2875  & 0.1880  & 0.256 \\
& \bf{HS}              & 0.876 & 2733  & 0.2692  & 0.125 \\
& \bf{SC}              & 0.947 & 3341  & 0.6049  & 0.076 \\
\hline
&\bf{Average}          & \bf{0.911} & \bf{2983}  & \bf{0.3540}  & \bf{0.152} \\
\hline
\end{tabular}
\end{table}

\subsection{Parameter Setup}
To remove the impact of different local spectral methods in finding a local minimum for the community boundary, we use the ground truth size as a budget for parameter testing on the family of LOSP methods.
When we say LOSP, we mean the family of LOSP defined on the $\mathbf{N_{rw}^T}$ Krylov subspace, which is the normal case for random walk diffusion.
A comparison in subsection \ref{sec:LOSPEvaluation} will show that in general, LOSP defined on $\mathbf{N_{rw}^T}$ Krylov subspace outperforms that on $\mathbf{N_{rw}}$ Krylov subspace with respect to the accurate detection.

\textbf{Dimension of the subspace and diffusion steps.}
For local spectral subspace, we need to choose some modest numbers for the step $k$ of random walks and the subspace dimension $d$ such that the probability diffusion does not reach the global stationary. We did a small trial parameter study on all datasets, and found that $d = 2$ and $k = 2$ perform the best in general.

\textbf{Parameters for the random walk diffusion.}
We thoroughly evaluate different spectral diffusion methods on all datasets, as shown in Fig. \ref{fig:RealRW} and Fig. \ref{fig:LFRRW}.
The three columns correspond to light lazy random walk, lazy random walk and personalized pagerank with different $\alpha$ parameters. All three variants degenerate to the standard random walk when $\alpha = 0$.
The results show that light lazy random walk, lazy random walk and personalized pagerank are robust for different $\alpha$ parameters.
The personalized pagerank declines significantly when $\alpha = 1$ as all probability returns to the original seed set.

During the probability diffusion, light lazy random walk and lazy random walk always retain a ratio of probability on the current set of nodes to keep the detected structure to be ``local''. The personalized pagerank always returns a ratio of probability from the current set of nodes to the seed set.
Instead of retaining some probability distribution on the current set of nodes, the personalized pagerank ``shrinks'' some probability to the original seed set. Such process also wants to keep the probability distribution ``local'' but it is not continuous as compared with the previous two methods.

In the following discussion, we set $\alpha = 1$ for light lazy random walk and lazy random walk, and set $\alpha = 0.1$ for personalized pagerank.

\begin{figure*}[!t]
\centering
\subfigure{
 \includegraphics[height=1.6in,width=1.5in]{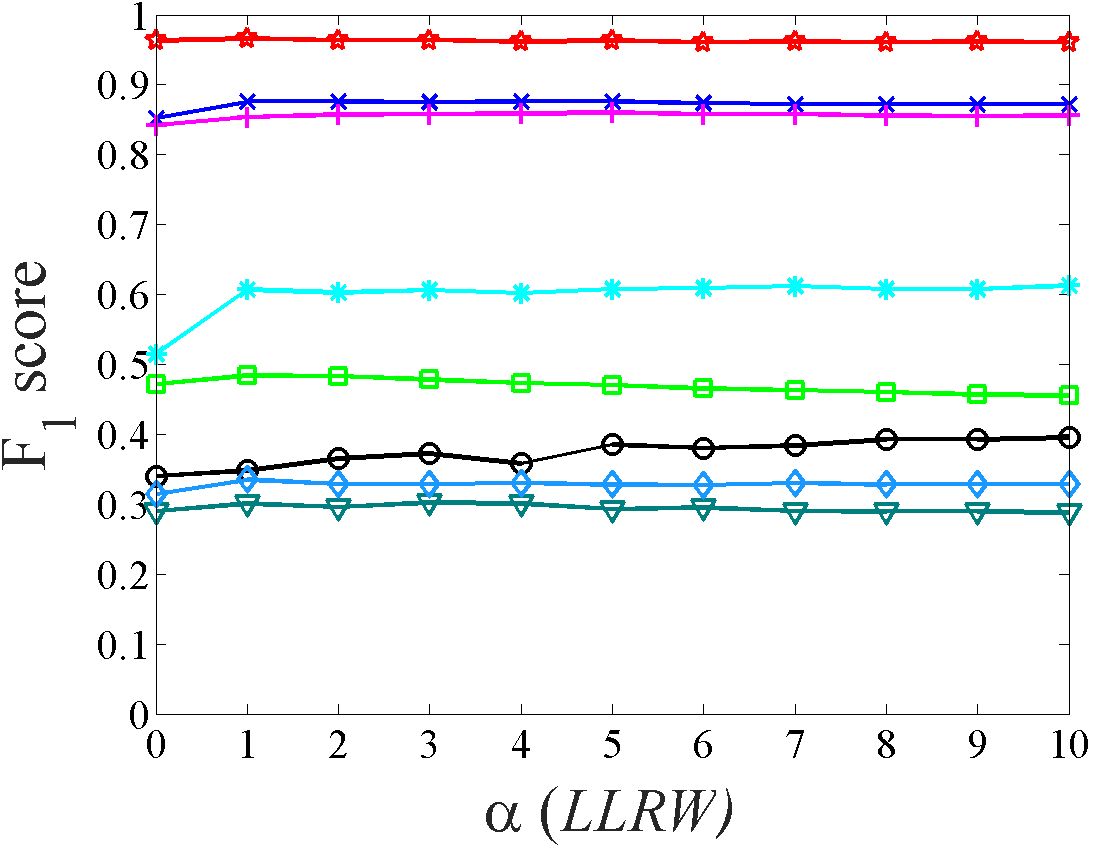}
 \includegraphics[height=1.6in,width=1.45in]{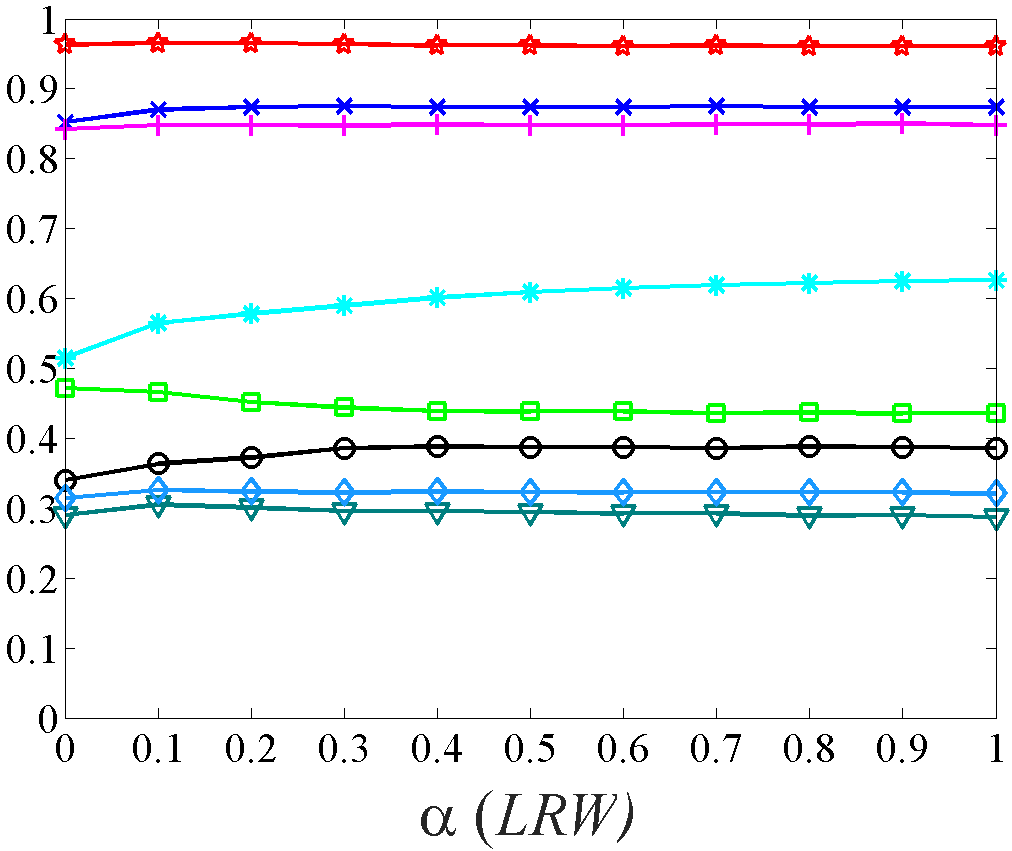}
 \includegraphics[height=1.6in,width=1.9in]{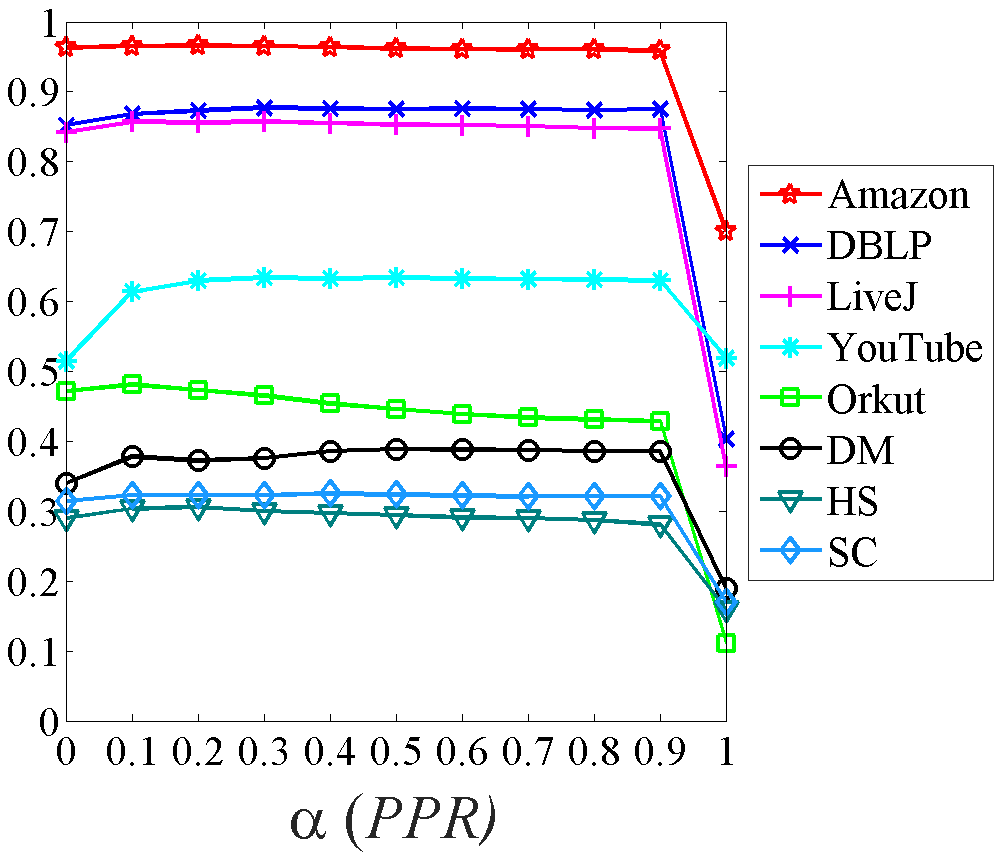}}
 \caption{Evaluation of different diffusion parameters $\alpha$ on real-world datasets (Defined on $\mathbf{N_{rw}^T}$ Krylov subspace, community size truncated by truth size).
 	The three diffusions are robust for different $\alpha$ parameters, except for $\alpha = 1$ on personalized pagerank,  in which case all probability returns to the original seed set.}
 \label{fig:RealRW}
 \end{figure*}

 \begin{figure*}[!t]
 \centering
 \subfigure{
 \includegraphics[height=1.6in,width=1.45in]{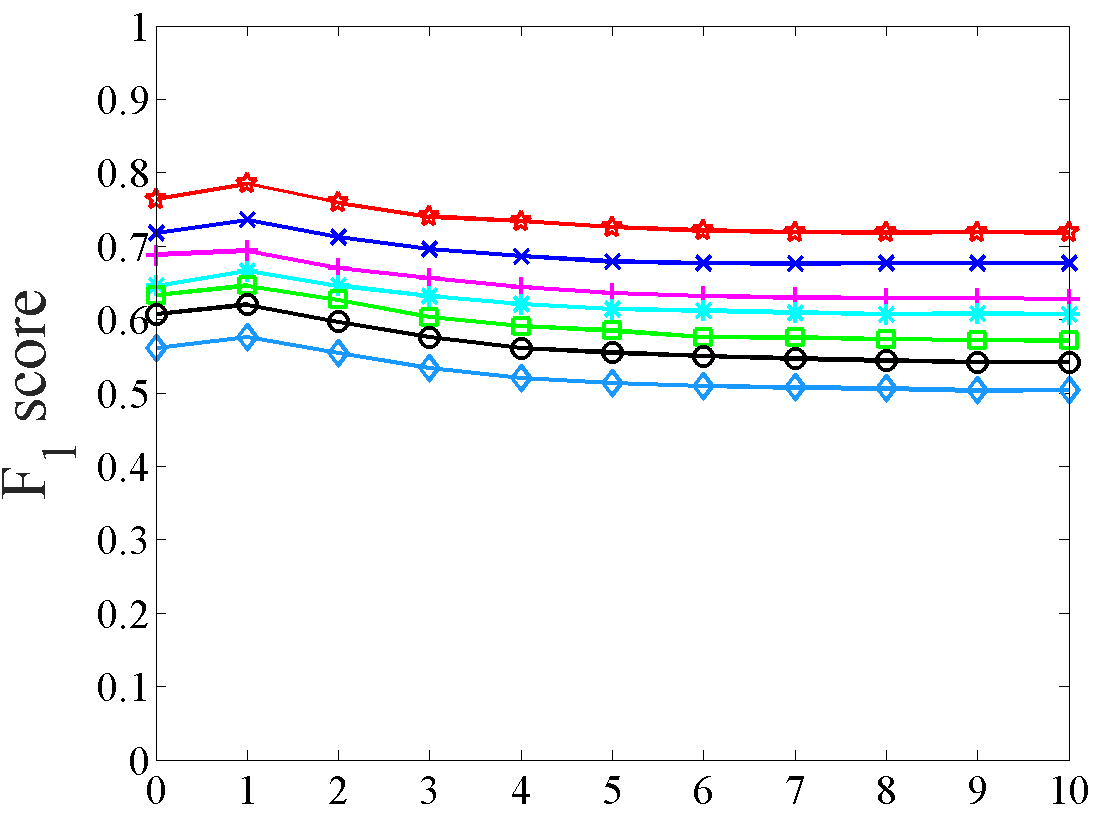}
 \includegraphics[height=1.6in,width=1.45in]{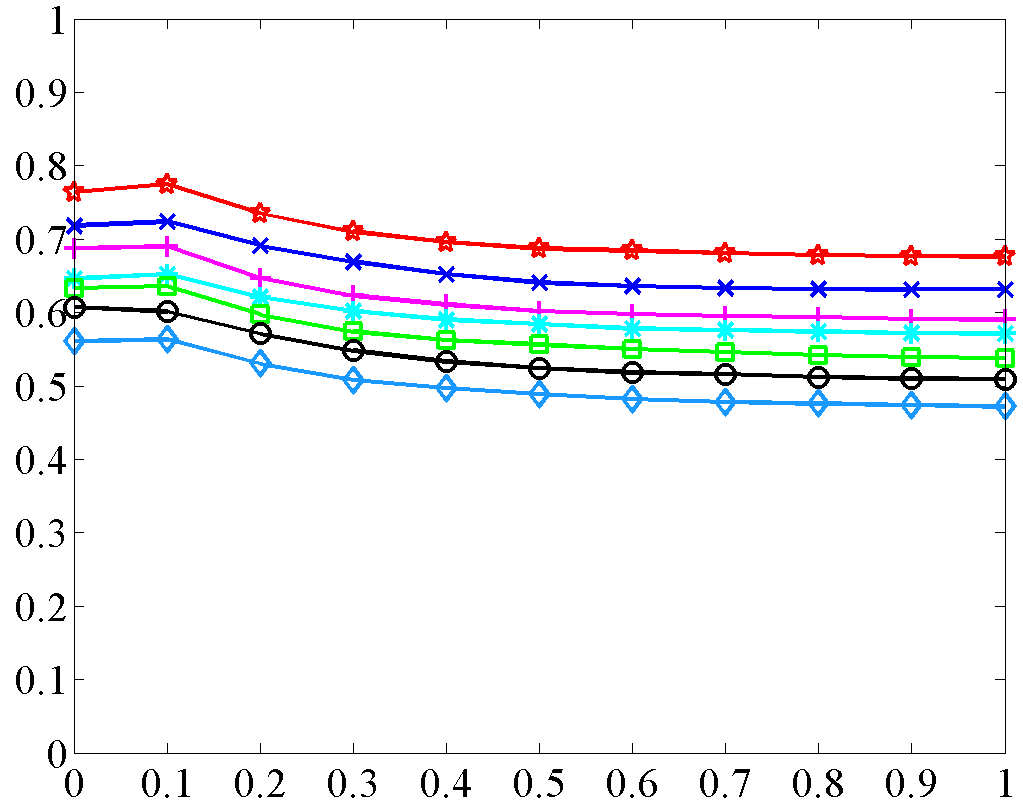}
 \includegraphics[height=1.6in,width=2.1in]{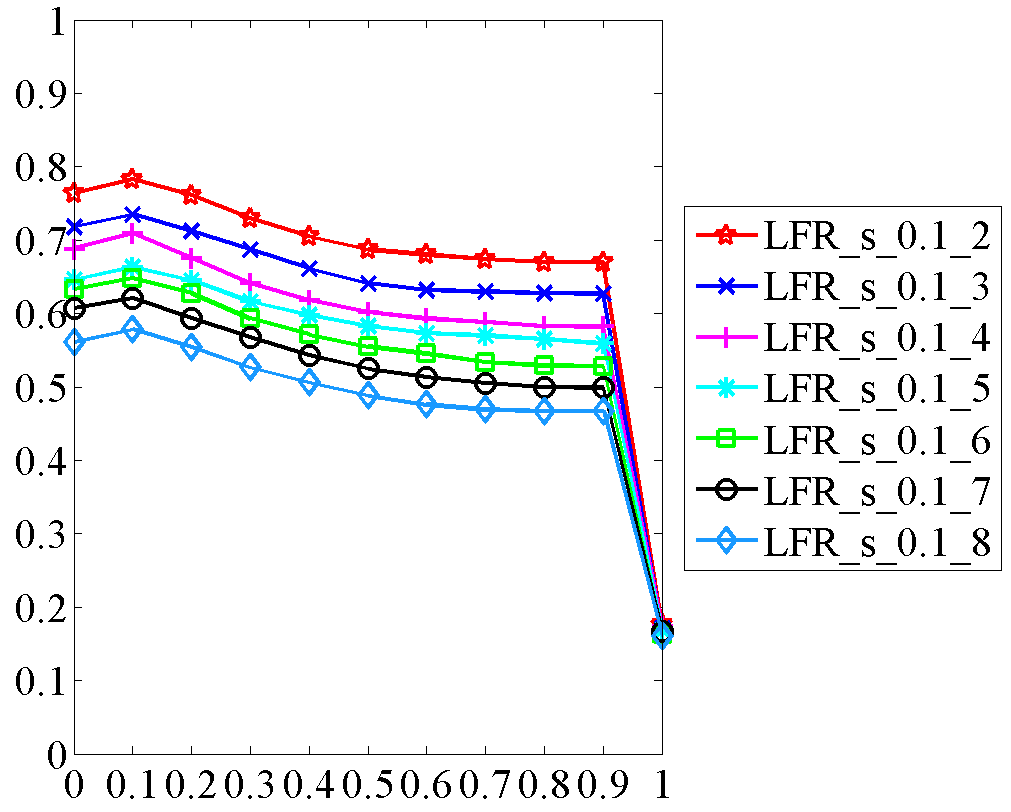}}
 \subfigure{
 \includegraphics[height=1.6in,width=1.45in]{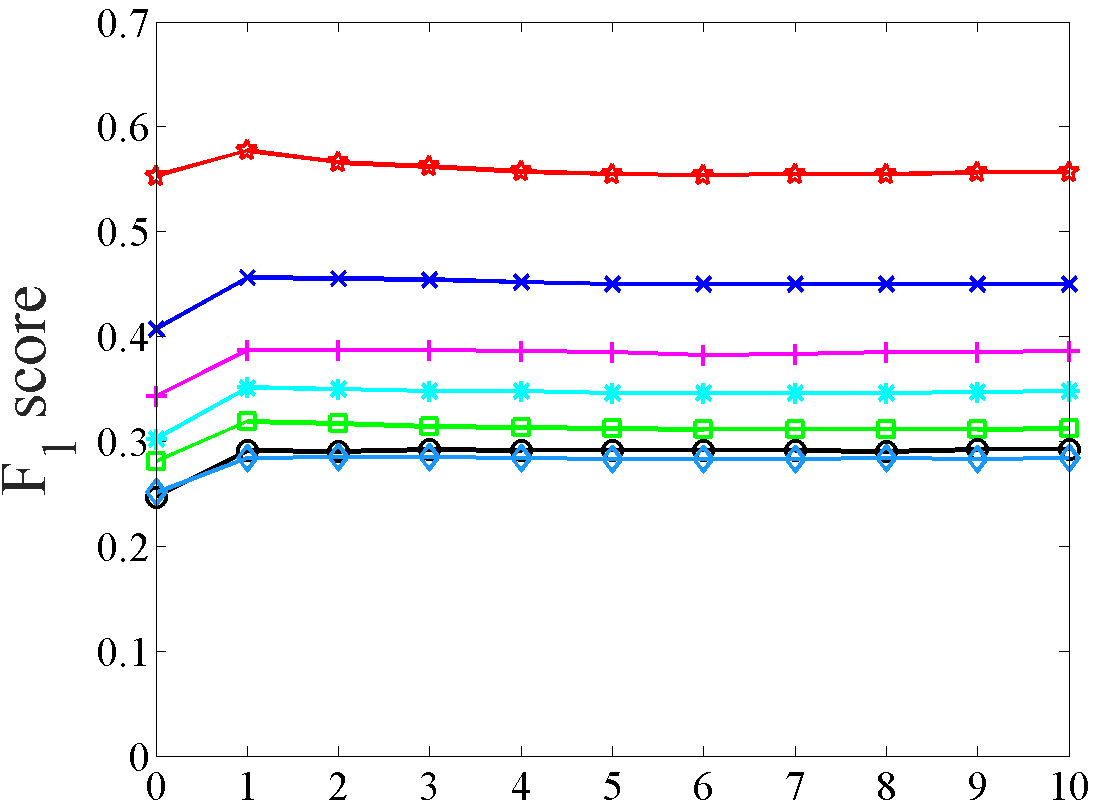}
 \includegraphics[height=1.6in,width=1.45in]{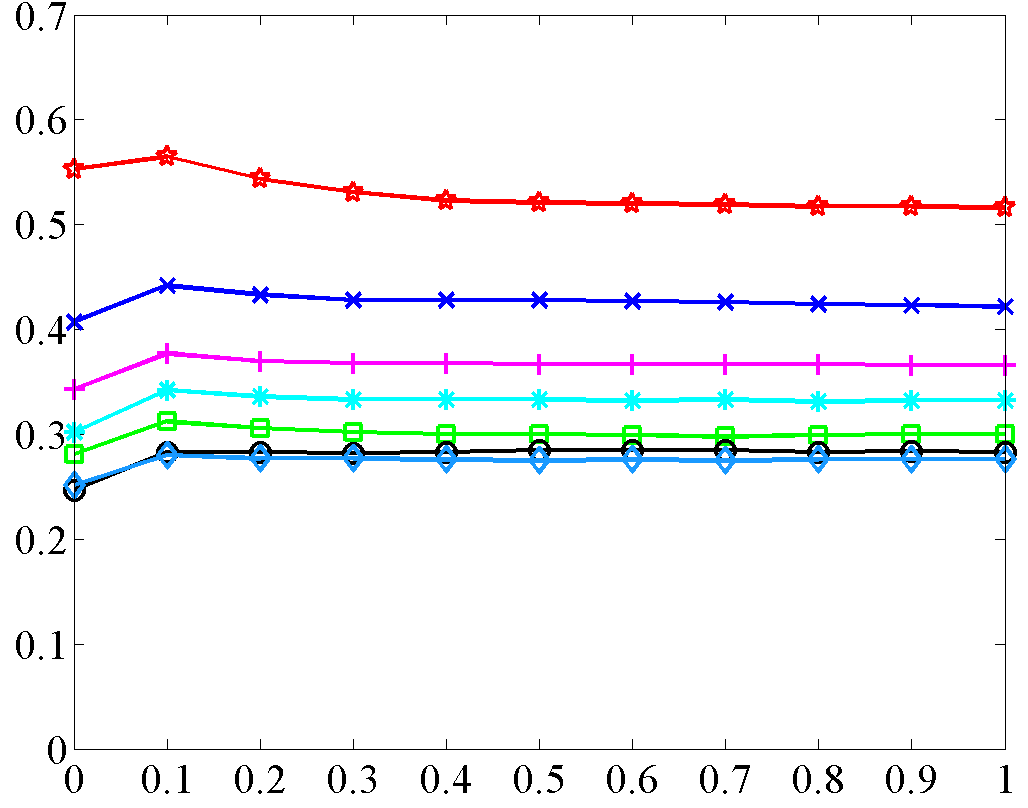}
 \includegraphics[height=1.6in,width=2.1in]{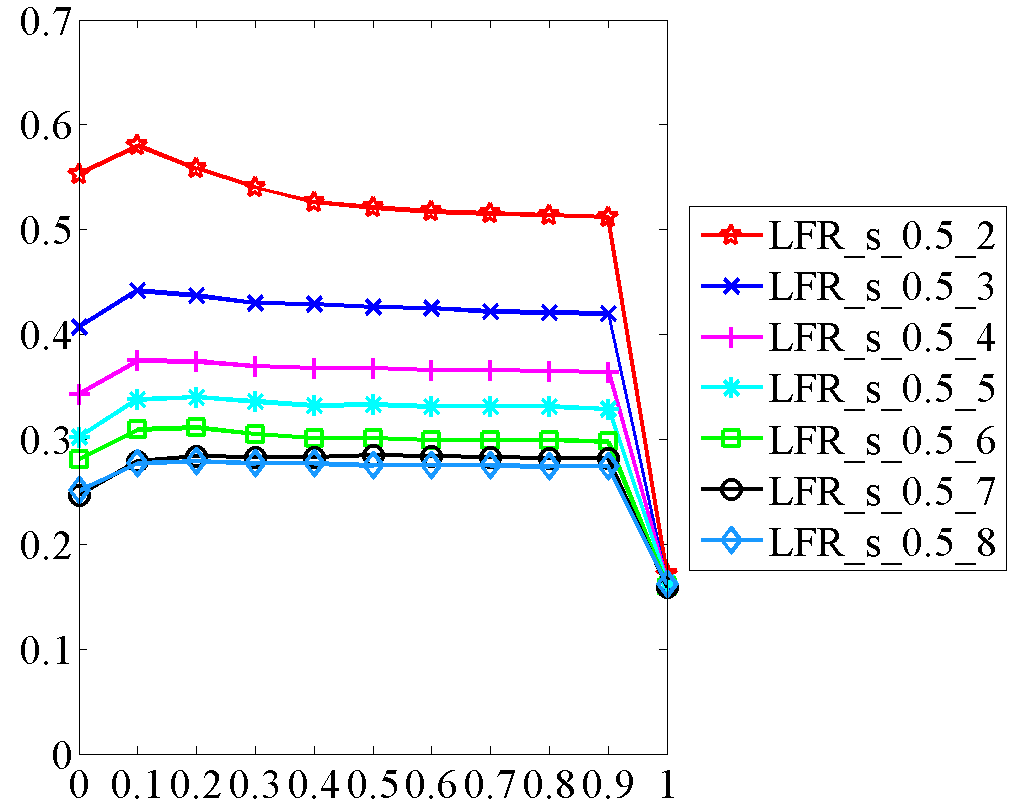}}
 \subfigure{
 \includegraphics[height=1.6in,width=1.45in]{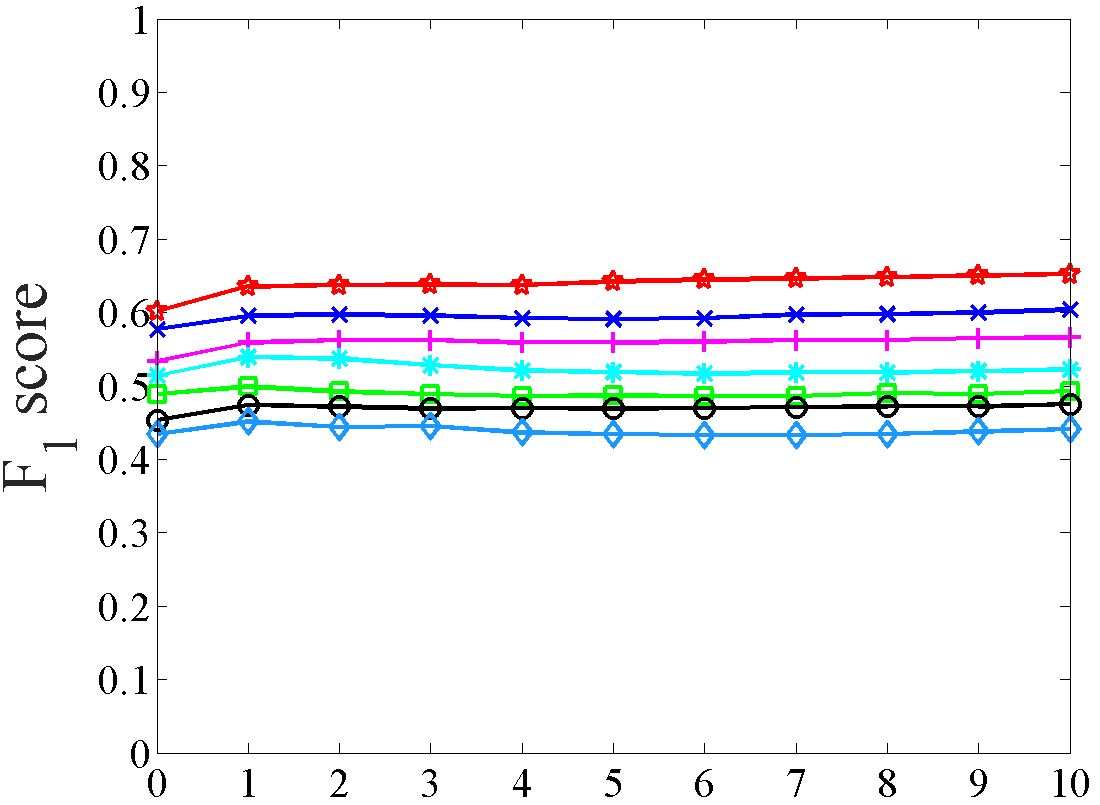}
 \includegraphics[height=1.6in,width=1.45in]{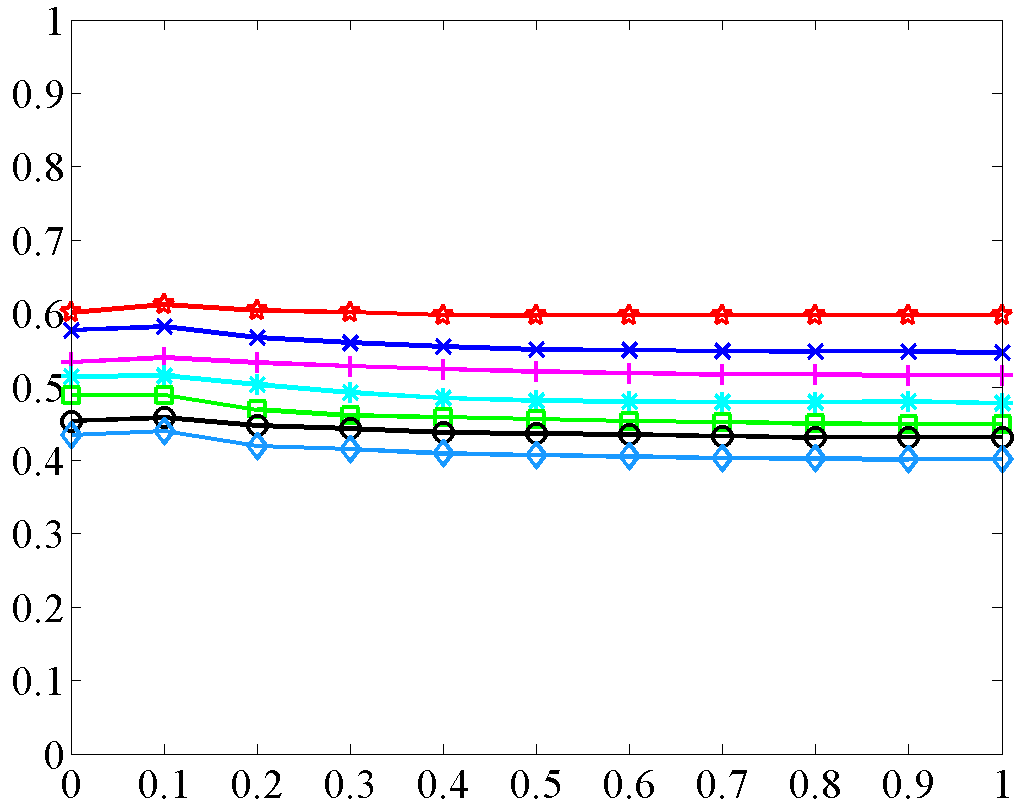}
 \includegraphics[height=1.6in,width=2.1in]{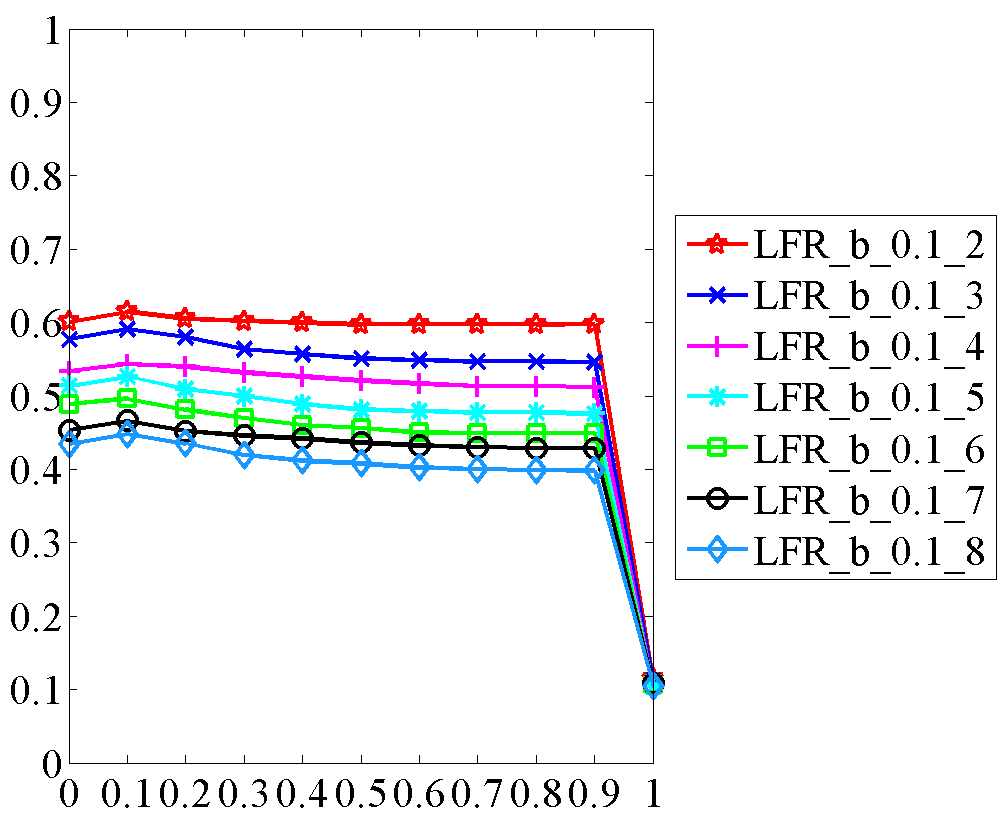}}
 \subfigure{
 \includegraphics[height=1.6in,width=1.45in]{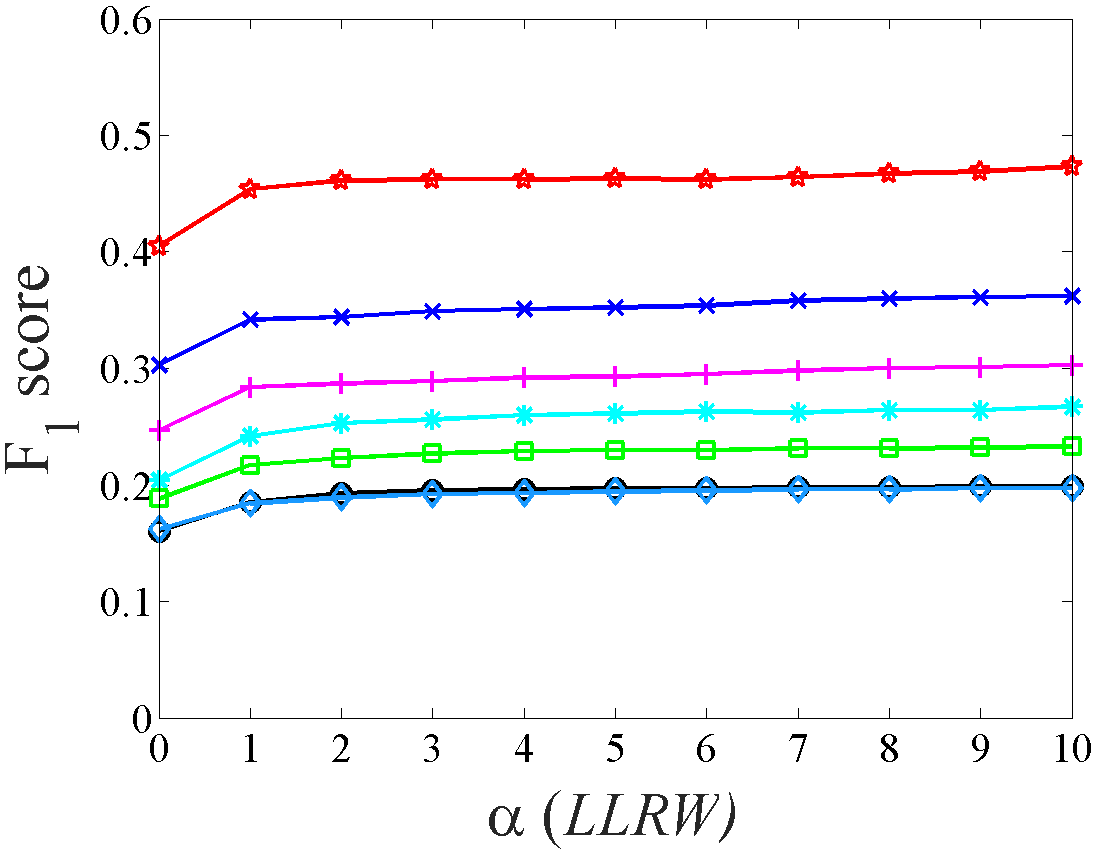}
 \includegraphics[height=1.6in,width=1.45in]{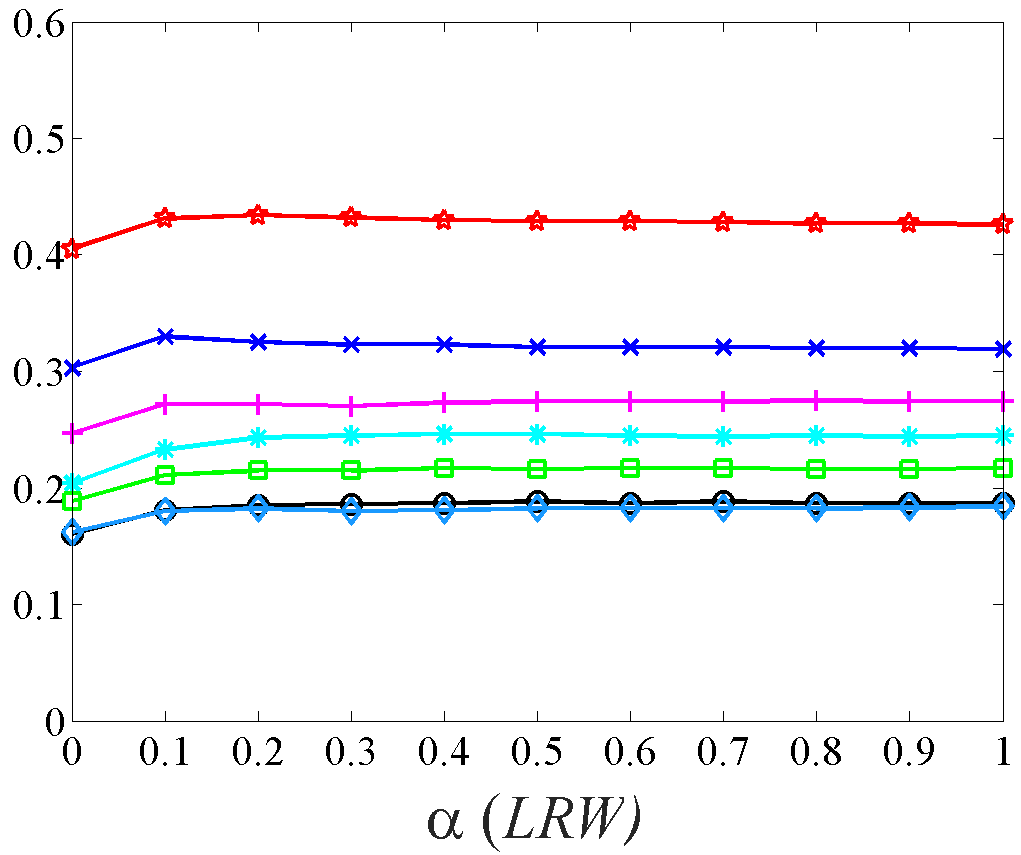}
 \includegraphics[height=1.6in,width=2.1in]{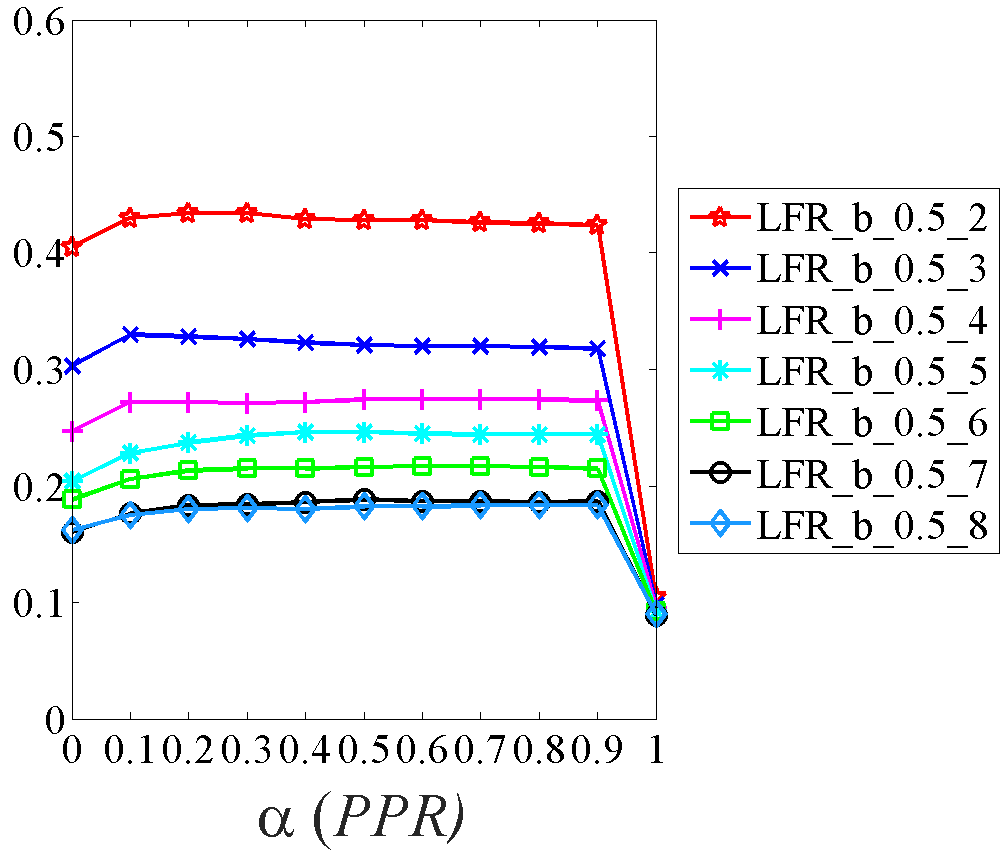}}
\caption{Evaluation of different diffusion parameters $\alpha$ on LFR datasets (Defined on $\mathbf{N_{rw}^T}$ Krylov subspace, community size truncated by truth size).
The three diffusions are robust for different $\alpha$ parameters, except for $\alpha = 1$ on personalized pagerank, in which case all probability returns to the original seed set.}
\label{fig:LFRRW}
\end{figure*}

\begin{figure}[!t]
	\centering
	\includegraphics[width=2.5in]{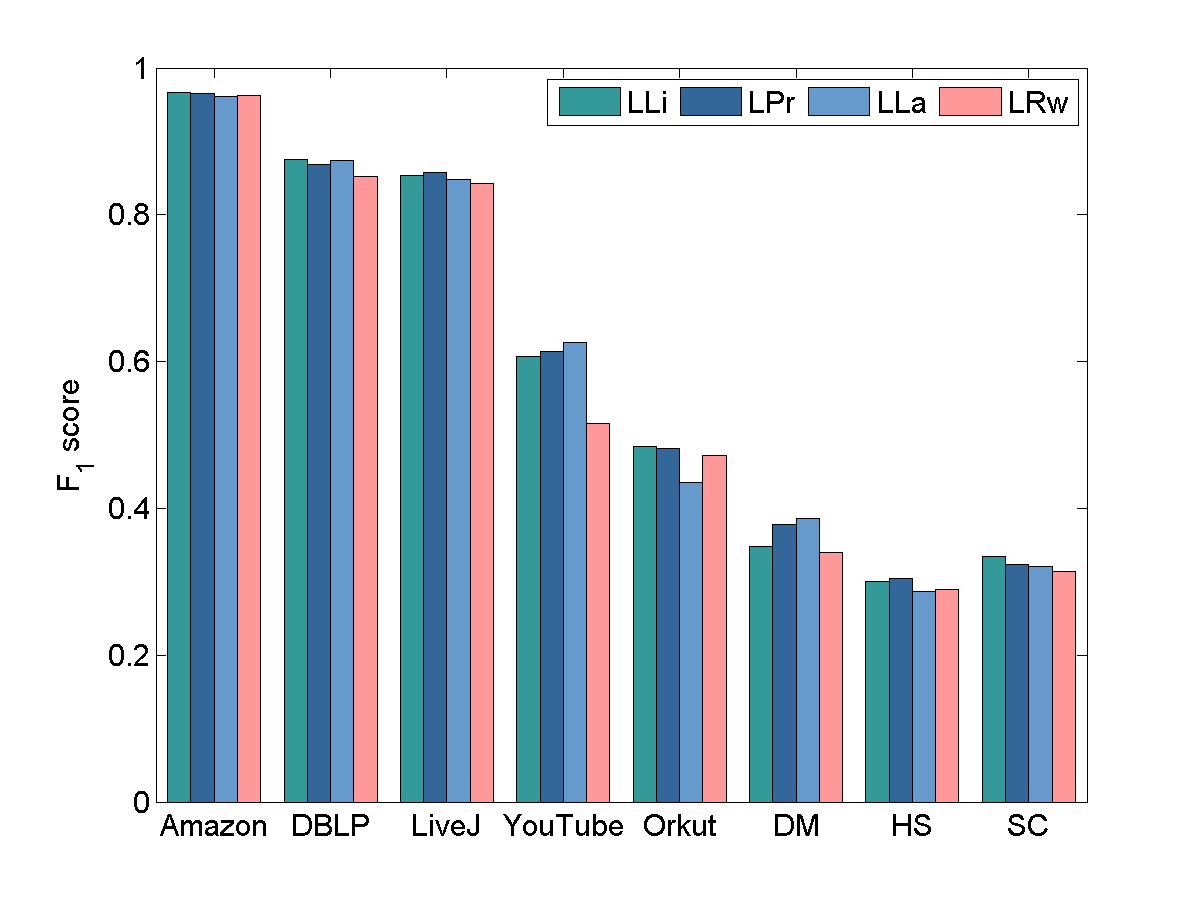}
	\vspace{-1em}
	\caption{Accuracy evaluation of LOSP on real-world networks (Defined on $\mathbf{N_{rw}^T}$ Krylov subspace, community size truncated by truth size).
		LOSPs based on the four diffusions show similar accuracy over all datasets. LLi, LOSP with light lazy, demonstrates slightly higher accuracy on half of the datasets.
	}
    \vspace{1.5em}
	\label{Fig:realCompare}
\end{figure}

\begin{figure}[!t]
	\vspace{-1em}
	\centering
	\subfigure[\bf{LFR\_s\_0.1}]{
		\includegraphics[width=2.5in]{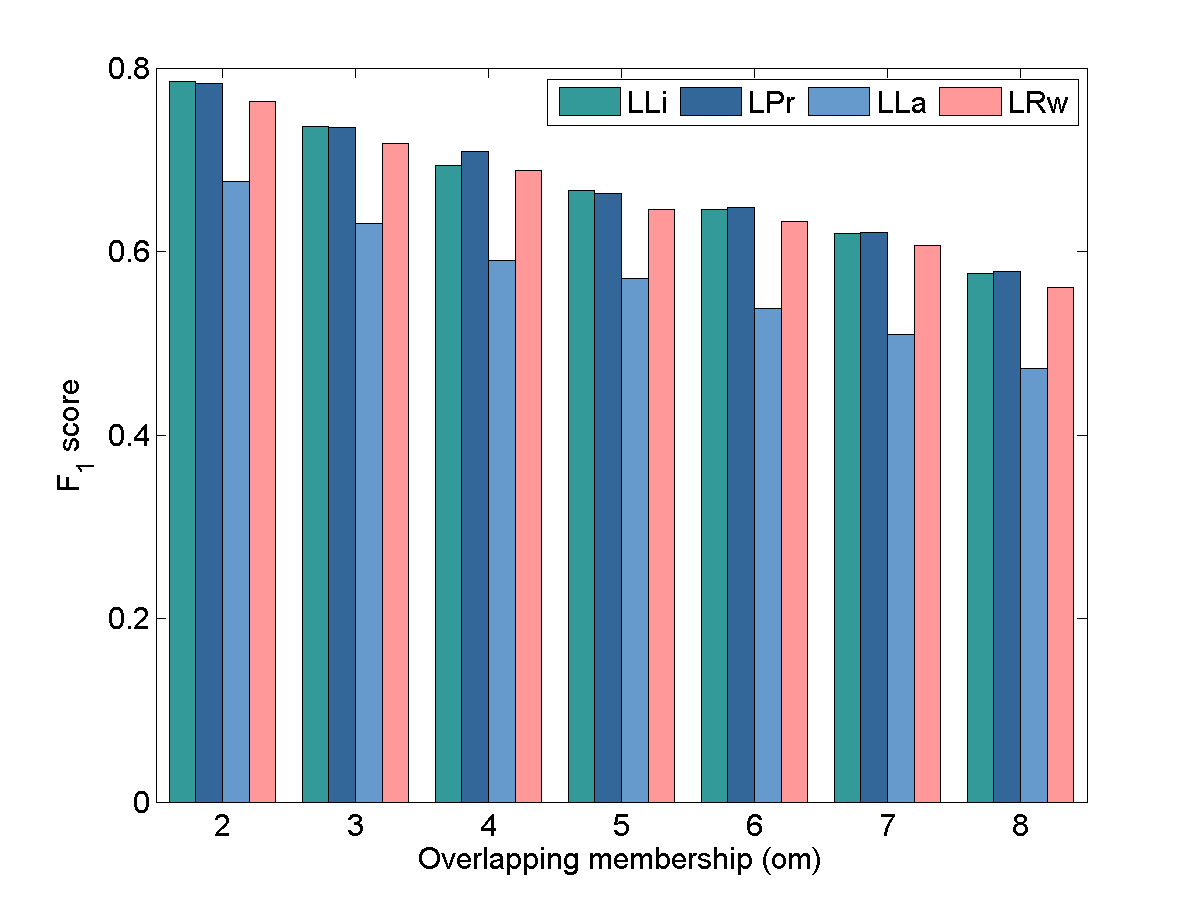}}
		\vspace{-1em}
	\subfigure[\bf{LFR\_s\_0.5}]{
		\includegraphics[width=2.5in]{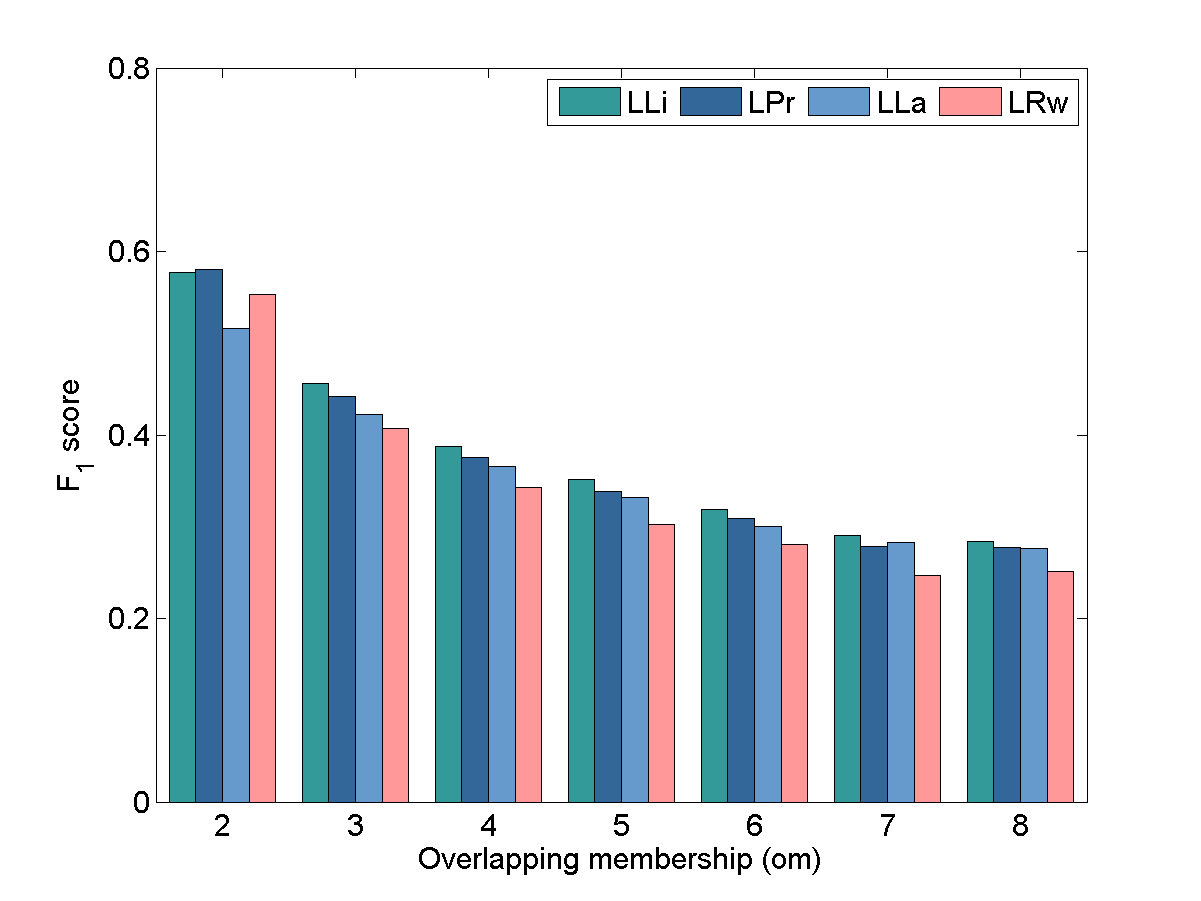}}
	\vspace{-1em}
	\subfigure[\bf{LFR\_b\_0.1}]{
		\includegraphics[width=2.5in]{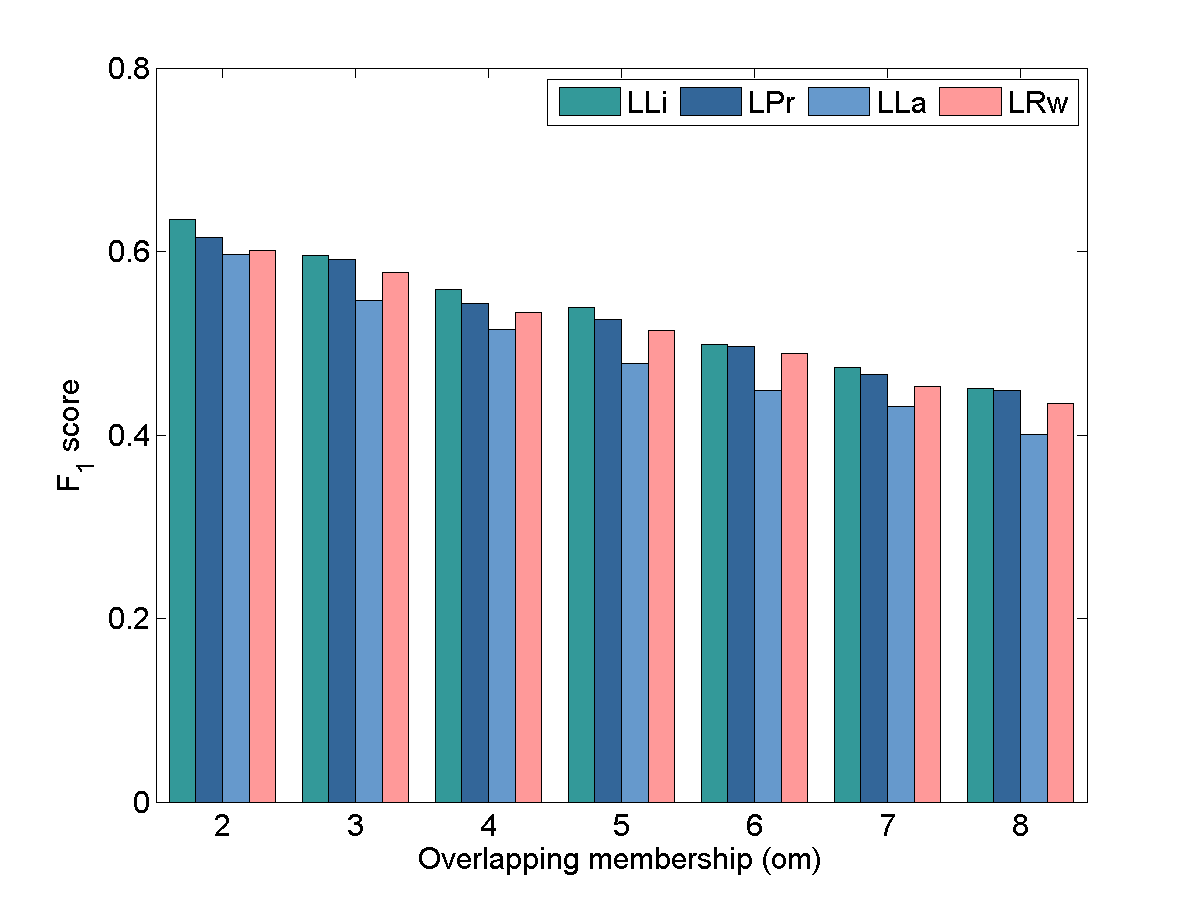}}
		\vspace{-1em}
	\subfigure[\bf{LFR\_b\_0.5}]{
		\includegraphics[width=2.5in]{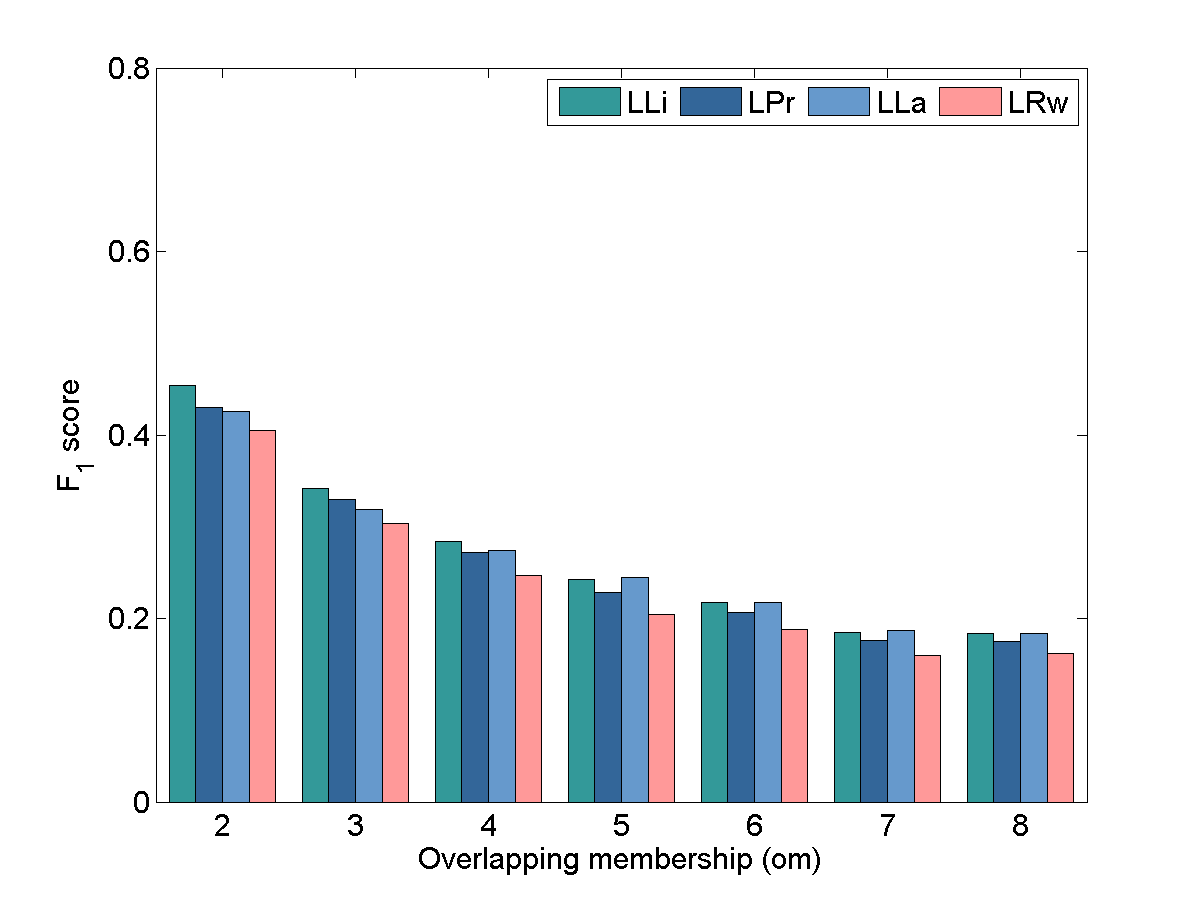}}
\vspace{1em}
	\caption{Accuracy evaluation of LOSP on LFR networks (Defined on $\mathbf{N_{rw}^T}$ Krylov subspace, community size truncated by truth size).
		The accuracy decays when the overlapping membership, $om$, increases from 2 to 8. LOSPs based on the four diffusions show similar accuracy over all datasets. LLi, LOSP with light lazy, demonstrates slightly higher accuracy.}
	\label{Fig:LFRCompare}
\end{figure}

\begin{figure}[!t]
	\centering
	\includegraphics[width=2.5in]{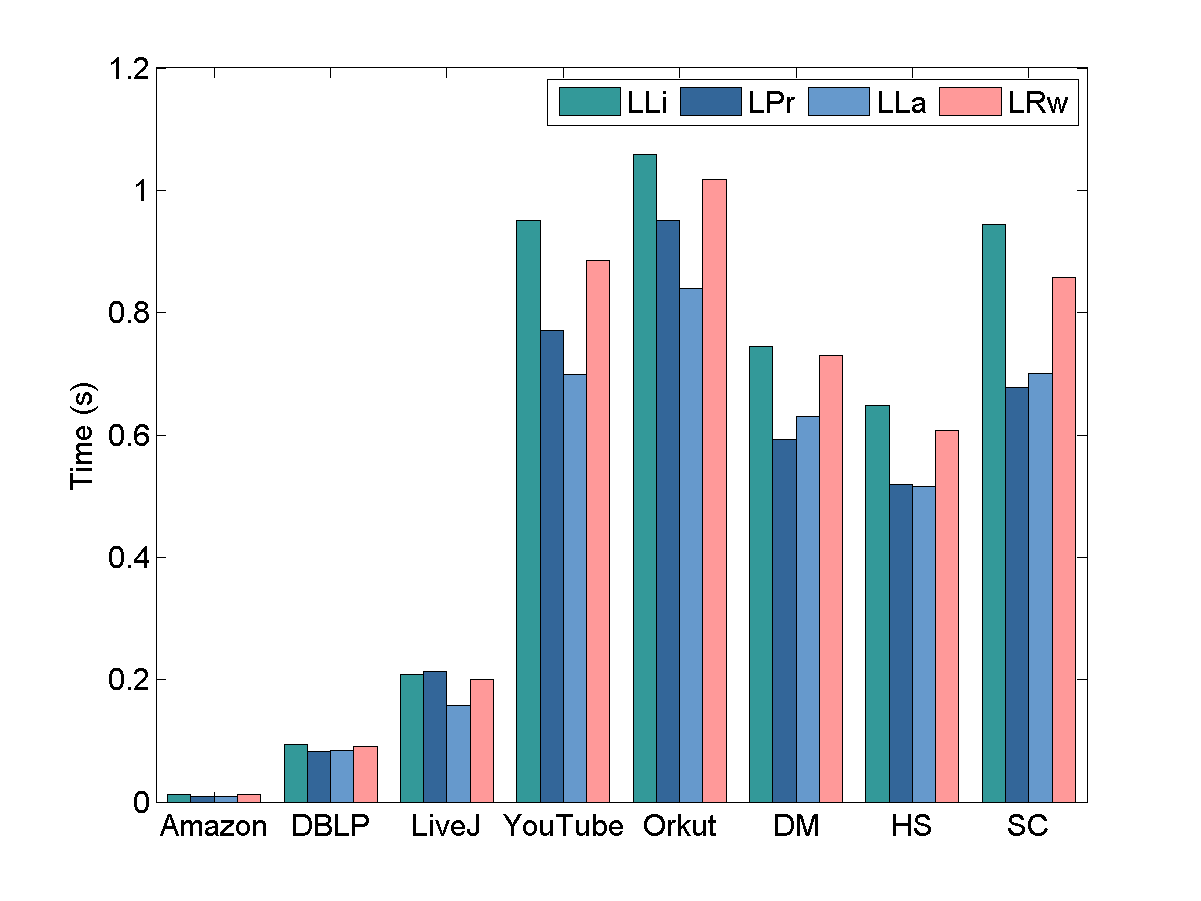}
	\vspace{-1em}
	\caption{Running time evaluation of LOSP on real-world networks (Defined on $\mathbf{N_{rw}^T}$ Krylov subspace, community size truncated by truth size).
		LOSPs based on the four diffusions show similar running time over all datasets, which are within 1.1 seconds.
	}
    \vspace{1.5em}
	\label{Fig:realCompareTime}
\end{figure}

\begin{figure}[!t]
	\vspace{-1em}
	\centering
	\subfigure[\bf{LFR\_s\_0.1}]{
		\includegraphics[width=2.5in]{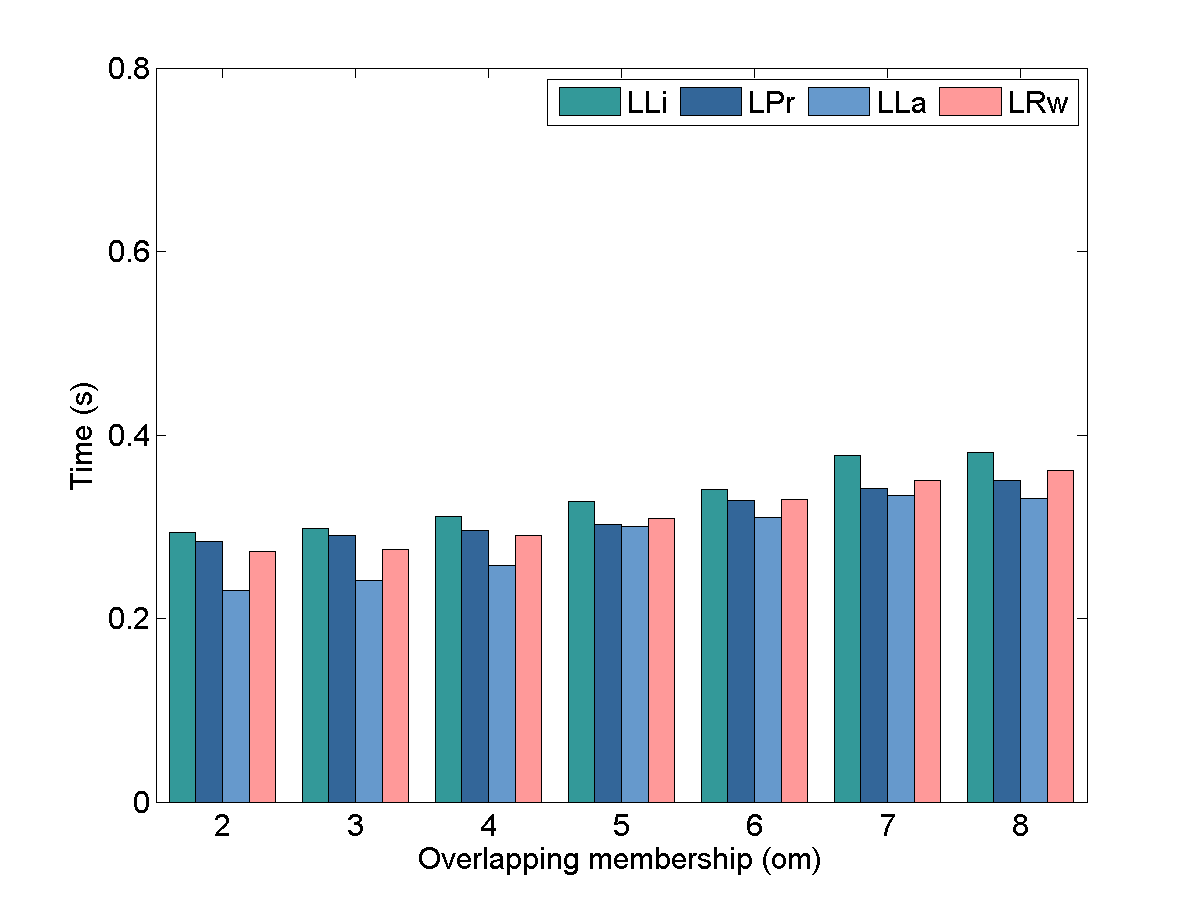}}
		\vspace{-1em}
	\subfigure[\bf{LFR\_s\_0.5}]{
		\includegraphics[width=2.5in]{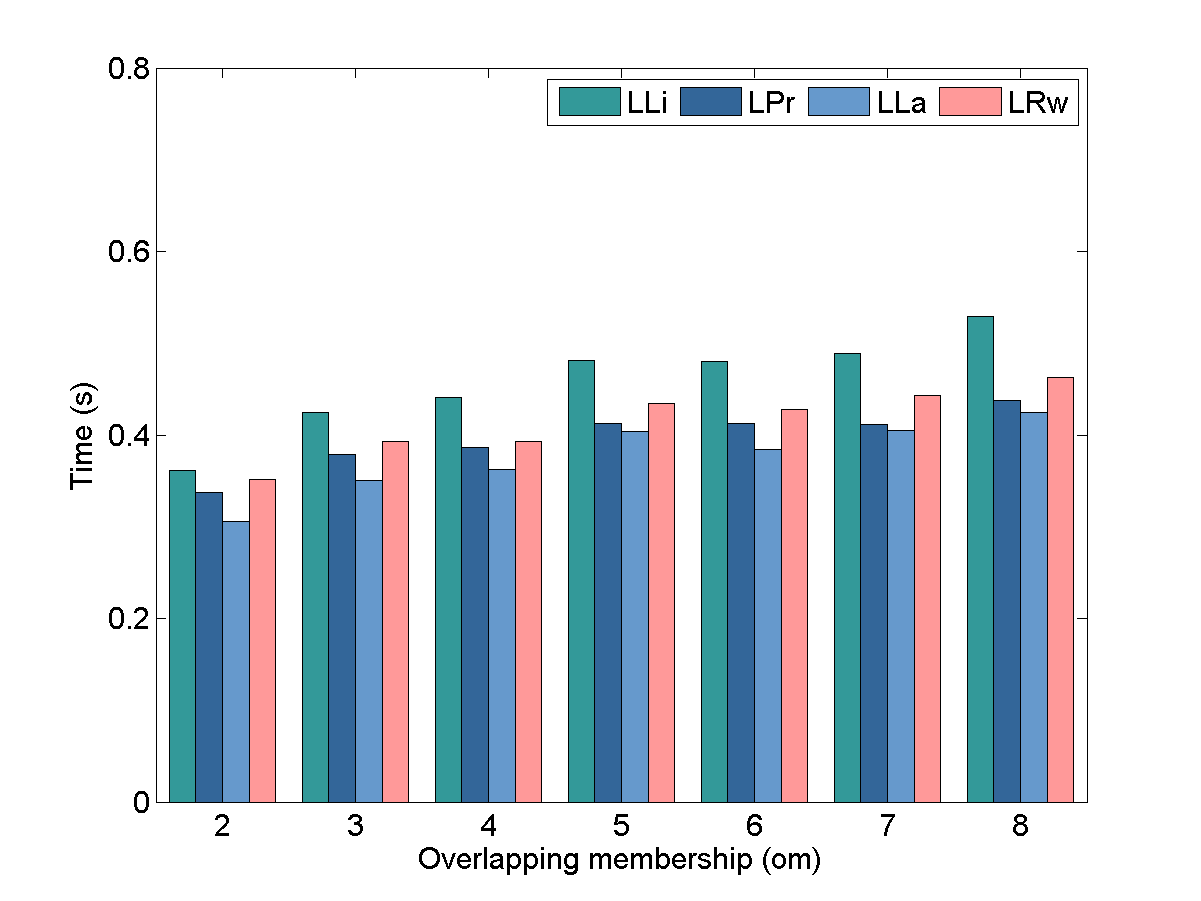}}
	\vspace{-1em}
	\subfigure[\bf{LFR\_b\_0.1}]{
		\includegraphics[width=2.5in]{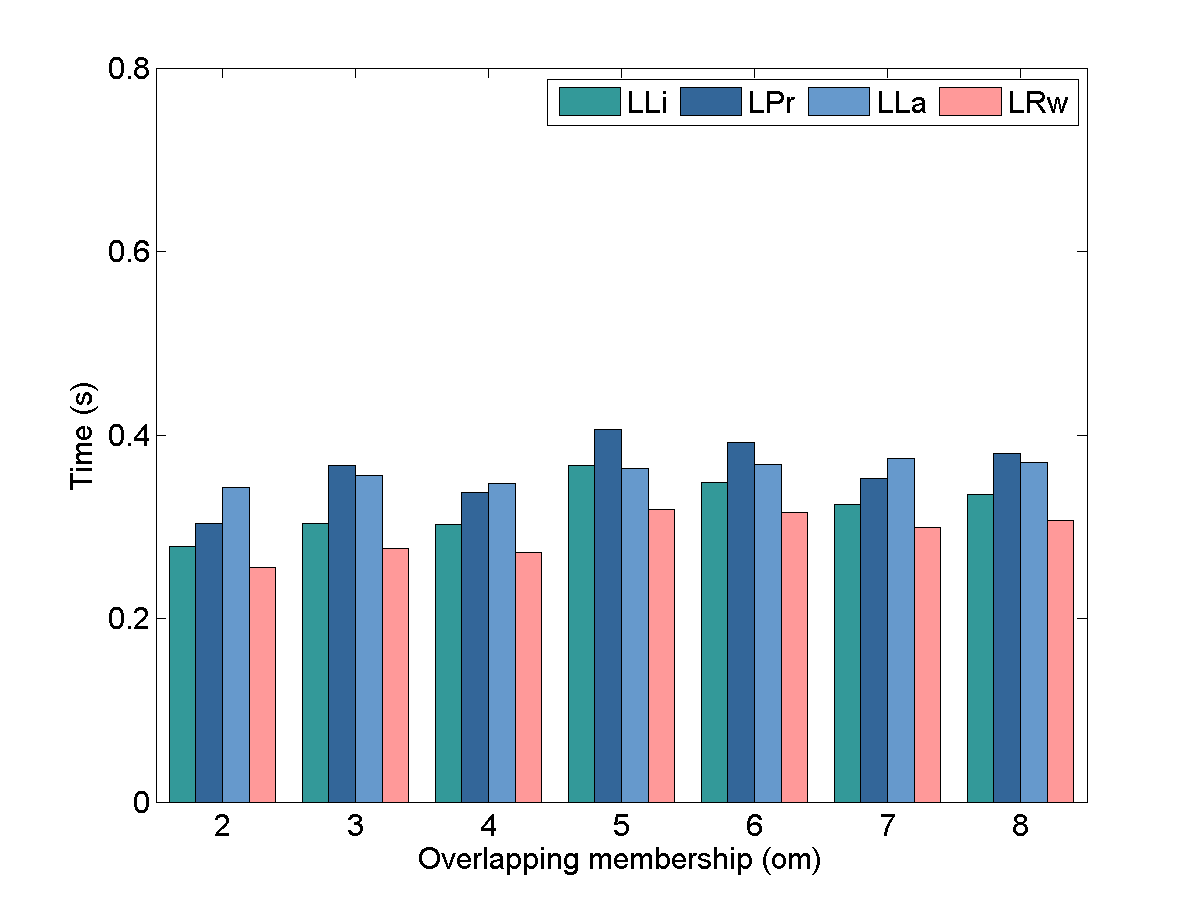}}
		\vspace{-1em}
	\subfigure[\bf{LFR\_b\_0.5}]{
		\includegraphics[width=2.5in]{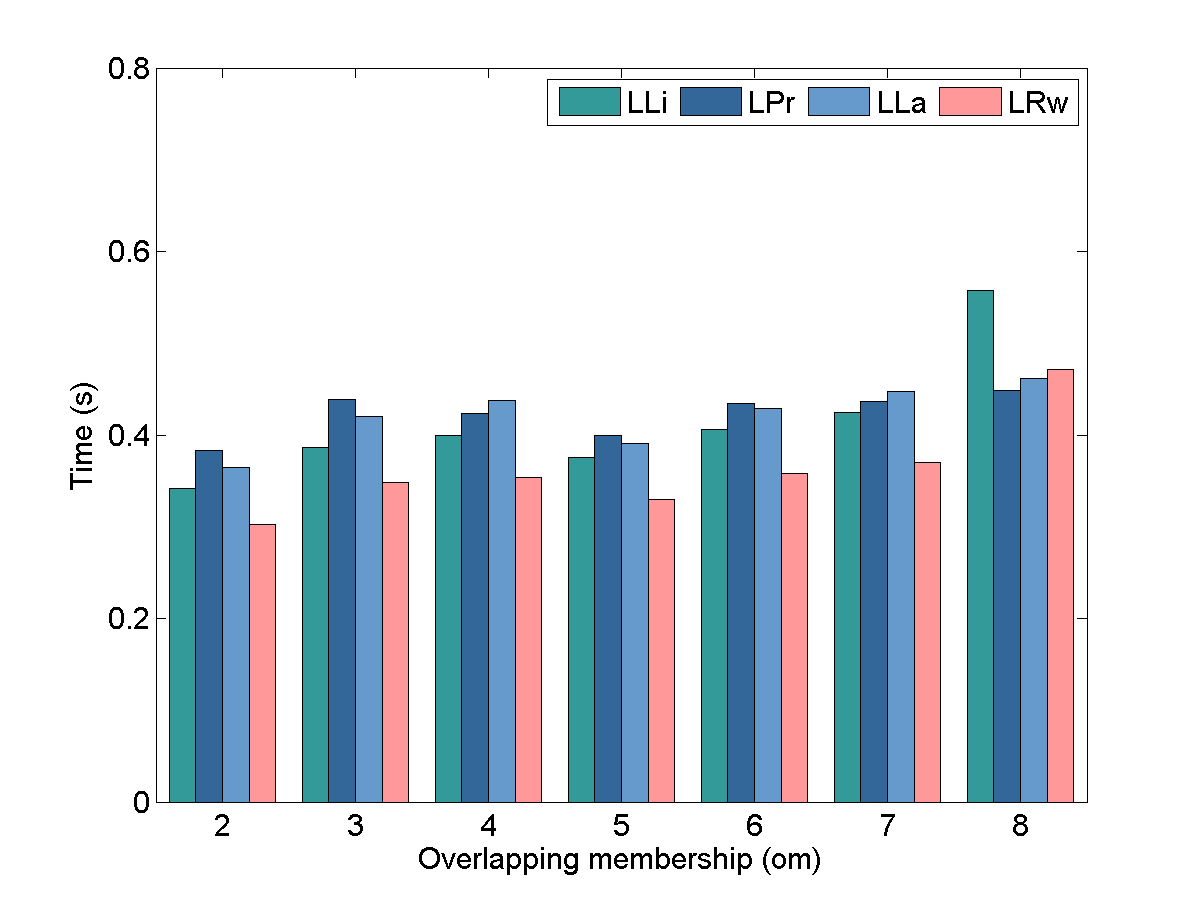}}
    \vspace{1em}
	\caption{Running time evaluation of LOSP on LFR networks (Defined on $\mathbf{N_{rw}^T}$ Krylov subspace, community size truncated by truth size).
		LOSPs based on the four diffusions show similar running time over all datasets, which are around 0.4 seconds.}
    \vspace{1em}
	\label{Fig:LFRCompareTime}
\end{figure}

\subsection{Evaluation on Local Spectral Methods}
\label{sec:LOSPEvaluation}
To remove the impact of different methods in finding a local minimum for the community boundary, we use the ground truth size as a budget for the proposed four LOSP variants: LRw (standard), LLi (light lazy), LLa (lazy) and LPr (pagerank).
We first compare the four LOSP variants defined on the standard $\mathbf{N_{rw}^T}$ Krylov subspace, then compare the general performance on subspaces defined on either $\mathbf{N_{rw}^T}$ or $\mathbf{N_{rw}}$.

\textbf{Evaluation on variants of $\mathbf{N_{rw}^T}$ Krylov subspace.}
Fig. \ref{Fig:realCompare} illustrates the average detection accuracy on the eight real-world datasets. LLi, LPr, LLa and LRw achieve almost the same performance on almost all datasets. One exception is on YouTube that LLi, LLa and LPr considerably outperform LRw. LLi achieves slightly better performance on four out of eight real-world networks.

Fig. \ref{Fig:LFRCompare} illustrates the average detection accuracy on the four sets with a total of 28 LFR networks.
For $on = 500$, Fig. \ref{Fig:LFRCompare} (a) and Fig. \ref{Fig:LFRCompare} (c) show that LLi, LPr and LRw achieve almost the same performance and outperform LLa on average.
For $on = 2500$, Fig. \ref{Fig:LFRCompare} (b) and Fig. \ref{Fig:LFRCompare} (d) show that LLi, LPr and LLa achieve almost the same performance and outperform LRw on average.
On both cases, LLi, LOSP with light lazy, demonstrates slightly higher accuracy on LFR datasets.

For the running time on the sampled subgraphs of all datasets, LLi, LPr, LLa and LRw take almost the same time, and run within 1.1 seconds, as shown in Fig. \ref{Fig:realCompareTime} and Fig. \ref{Fig:LFRCompareTime}.

\textbf{Comparison on $\mathbf{N_{rw}^T}$ and $\mathbf{N_{rw}}$ Krylov subspace.}
We compare the average $F_1$ score on each group of networks: SNAP, Biology and the four groups of LFR. Table \ref{table:SubspaceComparison} shows the comparison of detection accuracy where the community is truncated on truth size for subspace definitions on $\mathbf{N_{rw}^T}$ and $\mathbf{N_{rw}}$ respectively (The best three values on each row appear in bold).
In general, $\mathbf{N_{rw}^T}$ outperforms $\mathbf{N_{rw}}$, especially on real-world datasets.
When comparing with the LOSP variant defined on $\mathbf{N_{rw}}$, the four variants defined on $\mathbf{N_{rw}^T}$ have about 15\% or 5\% higher $F_1$ scores on SNAP and Biology respectively.
As for the synthetic LFR datasets, in general, $\mathbf{N_{rw}^T}$ performs slightly better than $\mathbf{N_{rw}}$ on the small ground truth communities, while $\mathbf{N_{rw}}$ performs slightly better than $\mathbf{N_{rw}^T}$ on the big ground truth communities. Among all the variants of LOSP, LLi on $\mathbf{N_{rw}^T}$ is always on top three for all datasets. In summary, LLi on $\mathbf{N_{rw}^T}$ performs better than other variants of LOSP.
\begin{table}[htbp]
	\renewcommand{\arraystretch}{1.3}
	\caption{Comparison of average $F_1$ score for subspaces on $\mathbf{N_{rw}^T}$ and $\mathbf{N_{rw}}$ (Truncated on truth size). Among all the variants of LOSP, LLi on $\mathbf{N_{rw}^T}$ is always on top three for all datasets.}
	\label{table:SubspaceComparison}
	\centering
	\begin{tabular}{ l | c c c r | c c c r }
		\hline
		\multirow{2}{*}{\bf{Datasets}}  & \multicolumn{4}{c}{\bf{Subspace on} $\mathbf{N_{rw}^T}$} \vline& \multicolumn{4}{c}{\bf{Subspace on} $\mathbf{N_{rw}}$}    \\
		&\bf{LLi} &\bf{LPr} &\bf{LLa} & \bf{LRw}  &\bf{LLi} &\bf{LPr} &\bf{LLa} &\bf{LRw}  \\
		\hline
		\bf{SNAP} & \bf{0.757}  & \bf{0.757}  & \bf{0.749}  & 0.729  & 0.578  & 0.601  & 0.594  & 0.556  \\
		\hline
		\bf{Biology} & \bf{0.328}  & \bf{0.335}  & \bf{0.331}  & 0.315  & 0.287  & 0.284  & 0.246  & 0.266  \\
		\hline
		\bf{LFR\_s\_0.1} & \bf{0.675}  & \bf{0.677}  & 0.570  & 0.660  & \bf{0.673}  & 0.660  & 0.600  & 0.628  \\
		\hline
		\bf{LFR\_s\_0.5}& \bf{0.381}  & \bf{0.371}  & 0.356  & 0.341  & \bf{0.373}  & 0.351  & 0.364  & 0.317  \\
		\hline
		\bf{LFR\_b\_0.1}& \bf{0.536}  & 0.526  & 0.488  & 0.515  & \bf{0.568}  & \bf{0.550}  & 0.516  & 0.523  \\
		\hline
		\bf{LFR\_b\_0.5}&  \bf{0.273}  & 0.260  & 0.265  & 0.238  & \bf{0.272}  & 0.254  & \bf{0.272}  & 0.230  \\
		\hline
	\end{tabular}
\end{table}

\begin{table}[htbp]
	\renewcommand{\arraystretch}{1.3}
	\caption{Comparison of average $F_1$ score for subspaces on $\mathbf{N_{rw}^T}$ and $\mathbf{N_{rw}}$ (Truncated on local minimal conductance). Among all the variants of LOSP, LLi on $\mathbf{N_{rw}^T}$ is always on top three for all datasets.}
	\label{table:SubspaceComparisonCond}
	\centering
	\begin{tabular}{ l | c c c r | c c c r }
		\hline
		\multirow{2}{*}{\bf{Datasets}}  & \multicolumn{4}{c}{\bf{Subspace on} $\mathbf{N_{rw}^T}$} \vline& \multicolumn{4}{c}{\bf{Subspace on} $\mathbf{N_{rw}}$}    \\
		&\bf{LLi} &\bf{LPr} &\bf{LLa} & \bf{LRw}  &\bf{LLi} &\bf{LPr} &\bf{LLa} &\bf{LRw}  \\
		\hline
		\bf{SNAP} & \bf{0.620}  & \bf{0.604}  & \bf{0.599}  & 0.593  & 0.518  & 0.514  & 0.526  & 0.503  \\
		\hline
		\bf{Biology} & \bf{0.182}  & \bf{0.171}  & \bf{0.206}  & 0.150  & 0.146  & 0.160  & 0.166  & 0.135  \\
		\hline
		\bf{LFR\_s\_0.1} & \bf{0.574}  & \bf{0.561}  & 0.443  & 0.554  & \bf{0.576}  & 0.546  & 0.518  & 0.539  \\
		\hline
		\bf{LFR\_s\_0.5}& \bf{0.334}  & 0.291  & 0.265  & 0.232  & \bf{0.332}  & 0.311  & \bf{0.337}  & 0.217  \\
		\hline
		\bf{LFR\_b\_0.1}& \bf{0.387}  & 0.365  & 0.296  & 0.358  & \bf{0.406}  & 0.348  & 0.353  & \bf{0.393}  \\
		\hline
		\bf{LFR\_b\_0.5}& \bf{0.198}  & 0.169  & 0.168  & 0.134  & \bf{0.217}  & 0.186  & \bf{0.214}  & 0.131  \\
		\hline
	\end{tabular}
\end{table}

It is interesting that the results on $\mathbf{N_{rw}}$ is reasonably good. Even if we use the first local conductance to do the truncation, as shown in Table \ref{table:SubspaceComparisonCond} (The best three values on each row appear in bold), the accuracy decays by about 5\% to 15\% as compared with that of the truncation on truth size (in Table \ref{table:SubspaceComparison}). Nevertheless, the results of Table \ref{table:SubspaceComparisonCond} are still much better than that of state-of-the-art local community detection algorithms, heat kernel (HK) and pagerank (PR), whose results are listed in Table \ref{table: Realcomparison} and Table \ref{table: LFRcomparison}.

\begin{figure}[!t]
	\vspace{-1em}
	\centering
	\subfigure[]{
		\includegraphics[width=2.65in]{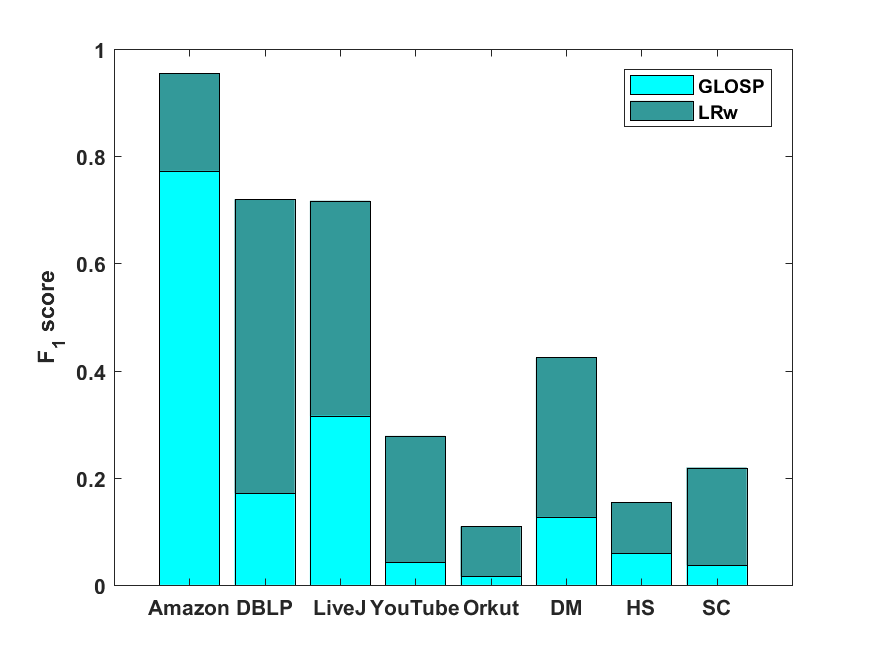}}
		\vspace{-1em}
	\subfigure[]{
		\includegraphics[width=2.65in]{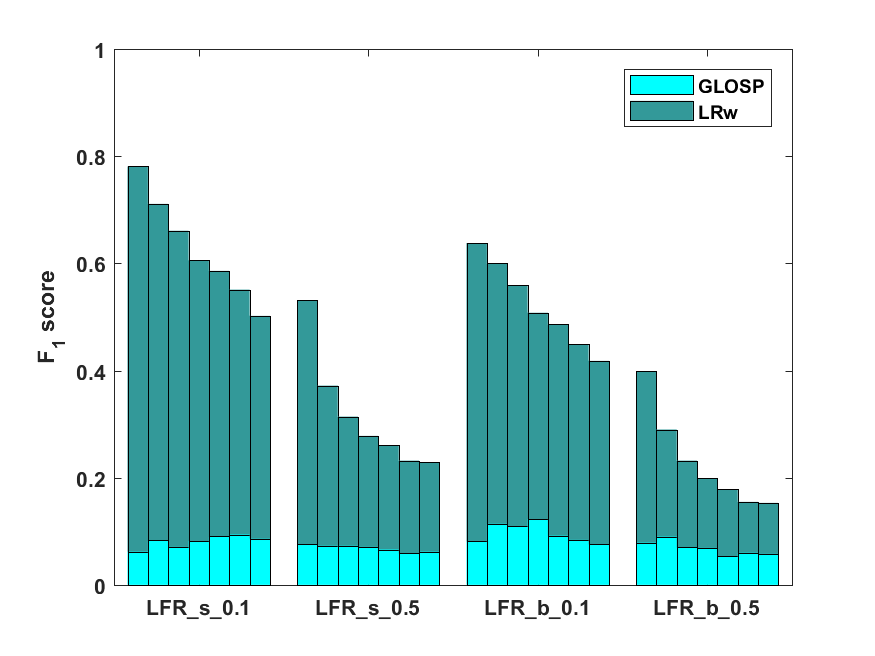}}
	\caption{Accuracy comparison with GLOSP based on actual eigenspace on real-world networks and LFR datasets (Community size truncated by truth size). In (b), the overlapping membership ($om$) starts from 2 to 8 for each group of LFR datasets. Note that the bars for GLOSP and LRw both start from zero to the height.}
	\label{Fig:GlobalCompare}
\end{figure}

\textbf{Comparison on Krylov subspace and eigenspace for $\mathbf{N_{rw}}$.}
In order to evaluate the effectiveness of Krylov subspace approximation, we compare Krylov subspace with the actual eigenspace. To make a fair comparison, we choose LRw defined on $\mathbf{N_{rw}}$ Krylov subspace and adopt our method on the actual eigenspace associated with two leading eigenvectors of standard transition matrix $\mathbf{N_{rw}}$ with larger eigenvalues, denoted as GLOSP (GLObal SPectral).

Fig. \ref{Fig:GlobalCompare} illustrates the average detection accuracy on real-world networks and LFR datasets. For real-world networks, Fig. \ref{Fig:GlobalCompare} (a) shows that LRw considerably outperforms GLOSP. For LFR datasets, Fig. \ref{Fig:GlobalCompare} (b) illustrates that LRw is much better than GLOSP on each dataset. These experiments show that it is not good if we use the actual eigenspace that embodies the global structure, even on the comparatively small subgraph.

\subsection{Final Comparison}
\label{sec:finalCompare}
For the final comparison, we find a local minimum of conductance to automatically determine the community boundary.
Subsection \ref{sec:LOSPEvaluation} shows that LLi, LOSP with light lazy, is on average the best LOSP method defined on $\mathbf{N_{rw}^T}$ Krylov subspace on all datasets. In this subsection, we just compare LLi defined on the $\mathbf{N_{rw}^T}$ Krylov subspace with four state-of-the-art local community detection algorithms, LEMON \cite{LiHK_TKDD18}, PGDc-d \cite{LaarhovenJMLR16}, HK for \texttt{hk-relax}~\cite{kloster2014heat} and PR for \texttt{pprpush}~\cite{pagerank2006}, which also use conductance as the metric to determine the community boundary. LEMON is a local spectral approach based on normalized adjacency matrix iteration, PGDc-d is a projected gradient descent algorithm for optimizing $\sigma$-conductance, and HK is based on heat kernel diffusion while PR is based on the pagerank diffusion.
To make a fair comparison, we use the default parameter settings for baselines as they also test on SNAP datasets, and run the five algorithms on the same three seeds randomly chosen from the ground truth communities.

\subsubsection{Comparison on Real-world Datasets}
For each of the real-world networks, Table \ref{table: Realcomparison} shows the average detection accuracy (The best value appears in bold in each row) and the average running time,
Table \ref{table: Realcomparison2} shows the average community size and the average conductance of the detected communities.

For the SNAP datasets in product, collaboration and social domains, LLi yields considerably higher accuracy on the first three datasets (Amazon, DBLP, and LiveJ), 
and yields slightly lower accuracy on YouTube and Orkut. To have a better understanding on the results, we compare the property of the detection with the property of the ground truth communities, as shown in Table \ref{table:realdata_stats}.
\begin{itemize}
	\item For the first three datasets, the average size of the ground truth is small between 10 to 30, and the conductance is diverse (very low in Amazon and with reasonable value around 0.4 in DBLP and LiveJ). In general, the size and conductance of our detected communities are closest to the ground truth. This may explain why our detection accuracy is considerably higher than the baselines.
	\item For the last two datasets YouTube and Orkut, 
the average conductance of ground truth is very high with 0.84 and 0.73, respectively, and the size is diverse (very small of 21 on average in YouTube and large of 216 on average in Orkut).
	We found much larger communities with close conductance on YouTube (0.736) and Orkut (0.791). PGDc-d found much smaller communities with close conductance on YouTube, and HK found reasonable-size communities with lower conductance on Orkut.
	This may explain why our detection accuracy on YouTube and Orkut is slightly lower than PGDc-d and HK, respectively.	
\end{itemize}

For the three Biology datasets, LLi outperforms HK and PR, and yields slightly lower accuracy compared with LEMON and PGDc-d. Also, the conductance of the communities detected by LEMON (around 0.88) and PGDc-d (around 0.98) are very close to that of the ground truth (around 0.88, as shown in Table \ref{table:realdata_stats}), compared with LLi (around 0.74), HK (around 0.37) and PR (around 0.20). As for the community size, we detected larger communities as compared to the ground truth, LEMON and PGDc-d detect much smaller communities, while HK and PR detect even larger communities.

For the running time, as shown in Table \ref{table: Realcomparison}, LLi is fast in 4 seconds on SNAP and in 16 seconds on Biology datasets.
The four baselines, LEMON, PGDc-d, HK and PR, are faster in less than 1.2 seconds. This may due to that LLi is implemented in Matlab while the most baselines are implemented in C++.

\begin{table*}[htbp]
\renewcommand{\arraystretch}{1.3}
\caption{Comparison on detection accuracy and running time with baselines on real-world networks.}
\label{table: Realcomparison}
\centering
\scalebox{0.8}{
\begin{tabular}{ l l | c c c c  r | c c c c  r }
\hline
& \multirow{2}{*}{\bf{Datasets}} & \multicolumn{5}{c}{\bf{$F_1$ score}} \vline& \multicolumn{5}{c}{\bf{Time (s)}}   \\
& &\bf{LLi} &\bf{LEMON} &\bf{PGDc-d} &\bf{HK} & \bf{PR} &\bf{LLi} &\bf{LEMON} &\bf{PGDc-d} &\bf{HK} & \bf{PR}   \\
 \hline
\bf{SNAP} &\bf{Amazon} & \bf{0.800}  & 0.723  & 0.576  & 0.751  & 0.531  & 0.011  & 0.039  & 0.058  & 0.008  & 0.015  \\
&\bf{DBLP}  & \bf{0.779}  & 0.590  & 0.523  & 0.413  & 0.388  & 0.099  & 0.271  & 0.063  & 0.025  & 0.075  \\
&\bf{LiveJ} & \bf{0.797}  & 0.519  & 0.388  & 0.573  & 0.525  & 0.756  & 0.553  & 1.759  & 0.029  & 0.264  \\
&\bf{YouTube} & 0.445  & 0.351  & \bf{0.447}  & 0.091  & 0.143  & 9.268  & 1.458  & 0.329  & 0.038  & 0.955  \\
&\bf{Orkut} & 0.279  & 0.096  & 0.117  & \bf{0.357}  & 0.303  & 12.235  & 1.328  & 3.212  & 0.027  & 1.392  \\
\hline
&\bf{Average} & \textbf{0.620 } & 0.456  & 0.410  & 0.437 & 0.378  & 4.474 & 0.730 & 1.084 & 0.025 & 0.540 \\

\hline
\bf{Biology} &\bf{DM} & 0.221  & \bf{0.322}  & 0.274  & 0.219  & 0.035  & 33.535  & 1.151  & 0.007  & 0.015  & 1.394  \\
&\bf{HS} & 0.206  & 0.200  & \bf{0.213}  & 0.025  & 0.020  & 3.823  & 0.462  & 0.003  & 0.071  & 1.048  \\
&\bf{SC} & 0.120  & \bf{0.239}  & 0.238  & 0.036  & 0.034  & 11.119  & 0.950  & 0.002  & 0.034  & 0.979  \\
\hline
&\bf{Average} & 0.182 & \textbf{0.254 } & 0.242 & 0.093 & 0.030 & 16.159 & 0.854 & 0.004 & 0.040 & 1.140 \\
\hline
\end{tabular}}
\end{table*}

\begin{table*}[htbp]
\renewcommand{\arraystretch}{1.3}
\caption{Comparison on community size and conductance with baselines on real-world networks. }
\label{table: Realcomparison2}
\centering
\scalebox{0.8}{
\begin{tabular}{ l l | c c c c r | c c c c r }
\hline
 &    \multirow{2}{*}{\bf{Datasets}}      & \multicolumn{5}{c}{\bf{Size}} \vline& \multicolumn{5}{c}{\bf{Conductance}}   \\
 &  &\bf{LLi} &\bf{LEMON} &\bf{PGDc-d} &\bf{HK} & \bf{PR} &\bf{LLi} &\bf{LEMON} &\bf{PGDc-d} &\bf{HK} & \bf{PR}   \\
 \hline
\bf{SNAP} &\bf{Amazon} & 8     & 13    & 3     & 48    & 4485  & 0.300  & 0.850  & 0.605  & 0.042  & 0.030  \\
&\bf{DBLP}  & 10    & 20    & 4     & 87    & 9077  & 0.396  & 0.559  & 0.751  & 0.110  & 0.114  \\
&\bf{LiveJ} & 35    & 26    & 3     & 119   & 512   & 0.305  & 0.637  & 0.817  & 0.083  & 0.086  \\
&\bf{YouTube} & 588   & 18    & 12    & 122   & 13840  & 0.736  & 0.748  & 0.723  & 0.175  & 0.302  \\
&\bf{Orkut} & 1733  & 9     & 3     & 341   & 1648  & 0.791  & 0.934  & 0.976  & 0.513  & 0.546  \\
\hline
&\bf{Average} & 475 & 17 & 5 & 143 & 5912 & 0.506  & 0.746 & 0.774 & 0.185 & 0.216 \\
\hline
\bf{Biology} &\bf{DM} & 992   & 19    & 3     & 606   & 15120  & 0.772  & 0.909  & 0.980  & 0.571  & 0.181  \\
&\bf{HS} & 465   & 13    & 5     & 4360  & 10144  & 0.758  & 0.836  & 0.966  & 0.229  & 0.080  \\
&\bf{SC} & 1623  & 14    & 3     & 2604  & 2673  & 0.696  & 0.903  & 0.990  & 0.322  & 0.343  \\
\hline
&\bf{Average} & 1027 & 15 & 4 & 2523 & 9312 & 0.742 & 0.883 & 0.979 & 0.374 & 0.201 \\
\hline
\end{tabular} }
\end{table*}


\begin{table*}[htbp]
\renewcommand{\arraystretch}{1.3}
\caption{Comparison on detection accuracy and running time with baselines on LFR networks.}
\label{table: LFRcomparison}
\centering
\scalebox{0.8}{
\begin{tabular}{ l l | c c c c r | c c c c r }
\hline
&   \multirow{2}{*}{\bf{Datasets}}     & \multicolumn{5}{c}{\bf{$F_1$ score}} \vline& \multicolumn{5}{c}{\bf{Time (s)}}   \\
&  &\bf{LLi} &\bf{LEMON} &\bf{PGDc-d} &\bf{HK} & \bf{PR} &\bf{LLi} &\bf{LEMON} &\bf{PGDc-d} &\bf{HK} & \bf{PR}   \\
\hline
\bf{LFR} &\bf{s\_0.1\_2}   &    \bf{0.673}  & 0.458  & 0.261  & 0.051  & 0.025  & 0.918  & 1.282  & 0.207  & 0.025  & 0.692  \\
&\bf{s\_0.1\_3} &    \bf{0.621}  & 0.452  & 0.261  & 0.028  & 0.025  & 0.969  & 1.031  & 0.201  & 0.025  & 0.693  \\
&\bf{s\_0.1\_4} &    \bf{0.583}  & 0.445  & 0.263  & 0.027  & 0.024  & 1.022  & 1.470  & 0.201  & 0.025  & 0.690  \\
&\bf{s\_0.1\_5} &    \bf{0.555}  & 0.436  & 0.258  & 0.044  & 0.024  & 1.123  & 1.814  & 0.202  & 0.026  & 0.675  \\
&\bf{s\_0.1\_6} &    \bf{0.548}  & 0.435  & 0.255  & 0.026  & 0.024  & 1.269  & 1.666  & 0.207  & 0.025  & 0.682  \\
&\bf{s\_0.1\_7} &    \bf{0.536}  & 0.441  & 0.260  & 0.041  & 0.026  & 1.342  & 1.703  & 0.197  & 0.025  & 0.677  \\
&\bf{s\_0.1\_8} &    \bf{0.503}  & 0.429  & 0.252  & 0.029  & 0.024  & 1.371  & 1.502  & 0.205  & 0.025  & 0.678  \\
\hline
&\bf{Average} &    \textbf{0.574 } & 0.442 & 0.259 & 0.035 & 0.025 & 1.145 & 1.495 & 0.203 & 0.025 & 0.684 \\
\hline
&\bf{s\_0.5\_2} &    \bf{0.496}  & 0.389  & 0.266  & 0.019  & 0.020  & 1.232  & 1.222  & 0.209  & 0.028  & 0.680  \\
&\bf{s\_0.5\_3} &    \bf{0.388}  & 0.310  & 0.253  & 0.020  & 0.022  & 1.626  & 1.160  & 0.212  & 0.026  & 0.686  \\
&\bf{s\_0.5\_4} &    \bf{0.336}  & 0.295  & 0.257  & 0.019  & 0.020  & 1.708  & 1.032  & 0.232  & 0.028  & 0.683  \\
&\bf{s\_0.5\_5} &    \bf{0.313}  & 0.281  & 0.261  & 0.018  & 0.019  & 1.949  & 1.052  & 0.217  & 0.028  & 0.687  \\
&\bf{s\_0.5\_6} &    \bf{0.284}  & 0.268  & 0.257  & 0.017  & 0.018  & 1.899  & 0.896  & 0.214  & 0.027  & 0.690  \\
&\bf{s\_0.5\_7} &    \bf{0.260}  & 0.249  & 0.257  & 0.017  & 0.018  & 1.914  & 0.928  & 0.216  & 0.027  & 0.694  \\
&\bf{s\_0.5\_8} &    0.260  & 0.247  & \bf{0.261}  & 0.016  & 0.017  & 2.037  & 0.939  & 0.216  & 0.025  & 0.696  \\
\hline
&\bf{Average} &     \textbf{0.334 } & 0.291 & 0.259 & 0.018 & 0.019  & 1.766  & 1.033  & 0.217  & 0.027  & 0.688 \\
\hline
&\bf{b\_0.1\_2} &    \bf{0.461}  & 0.279  & 0.140  & 0.094  & 0.040  & 1.089  & 0.924  & 0.189  & 0.026  & 0.737  \\
&\bf{b\_0.1\_3} &    \bf{0.417}  & 0.260  & 0.135  & 0.064  & 0.043  & 1.253  & 0.779  & 0.189  & 0.025  & 0.720  \\
&\bf{b\_0.1\_4} &    \bf{0.392}  & 0.268  & 0.141  & 0.092  & 0.039  & 1.424  & 0.931  & 0.195  & 0.025  & 0.717  \\
&\bf{b\_0.1\_5} &    \bf{0.361}  & 0.255  & 0.134  & 0.073  & 0.042  & 1.595  & 0.963  & 0.194  & 0.025  & 0.713  \\
&\bf{b\_0.1\_6} &    \bf{0.374}  & 0.272  & 0.144  & 0.060  & 0.038  & 1.640  & 0.923  & 0.194  & 0.025  & 0.718  \\
&\bf{b\_0.1\_7} &    \bf{0.353}  & 0.267  & 0.143  & 0.047  & 0.039  & 1.608  & 1.019  & 0.207  & 0.025  & 0.714  \\
&\bf{b\_0.1\_8} &    \bf{0.349}  & 0.269  & 0.143  & 0.051  & 0.038  & 1.597  & 0.902  & 0.209  & 0.025  & 0.721  \\
\hline
&\bf{Average} &    \textbf{0.387 } & 0.267 & 0.140 & 0.069 & 0.040 & 1.458  & 0.920  & 0.197  & 0.025  & 0.720 \\
\hline
&\bf{b\_0.5\_2} &    \bf{0.309}  & 0.207  & 0.135  & 0.040  & 0.042  & 1.579  & 0.793  & 0.210  & 0.025  & 0.718  \\
&\bf{b\_0.5\_3} &    \bf{0.236}  & 0.197  & 0.135  & 0.039  & 0.041  & 1.858  & 0.896  & 0.205  & 0.025  & 0.706  \\
&\bf{b\_0.5\_4} &    \bf{0.198}  & 0.176  & 0.138  & 0.035  & 0.037  & 2.026  & 0.883  & 0.200  & 0.025  & 0.701  \\
&\bf{b\_0.5\_5} &    \bf{0.177}  & 0.154  & 0.135  & 0.033  & 0.036  & 1.818  & 0.949  & 0.205  & 0.025  & 0.689  \\
&\bf{b\_0.5\_6} &    \bf{0.170}  & 0.151  & 0.141  & 0.031  & 0.033  & 2.079  & 1.029  & 0.216  & 0.025  & 0.705  \\
&\bf{b\_0.5\_7} &    \bf{0.150}  & 0.144  & 0.134  & 0.031  & 0.033  & 2.123  & 1.142  & 0.200  & 0.025  & 0.704  \\
&\bf{b\_0.5\_8} &    \bf{0.149}  & 0.149  & 0.137  & 0.030  & 0.032  & 2.116  & 0.968  & 0.203  & 0.025  & 0.703  \\
\hline
&\bf{Average} &    \textbf{0.198 } & 0.168 & 0.136 & 0.034 & 0.036 & 1.943  & 0.951  & 0.206  & 0.025  & 0.704 \\
\hline
\end{tabular} }
\end{table*}

\begin{table*}[htbp]
\renewcommand{\arraystretch}{1.3}
\caption{Comparison on community size and conductance with baselines on LFR networks.}
\label{table: LFRcomparison2}
\centering
\scalebox{0.8}{
\begin{tabular}{ l l | c c c c r | c c c c r }
\hline
 &    \multirow{2}{*}{\bf{Datasets}}      & \multicolumn{5}{c}{\bf{Size}} \vline& \multicolumn{5}{c}{\bf{Conductance}}   \\
&  &\bf{LLi} &\bf{LEMON} &\bf{PGDc-d} &\bf{HK} & \bf{PR} &\bf{LLi} &\bf{LEMON} &\bf{PGDc-d} &\bf{HK} & \bf{PR}   \\
 \hline
\bf{LFR}&    \textbf{s\_0.1\_2} & 13    & 20    & 3     & 2279  & 2327  & 0.557  & 0.537  & 0.938  & 0.276  & 0.269  \\
Size:&    \textbf{s\_0.1\_3} & 12    & 17    & 3     & 2356  & 2317  &  0.598  & 0.563  & 0.945  & 0.289  & 0.281 \\
$[10,50]$&    \textbf{s\_0.1\_4} & 11    & 20    & 3     & 2354  & 2318  &  0.619  & 0.534  & 0.954  & 0.294  & 0.286 \\
&    \textbf{s\_0.1\_5} & 11    & 19    & 3     & 2363  & 2328  & 0.636  & 0.556  & 0.957  & 0.300  & 0.290  \\
Cond.:&    \textbf{s\_0.1\_6} & 34    & 20    & 3     & 2387  & 2302  & 0.627  & 0.548  & 0.954  & 0.295  & 0.294  \\
0.522&    \textbf{s\_0.1\_7} & 22    & 20    & 3     & 2349  & 2272  &  0.644  & 0.555  & 0.959  & 0.300  & 0.297 \\
&    \textbf{s\_0.1\_8} & 39    & 19    & 3     & 2371  & 2278  & 0.666  & 0.568  & 0.968  & 0.305  & 0.302  \\
     \hline
&    \textbf{Average} & 20 & 19 & 3 & 2351 & 2306 & 0.621 & 0.552 & 0.954 & 0.294 & 0.288 \\
     \hline
&    \textbf{s\_0.5\_2} & 49    & 14    & 3     & 2420  & 2313  &  0.741  & 0.643  & 0.956  & 0.333  & 0.324 \\
Size:&    \textbf{s\_0.5\_3} & 76    & 15    & 3     & 2456  & 2281  & 0.789  & 0.675  & 0.966  & 0.348  & 0.346  \\
$[10,50]$&    \textbf{s\_0.5\_4} & 121   & 15    & 3     & 2459  & 2258  & 0.801  & 0.695  & 0.972  & 0.359  & 0.357  \\
&    \textbf{s\_0.5\_5} & 129   & 16    & 3     & 2463  & 2246  & 0.814  & 0.699  & 0.978  & 0.362  & 0.363  \\
Cond.:&    \textbf{s\_0.5\_6} & 210   & 16    & 3     & 2384  & 2242  & 0.813  & 0.714  & 0.981  & 0.355  & 0.368  \\
0.746&    \textbf{s\_0.5\_7} & 213   & 17    & 3     & 2374  & 2231  & 0.817  & 0.715  & 0.986  & 0.358  & 0.371  \\
&    \textbf{s\_0.5\_8} & 220   & 17    & 3     & 2304  & 2230  & 0.812  & 0.721  & 0.985  & 0.368  & 0.374  \\
     \hline
&    \textbf{Average} & 145 & 16 & 3 & 2409 & 2257 & 0.798 & 0.695 & 0.975 & 0.355 & 0.358  \\
     \hline
&    \textbf{b\_0.1\_2} & 16    & 16    & 3     & 2258  & 2344  & 0.666  & 0.660  & 0.977  & 0.303  & 0.286  \\
Size:&    \textbf{b\_0.1\_3} & 15    & 14    & 3     & 2368  & 2340  & 0.688  & 0.686  & 0.975  & 0.307  & 0.295  \\
$[20,100]$&    \textbf{b\_0.1\_4} & 22    & 16    & 3     & 2267  & 2343  & 0.702  & 0.656  & 0.981  & 0.308  & 0.295  \\
&    \textbf{b\_0.1\_5} & 25    & 18    & 3     & 2354  & 2334  & 0.716  & 0.659  & 0.976  & 0.301  & 0.300  \\
Cond.:&    \textbf{b\_0.1\_6} & 50    & 17    & 3     & 2374  & 2323  & 0.698  & 0.647  & 0.974  & 0.296  & 0.301  \\
0.497&    \textbf{b\_0.1\_7} & 29    & 17    & 3     & 2417  & 2326  &  0.710  & 0.638  & 0.980  & 0.296  & 0.301 \\
&    \textbf{b\_0.1\_8} & 40    & 16    & 3     & 2374  & 2330  & 0.714  & 0.664  & 0.980  & 0.312  & 0.304  \\
     \hline
&    \textbf{Average} & 28  & 16  & 3  & 2344  & 2334  & 0.699 & 0.659 & 0.978 & 0.303 & 0.297  \\
     \hline
&    \textbf{b\_0.5\_2} & 33    & 14    & 3     & 2465  & 2299  &  0.792  & 0.711  & 0.982  & 0.343  & 0.336 \\
Size:&    \textbf{b\_0.5\_3} & 106   & 16    & 3     & 2430  & 2280  & 0.808  & 0.695  & 0.986  & 0.340  & 0.355  \\
$[20,100]$&    \textbf{b\_0.5\_4} & 237   & 16    & 3     & 2440  & 2252  & 0.813  & 0.711  & 0.990  & 0.353  & 0.362  \\
&    \textbf{b\_0.5\_5} & 184   & 18    & 3     & 2473  & 2242  & 0.815  & 0.692  & 0.989  & 0.373  & 0.365  \\
Cond.:&    \textbf{b\_0.5\_6} & 196   & 15    & 3     & 2425  & 2240  &  0.817  & 0.733  & 0.991  & 0.359  & 0.368 \\
0.733&    \textbf{b\_0.5\_7} & 247   & 19    & 3     & 2415  & 2224  & 0.810  & 0.696  & 0.994  & 0.363  & 0.369  \\
&    \textbf{b\_0.5\_8} & 268   & 17    & 3     & 2406  & 2221  & 0.804  & 0.720  & 0.994  & 0.363  & 0.370  \\
     \hline
&    \textbf{Average} & 182  & 16  & 3  & 2436  & 2251  & 0.808 & 0.708 & 0.989 & 0.356 & 0.361  \\
     \hline
\end{tabular} }
\end{table*}

\begin{figure}[!t]
	\centering
	\includegraphics[width=2.65in]{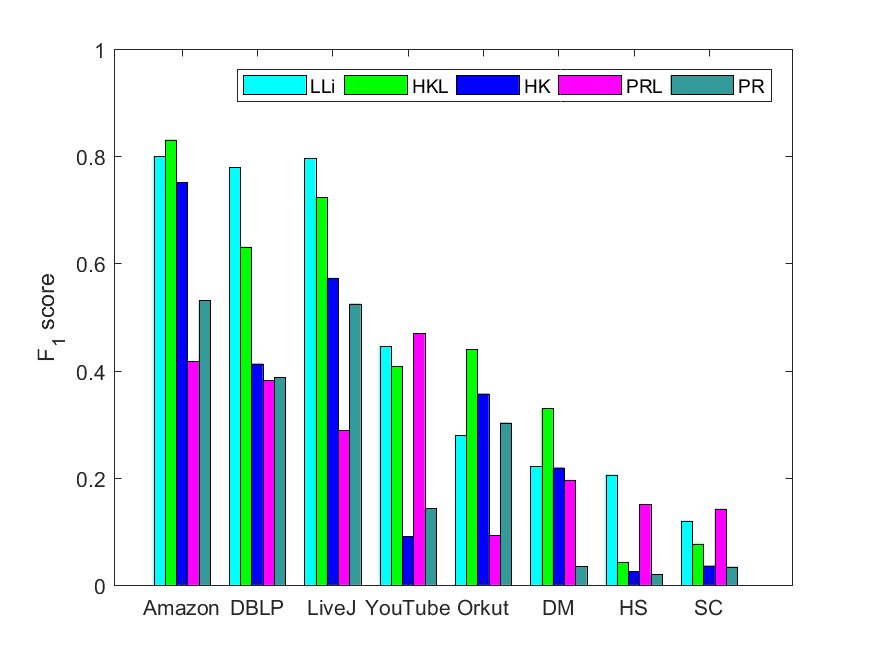}
	\vspace{-1em}
	\caption{Accuracy comparison with HKL and PRL on real-world networks (Community size truncated by local minimal conductance).}
	\label{Fig:realLocalCompare}
\end{figure}

\begin{figure}[!t]
	\vspace{-1em}
	\centering
	\subfigure[\bf{LFR\_s\_0.1}]{
		\includegraphics[width=2.65in]{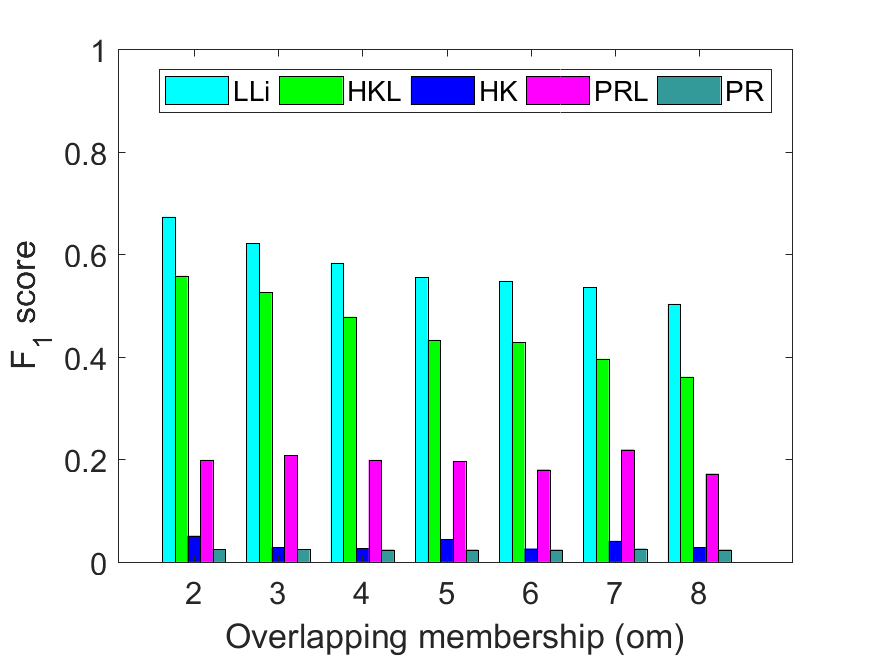}}
		\vspace{-1em}
	\subfigure[\bf{LFR\_s\_0.5}]{
		\includegraphics[width=2.65in]{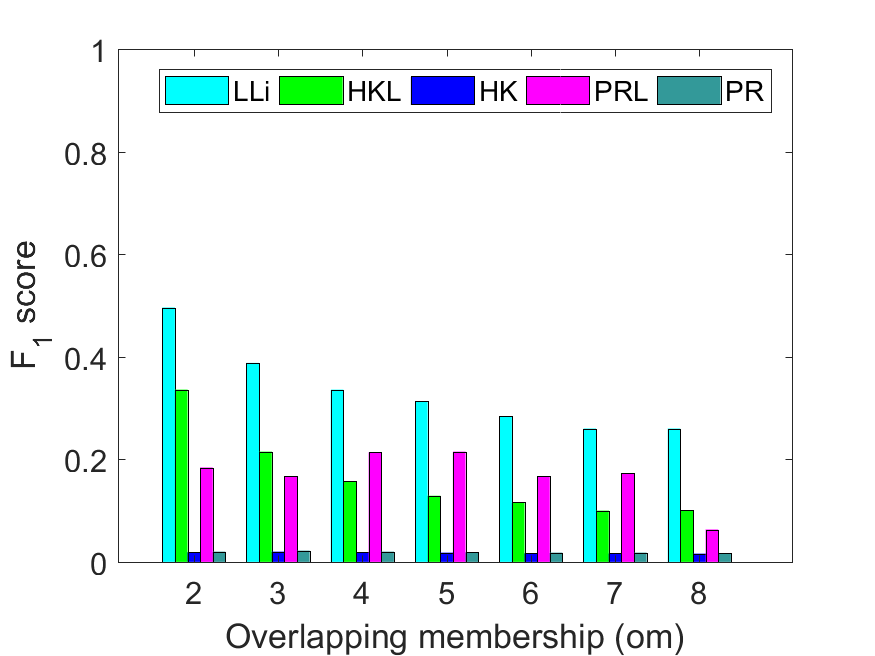}}
	\vspace{-1em}
	\subfigure[\bf{LFR\_b\_0.1}]{
		\includegraphics[width=2.65in]{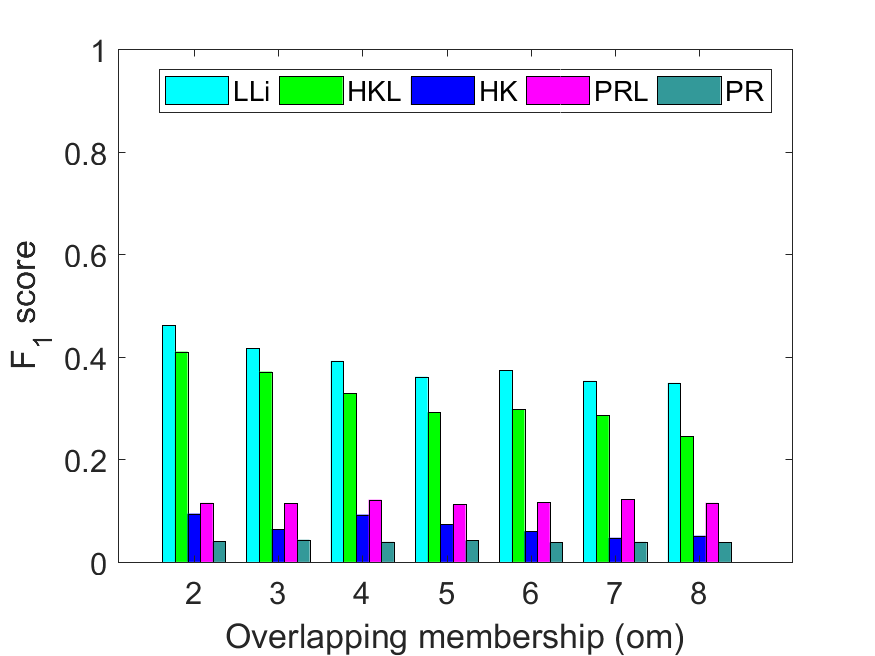}}
		\vspace{-1em}
	\subfigure[\bf{LFR\_b\_0.5}]{
		\includegraphics[width=2.65in]{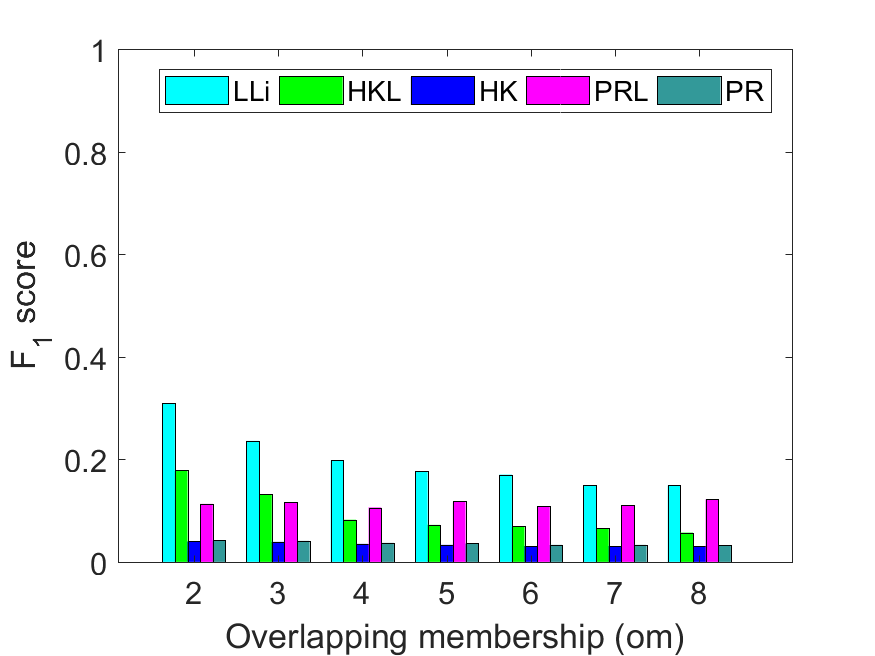}}
\vspace{1em}
	\caption{Accuracy comparison with HKL and PRL on LFR datasets (Community size truncated by local minimal conductance).}
	\label{Fig:LFRLocalCompare}
\end{figure}

\subsubsection{Comparison on LFR Datasets}
For each of the synthetic LFR datasets (Properties described in subsection \ref{subsec:LFR}. Recall that the ground truth communities are in size [10,50] for \textbf{s} and [20,100] for \textbf{b}), Table \ref{table: LFRcomparison} shows the average detection accuracy (The best value appears in bold in each row) and the average running time, Table \ref{table: LFRcomparison2} further shows the average community size and the average conductance of the detected communities.

For the four groups with a total of 28 LFR datasets, LLi clearly outperforms the state-of-the-art baselines, LEMON, PGDc-d, HK and PR, as evaluated by $F_1$ score in Table \ref{table: LFRcomparison}. On further analysis of each group, shown in Table \ref{table: LFRcomparison2}, we see that:

\begin{itemize}
	\item \textbf{LFR\_s\_0.1}. The size of the ground truth is in [10, 50], and the average conductance of the ground truth is 0.522. The average community size of LLi over the seven LFR\_s\_0.1 datasets is 20, while the baselines detect either much smaller communities in size 3 (PGDc-d) or much larger communities in size 2300 (HK and PR). Due to the difference on the detected community size, LLi has a stable conductance of around 0.62 while the baselines have a high conductance of around 0.95 (PGDc-d) or a low conductance of around 0.29 (HK and PR).
	\item \textbf{LFR\_s\_0.5}. The size of the ground truth is in [10, 50], and the average conductance of the ground truth is 0.746. The average community size of LLi is considerably larger in 145, and the average conductance of 0.798 is close to that of the ground truth. LEMON and PGDc-d achieve smaller communities with higher conductance. HK and PR find much larger communities, leading to apparently smaller conductance.
	\item \textbf{LFR\_b\_0.1}. The size of the ground truth is in [20, 100], and the average conductance of the ground truth is 0.497. The average community size of LLi is suitable in 28. The baselines detect much smaller communities of size around 3 (PGDc-d) or much larger communities of size around 2300 (HK and PR). Due to the difference on the detected community size, LLi has a stable conductance of around 0.70, while the baselines have either much higher conductance of around 0.98 (PGDc-d) or much lower conductance of around 0.30 (HK and PR).
	\item \textbf{LFR\_b\_0.5}. The size of the ground truth is in [20, 100], and the average conductance of the ground truth is 0.733.  The average community size of LLi is considerably larger in 182, and the average conductance of 0.808 is close to that of the ground truth. LEMON and PGDc-d achieve smaller communities with higher conductance. HK and PR find much larger communities, leading to apparently smaller conductance.
\end{itemize}

It is reasonable that the detection accuracy decays on graphs where there exist more overlapings indicated by higher $om$ and $on$. LLi is adaptive to find suitable size of communities for different configurations, and  substantially outperforms the baselines. LEMON and PGDc-d tend to find much-smaller-size communities with higher-conductance, while HK and PR tend to find much-larger-size communities with lower-conductance and their detection accuracy is very low in less than 0.1 on average on each of the four groups of datasets.

For the running time, as shown in Table \ref{table: LFRcomparison}, LLi is fast in 1 to 2 seconds while the four baselines, LEMON, PGDc-d, HK and PR, are slightly faster in less than 1.5 seconds. This may due to the implementation difference on programming language.

\subsubsection{More Comparison on the variants of HK and PR}
In order to evaluate the effectiveness of using the local minimal conductance for determining the community boundary, we compare LLi with the modification versions of HK and PR, based on local minimal conductance rather than global minimal conductance, denoted by HKL and PRL.

Fig. \ref{Fig:realLocalCompare} and Fig. \ref{Fig:LFRLocalCompare} illustrate the average detection accuracy on real-world networks and LFR datasets, respectively. For real-world networks, Fig. \ref{Fig:realLocalCompare} shows that HKL yields higher accuracy on each dataset compared with HK, while PRL achieves lower accuracy compared with PR on Amazon, DBLP, LiveJ and Orkut. On average, LLi still outperforms HKL and PRL. For LFR datasets, Fig. \ref{Fig:LFRLocalCompare} shows that LLi outperforms HKL and PRL on each LFR network. HKL and PRL yield considerably higher accuracy on each LFR dataset compared with HK and PR, respectively. These experiments show that it is useful for improving performance based on local minimal conductance truncation.

\subsubsection{Summary on the Proposed Algorithm}
For all comparisons of LLi, LOSP with light lazy, with four baselines, we see that LLi is adaptive for different configurations of the synthetic LFR datasets. LLi could find small communities with dozens of members for \textbf{s}: [10,50] and medium communities with hundreds of members for \textbf{b}: [20,100].

For well-defined community structure (low conductance, less overlapping, reasonably small), LLi has a high detection accuracy (\textbf{LFR\_s\_0.1}, \textbf{LFR\_b\_0.1}). When the overlapping membership ($om$) or the number of overlapping nodes ($on$) increases, the community structure becomes more mixed, and the detection accuracy decays. Nevertheless, LLi always yields the best accuracy for different parameter settings. By comparison, LEMON always finds communities of size around 20, PGDc-d always finds communities of size less than 10, and HK as well as PR always find large communities of size around 2300. Thus, they are not very scalable to networks with diverse community structure.

Comparisons on real-world datasets also show that LLi is very good for finding small communities with reasonably low conductance (Amazon, DBLP and LiveJ). If the community structure is not very clear, for example, with high conductance of around 0.90 on the Biology datasets, LLi tends to find larger communities using longer time due to the sweep search, and it does not yield the best accuracy.

Therefore, our method can be used for large-scale real-world complex networks, especially when the communities are in reasonable size of no greater than 500, and the network has a reasonable clear community structure.

\section{Conclusion}
This paper systematically explores a family of local spectral methods (LOSP) for finding members of a local community from a few randomly selected seed members. Based on a Krylov subspace approximation, we define ``approximate eigenvectors'' for a subgraph including a neighborhood around the seeds, and describe how to extract a community from these approximate eigenvectors.  By using different seed sets that generate different
subspaces, our method is capable of finding overlapping communities. Variants of LOSP are introduced and evaluated. Four types of random walks with different diffusion speeds are studied, regular random walk and inverse random walk are compared, and analysis on the link between Krylov subspace and eigenspace is provided.  For this semi-supervised learning task, LOSP outperforms prior state-of-the-art local community detection methods in social and biological networks as well as synthetic LFR datasets.

\appendix
\section{The proof of theorem 4.1}
\label{sec:Bounding}
Before given the proof of Theorem 4.1, we first give the following theorem.

\begin{theorem}(Courant-Fischer Formula)\label{CFF}
Let $\mathbf{H}$ be an $n\times n$ symmetric matrix with eigenvalues $\lambda_1^{(\mathbf{H})} \leq \lambda_2^{(\mathbf{H})} \leq \ldots \leq \lambda_n^{(\mathbf{H})}$  and corresponding eigenvectors $\mathbf{v}_1, \mathbf{v}_2, \ldots, \mathbf{v}_n$. Then

\begin{equation}
\begin{aligned}
\lambda_1^{(\mathbf{H})} & = \min_{\Vert\bf{x}\Vert_2 = 1} \bf{x^T}\bf{H}\bf{x} = \min_{\bf{x}\neq \bf{0}} \frac{\bf{x^T}\bf{H}\bf{x}}{\bf{x^T}\bf{x}}, \notag
\\
\lambda_2^{(\mathbf{H})} & = \min_{\Vert\bf{x}\Vert_2 = 1 \atop \bf{x}\perp \mathbf{v}_1} \bf{x^T}\bf{H}\bf{x} = \min_{\bf{x}\neq \bf{0}\atop \bf{x}\perp \mathbf{v}_1} \frac{\bf{x^T}\bf{H}\bf{x}}{\bf{x^T}\bf{x}},  \notag
\\
\lambda_n^{(\mathbf{H})} & = \max_{\Vert\bf{x}\Vert_2 = 1} \bf{x^T}\bf{H}\bf{x} = \max_{\bf{x}\neq \bf{0}} \frac{\bf{x^T}\bf{H}\bf{x}}{\bf{x^T}\bf{x}}. \notag
\end{aligned}
\end{equation}
\end{theorem}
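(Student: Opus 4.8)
The plan is to reduce everything to the spectral decomposition of $\mathbf{H}$ together with the observation that a Rayleigh quotient is a convex combination of the eigenvalues. Since $\mathbf{H}$ is real symmetric, the spectral theorem supplies an orthonormal basis $\mathbf{v}_1,\ldots,\mathbf{v}_n$ of eigenvectors with $\mathbf{H}\mathbf{v}_i = \lambda_i^{(\mathbf{H})}\mathbf{v}_i$. For an arbitrary $\mathbf{x}\neq\mathbf{0}$ I would write $\mathbf{x} = \sum_{i=1}^n c_i\mathbf{v}_i$ with $c_i = \mathbf{v}_i^\mathbf{T}\mathbf{x}$; then $\mathbf{x}^\mathbf{T}\mathbf{x} = \sum_i c_i^2$ and $\mathbf{x}^\mathbf{T}\mathbf{H}\mathbf{x} = \sum_i \lambda_i^{(\mathbf{H})} c_i^2$, so the Rayleigh quotient $\mathbf{x}^\mathbf{T}\mathbf{H}\mathbf{x}/\mathbf{x}^\mathbf{T}\mathbf{x}$ is a weighted average of the $\lambda_i^{(\mathbf{H})}$ with nonnegative weights $c_i^2/\sum_j c_j^2$ that sum to one. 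I would also remark once that the Rayleigh quotient is invariant under rescaling $\mathbf{x}$, so minimizing (or maximizing) over $\mathbf{x}\neq\mathbf{0}$ coincides with minimizing (or maximizing) $\mathbf{x}^\mathbf{T}\mathbf{H}\mathbf{x}$ over the unit sphere $\Vert\mathbf{x}\Vert_2 = 1$; this settles the equality between the two forms appearing in each line.

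For $\lambda_1^{(\mathbf{H})}$: since all weights are nonnegative and $\lambda_1^{(\mathbf{H})}$ is the smallest eigenvalue, $\sum_i \lambda_i^{(\mathbf{H})} c_i^2 \geq \lambda_1^{(\mathbf{H})}\sum_i c_i^2$, so the Rayleigh quotient is at least $\lambda_1^{(\mathbf{H})}$ for every $\mathbf{x}\neq\mathbf{0}$, and the choice $\mathbf{x} = \mathbf{v}_1$ attains the bound. For $\lambda_n^{(\mathbf{H})}$ the argument is the mirror image: $\sum_i \lambda_i^{(\mathbf{H})} c_i^2 \leq \lambda_n^{(\mathbf{H})}\sum_i c_i^2$ because $\lambda_n^{(\mathbf{H})}$ is the largest eigenvalue, with equality at $\mathbf{x} = \mathbf{v}_n$. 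For $\lambda_2^{(\mathbf{H})}$ I would impose the constraint $\mathbf{x}\perp\mathbf{v}_1$, which forces $c_1 = \mathbf{v}_1^\mathbf{T}\mathbf{x} = 0$; then $\mathbf{x}^\mathbf{T}\mathbf{H}\mathbf{x} = \sum_{i=2}^n \lambda_i^{(\mathbf{H})} c_i^2 \geq \lambda_2^{(\mathbf{H})}\sum_{i=2}^n c_i^2$ since $\lambda_2^{(\mathbf{H})}$ is the smallest among $\lambda_2^{(\mathbf{H})},\ldots,\lambda_n^{(\mathbf{H})}$, and equality holds at $\mathbf{x} = \mathbf{v}_2$, which is feasible because $\mathbf{v}_2\perp\mathbf{v}_1$.

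The only point requiring a little care — and the one I would flag as the main (mild) obstacle — is the treatment of repeated eigenvalues in the $\lambda_2$ case: the constraint is orthogonality to the \emph{particular} eigenvector $\mathbf{v}_1$ of the fixed orthonormal basis, not to the entire eigenspace of $\lambda_1^{(\mathbf{H})}$, so even when $\lambda_1^{(\mathbf{H})} = \lambda_2^{(\mathbf{H})}$ the argument goes through verbatim and the constrained minimum equals $\lambda_2^{(\mathbf{H})}$. Apart from that, the proof is routine; the essential ingredient is the existence of an orthonormal eigenbasis from the spectral theorem for symmetric matrices.
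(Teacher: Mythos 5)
Your proof is correct and complete: the spectral-theorem decomposition $\mathbf{x}=\sum_i c_i\mathbf{v}_i$, the view of the Rayleigh quotient as a convex combination of the eigenvalues, the scaling-invariance remark identifying the unit-sphere and quotient forms on each line, and the observation that the constraint $\mathbf{x}\perp\mathbf{v}_1$ forces $c_1=0$ all work, and your flag about the $\lambda_2^{(\mathbf{H})}$ case---that the constraint is orthogonality to the fixed basis vector $\mathbf{v}_1$ rather than to the whole $\lambda_1^{(\mathbf{H})}$-eigenspace, so repeated eigenvalues cause no difficulty---is exactly the right subtlety to address. Note, however, that the paper does not prove this theorem at all: it states the Courant--Fischer formula as a known auxiliary result and refers the reader to \cite{golub2012matrix} for the proof, so there is no in-paper argument to compare against; your argument is the standard textbook proof and would serve as a self-contained replacement for that citation.
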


We will not include the proof of the Courant-Fischer Formula here. The interested reader is referred to \cite{golub2012matrix}.

Let $\mathbf{L_{sym}}$ be the normalized graph Laplacian matrix of $G_s$ with eigenvalues $\lambda_1 \leq \lambda_2 \leq ... \leq \lambda_{n_s}$ and corresponding eigenvectors $\mathbf{q}_1, \mathbf{q}_2, \ldots, \mathbf{q}_{n_s}$. According to Proposition 3 of \cite{Spectral2007}, $\mathbf{L_{sym}}$ has $n_s$ non-negative eigenvalues.
As $\mathbf{L_{sym}(D_s^{\frac{1}{2}}e)} = \bf{0}$ where $\mathbf{e}$ is the vector of all ones, we have $\lambda_1=0$, $\mathbf{q}_1=\mathbf{\frac{D_s^{\frac{1}{2}}e}{\Vert D_s^{\frac{1}{2}}e \Vert_2}}$.

By Theorem \ref{CFF}, we have
\begin{equation}
\label{Eq:lamda}
\lambda_2  = \min_{\bf{x} \neq \bf{0} \atop \bf{x}\perp \mathbf{q}_1} \frac{\bf{x^T}\bf{L_{sym}}\bf{x}}{\bf{x^T}\bf{x}}
 = \min_{\bf{z} \neq \bf{0} \atop \bf{z}\perp \bf{D_se}} \frac{\bf{z^T}\bf{L}\bf{z}}{\bf{z^T}D_s\bf{z}}
 = \min_{\bf{z} \neq \bf{0} \atop \bf{z}\perp \bf{D_se}} \frac{\sum\limits_{i\thicksim j}(z_i-z_j)^2}{\sum_{i}d_i{z_i}^2},
\end{equation}
where $\mathbf{z} = \mathbf{D_s^{-\frac{1}{2}}x}$, $d_i$ is the degree of the $i$th node and $\sum\limits_{i\thicksim j}$ denotes the sum over all unordered
pairs \{$i,j$\} for which $i$ and $j$ are adjacent.
\begin{proof}[\bf{Proof of Theorem 4.1.}]
Let $\bf{z} = \bf{y} - \sigma \bf{e},$
where $\mathbf{y} \in \{0,1\}^{n_s\times 1}$ is a binary indicator vector representing community $C$ in graph $G_s$, $\mathbf{e}$ the vector of all ones, and
$\sigma = \frac{\text{vol}(C)}{\text{vol}(V_s)}$.

We can check that $\bf{z}\perp \bf{D_se}$:
\begin{equation}
\mathbf{z^TD_se}  = \mathbf{y^TD_se-\sigma e^TD_se}
= \text{vol}(C) - \frac{\text{vol}(C)}{\text{vol}(V_s)}\text{vol}(V_s) = 0. \notag
\end{equation}

We also know
\[
\bf{z^TLz} = \bf{(\bf{y} - \sigma \bf{e})^TL(\bf{y} - \sigma \bf{e})} = \bf{y^TLy}.
\]

It remains to compute
\begin{equation}
\begin{aligned}
\bf{z^TD_sz} & = \bf{(\bf{y} - \sigma \bf{e})^TD_s(\bf{y} - \sigma \bf{e})}  \\
           & = \mathbf{y^TD_sy} -2\sigma \mathbf{y^TD_se} + \sigma^2\mathbf{e^TD_se}  \\
           & = \text{vol}(C)-2\sigma\text{vol}(C)+\sigma^2\text{vol}(V_s) \\
           & = \frac{\text{vol}(C)\text{vol}(V_s-C)}{\text{vol}(V_s)}. \notag
\end{aligned}
\end{equation}
By Eq. (\ref{Eq:lamda}), we have
\begin{equation}
\label{Eq:lamda2}
\lambda_2 \leq \frac{\bf{z^TLz}}{\bf{z^TD_sz}} = \frac{\mathbf{y^TLy} \cdot \text{vol}(V_s)}{\text{vol}(C)\text{vol}(V_s-C)}.
\end{equation}
As the larger value of $\text{vol}(C)$ and $\text{vol}(V_s-C)$ is at least half of $\text{vol}(V_s)$,
\begin{equation}
\lambda_2 \leq 2\frac{\bf{y^TLy}}{\text{min}(\text{vol}(C),\text{vol}(V_s-C))} = 2\frac{\bf{y^TLy}}{\text{vol}(C)} = 2\frac{\bf{y^TLy}}{\bf{y^TD_sy}}  = 2\Phi(C). \notag
\end{equation}
Therefore,
\[
\frac{\lambda_2}{2} \leq \Phi(C).
\]
And it is obvious that
\begin{equation} \label{Eq:Lamda22}
\Phi(C) = \frac{\bf{y^TLy}}{\bf{y^TD_sy}} \leq 1.
\end{equation}
\end{proof}

%
%
%
%


\begin{acks}
This work was supported by National Natural Science Foundation of China (61772219, 61702473, 61572221).
\end{acks}

\bibliographystyle{ACM-Reference-Format}
\bibliography{LOSP}

\end{document}